\newtheorem{requirement}{}
\theoremstyle{plain}
\newtheorem{theorem}{Theorem}[section]
\newtheorem{lemma}[theorem]{Lemma}
\newtheorem{corollary}[theorem]{Corollary}
\theoremstyle{definition}
\newcommand{\gio}{G_\mathsf{I,O}}
\newcommand{\true}{\mathtt{True}}
\newcommand{\false}{\mathtt{False}}
\newcommand{\gi}{G_\mathsf{I}}
\newcommand{\pif}{\langle \Pi,f\rangle}
\newcommand{\msfi}{\mathsf{I}}
\newcommand{\msfo}{\mathsf{O}}
\begin{document}
\title{Twenty-Two New Approximate Proof Labeling Schemes \\ (Full Version)
}
\author{Yuval Emek\footnote{Technion - Israel Institute of Technology. yemek@technion.ac.il} \and Yuval Gil\footnote{Technion - Israel Institute of Technology. yuval.gil@campus.technion.ac.il}
}
\date{}
\maketitle
		
	\begin{abstract}
		Introduced by Korman, Kutten, and Peleg (Distributed Computing 2005), a
		\emph{proof labeling scheme (PLS)} is a system dedicated to verifying that a
		given configuration graph satisfies a certain property.
		It is composed of a centralized \emph{prover}, whose role is to generate a
		proof for yes-instances in the form of an assignment of labels to the nodes,
		and a distributed \emph{verifier}, whose role is to verify the validity of the
		proof by local means and accept it if and only if the property is satisfied.
		To overcome lower bounds on the label size of PLSs for certain graph
		properties, Censor-Hillel, Paz, and Perry (SIROCCO 2017) introduced the notion
		of an \emph{approximate proof labeling scheme (APLS)} that allows the verifier
		to accept also some no-instances as long as they are not ``too far'' from
		satisfying the property.
		
		The goal of the current paper is to advance our understanding of the power and
		limitations of APLSs.
		To this end, we formulate the notion of APLSs in terms of \emph{distributed
			graph optimization problems (OptDGPs)} and develop two generic methods for the
		design of APLSs.
		These methods are then applied to various classic OptDGPs, obtaining
		twenty-two new APLSs.
		An appealing characteristic of our APLSs is that they are all
		\emph{sequentially efficient} in the sense that both the prover and the
		verifier are required to run in (sequential) polynomial time.
		On the negative side, we establish ``combinatorial'' lower bounds on the label
		size for some of the aforementioned OptDGPs that demonstrate the optimality of
		our corresponding APLSs.
		For other OptDGPs, we establish conditional lower bounds that exploit the
		sequential efficiency of the verifier alone (under the assumption that
		$\textrm{NP} \neq \textrm{co-NP}$) or that of both the verifier and the prover
		(under the assumption that
		$\textrm{P} \neq \textrm{NP}$,
		with and without the unique games conjecture).
	\end{abstract}

\section{Introduction}
\label{section:introduction}

\subsection{Model}
\label{section:model}
Consider a connected undirected graph $G=(V,E)$ and denote $n=|V|$ and $m=|E|$. 
For a node $v \in V$, we stick to the convention that $N(v) = \{ u \mid (u, v) \in E  \}$ 
denotes the set of $v$'s \emph{neighbors} in $G$. 
An edge is said to be \emph{incident} on $v$ if it connects between $v$ and one of its neighbors. 

In the realm of \emph{distributed graph algorithms}, the nodes of graph $G =
(V, E)$ are associated with processing units that operate in a decentralized
fashion.
We assume that node $v\in V$ distinguishes between its incident edges by means
of \emph{port numbers}, i.e.,  a bijection between the set of edges incident
on $v$ and the integers in
$\{1,...,|N(v)|\}$.
Additional graph attributes, such as node ids, edge orientation, and edge and
node weights, are passed to the nodes by means of an \emph{input assignment}
$\mathsf{I} : V \rightarrow \{ 0, 1 \}^{*}$ that assigns to each node
$v \in V$,
a bit string $\mathsf{I}(v)$, referred to as $v$'s \emph{local input}, that
encodes the additional attributes of $v$ and its incident edges.
The nodes return their output by means of an \emph{output assignment}
$\mathsf{O} : V \rightarrow \{ 0, 1 \}^{*}$
that assigns to each node
$v \in V$,
a bit string $\mathsf{O}(v)$, referred to as $v$'s \emph{local output}.
We often denote
$\gio = \langle G,\mathsf{I, O}\rangle$ and $\gi=\langle G,\mathsf{I}\rangle$
and refer to these tuples as an \emph{input-output (IO) graph}  and an
\emph{input graph}, respectively.\footnote{Refer to Table \ref{abbreviations} for a full list of the abbreviations used in this paper.}

A \emph{distributed graph problem (DGP)} $\Pi$ is a collection of IO graphs
$G_\mathsf{I, O}$.
In the context of a DGP $\Pi$, an input graph $\gi$ is said to be \emph{legal} (and the graph $G$ and input assignment $\mathsf{I}$ are said to be \emph{co-legal}) if there exists an output assignment $\mathsf{O}$ such that $\gio\in \Pi$, in which case we say that $\mathsf{O}$ is a \emph{feasible solution} for $\gi$ (or simply for $G$ and $\mathsf{I}$).
Given a DGP $\Pi$, we may slightly abuse the notation and write $\gi\in \Pi$ to denote that $\gi$ is legal. 

A \emph{distributed graph minimization problem (MinDGP)} (resp.,
\emph{distributed graph maximization problem (MaxDGP)}) $\Psi$ is a pair
$\langle \Pi , f\rangle $,
where $\Pi$ is a DGP and
$f:\Pi \rightarrow \mathbb{Z}$
is a function, referred to as the \emph{objective function} of $\Psi$, that
maps each IO graph
$G_\mathsf{I, O} \in \Pi$
to an integer value
$f(G_\mathsf{I, O})$.\footnote{%
	 We assume for simplicity that the images of the objective functions used
	in the context of this paper, are integral.
	Lifting this assumption and allowing for real numerical values would
	complicate some of the arguments, but it does not affect the validity of our
	results.}
Given a co-legal graph $G$ and input assignment $\mathsf{I}$, define
\[
OPT_{\Psi}(G, \mathsf{I}) = \underset{\mathsf{O}:G_\mathsf{I,O}\in \Pi}{\inf} \{f(G_\mathsf{I, O})\}
\]
if $\Psi$ is a MinDGP; and 
\[
OPT_{\Psi}(G, \mathsf{I})= \underset{\mathsf{O}:G_\mathsf{I,O}\in \Pi}{\sup} \{f(G_\mathsf{I, O})\}
\]
if $\Psi$ is a MaxDGP.
We often use the general term \emph{distributed graph optimization problem
	(OptDGP)} to refer to MinDGPs as well as MaxDGPs.
Given a OptDGP $\Psi=\langle \Pi, f\rangle$ and co-legal graph $G$ and input assignment $\mathsf{I}$, the output assignment $\mathsf{O}$ is said to be an \emph{optimal solution} for $\gi$ (or simply for $G$ and $\mathsf{I}$) if $\mathsf{ O}$ is a feasible solution for $\gi$ and $f(G_\mathsf{I, O})=OPT_\Psi(G,\mathsf{I})$. 

Let us demonstrate our definitions through the example of the maximum weight matching problem in bipartite graphs, i.e., explaining how it fits into the framework of a MaxDGP
$\Psi = \langle \Pi, f \rangle$. 
Given a graph $G=(V,E)$ and an input assignment $\mathsf{I}$, the input graph $\gi$ is legal (with respect to $\Pi$) if $G$ is bipartite and $\mathsf{I}$ encodes an edge weight function
$w : E \rightarrow \mathbb{Z}$.
Formally, for every node $v \in V$, the local input assignment $\mathsf{I}(v)$ is set to be a vector, indexed by the port numbers of $v$, defined so that if edge
$e = (u, u') \in E$
corresponds to ports
$1 \leq i \leq |N(u)|$
and
$1 \leq i' \leq |N(u')|$
at nodes $u$ and $u'$, respectively, then both
the $i$-th entry in $\mathsf{I}(u)$
and
the $i'$-th entry in $\mathsf{I}(u')$
hold the value $w(e)$.
Given a legal input graph $\gi\in \Pi$, the output assignment $\mathsf{O}$ is a feasible solution for $\gi$ if $\mathsf{O}$ encodes a matching $\mu \subseteq E$ in $G$.
Formally, the local output assignment $\mathsf{O}(v)$ is set to the port
number corresponding to $e$ if there exists an edge $e \in \mu$ incident on
$v$; and to $\bot$ otherwise. The objective function $f$ of $\Psi$ is defined
so that for an IO graph $\gio\in\Pi$ with corresponding edge weight function
$w_\mathsf{I}:E\rightarrow\mathbb{Z}$ and matching $\mu_\mathsf{O}\subseteq
E$, the value of $f$ is set to
$f(\gio)=\sum_{e\in\mu_\mathsf{O}}w_\mathsf{I}(e)$. Following this notation, a
feasible solution $\mathsf{O}$ for co-legal $G$ and $\mathsf{I}$ is optimal if
and only if $\mu_\mathsf{O}$ is a maximum weight matching in $G$ with respect
to the edge weight function $w_\mathsf{I}$. 

While the formulation introduced in the current section is necessary for the
general definitions presented in Section \ref{section:pls} and the generic
methods developed in Section~\ref{section:methods}, in
Section~\ref{section:results}, when considering IO graphs in the context of
specific DGPs and OptDGPs, we often do not explicitly describe the input and
output assignments, but rather take a more natural high-level approach.
For example, in the context of the aforementioned maximum weight matching
problem in bipartite graphs, we may address the input edge weight function and
output matching directly without providing an explanation as to how they are
encoded in the input and output assignments, respectively.
The missing details would be clear from the context and could be easily
completed by the reader.

\subsubsection{Proof Labeling Schemes}
\label{section:pls}
In this section we present the notions of proof labeling schemes \cite{pls}
and approximate proof labeling schemes \cite{aplspaper} for OptDGPs and their
decision variants.
To unify the definitions of these notions, we start by introducing the notion
of gap proof labeling schemes based on the following definition.

A \emph{configuration graph} $G_S=\langle G,S\rangle$ is a pair consisting of a graph $G=(V,E)$ and a function $S:V\rightarrow\{0,1\}^*$ assigning a bit string $S(v)$ to each node $v\in V$. In particular, an input graph $\gi$ is a configuration graph, where $S(v)=\msfi(v)$, and an IO graph $\gio$ is a configuration graph, where $S(v)=\msfi(v)\cdot\msfo(v)$. 

Fix some universe $\mathcal{U}$ of configuration graphs.
A \emph{gap proof labeling scheme (GPLS)} is a mechanism designed to
distinguish the configuration graphs
in a \emph{yes-family}
$\mathcal{F}_{Y} \subset \mathcal{U}$
from the configuration graphs
in a \emph{no-family}
$\mathcal{F}_{N} \subset \mathcal{U}$,
where
$\mathcal{F}_{Y} \cap \mathcal{F}_{N} = \emptyset$.
This is done by means of a (centralized) \emph{prover} and a (distributed)
\emph{verifier} that play the following roles:
Given a configuration graph $G_S\in \mathcal{U}$, if $G_S\in \mathcal{F}_Y$,
then the prover assigns a bit string $L(v)$, called the \emph{label} of $v$,
to each node $v\in V$.
Let $L^N(v)$ be the vector of labels assigned to $v$'s neighbors. The verifier
at node
$v \in V$
is provided with the $3$-tuple
$\langle S(v), L(v), L^{N}(v) \rangle$
and  returns a Boolean value $\varphi(v)$. 

We say that the verifier \emph{accepts} $G_S$ if
$\varphi(v) = \true$
for all nodes
$v \in V$;
and that the verifier \emph{rejects} $G_S$ if
$\varphi(v) = \false$
for at least one node
$v \in V$.
The GPLS is said to be \emph{correct} if the following requirements hold for
every configuration graph
$G_S \in \mathcal{U}$:
\begin{requirement}\label{gpls-r1}
	If $G_S\in \mathcal{F}_Y$, then the prover produces a label assignment $L : V \rightarrow \{ 0, 1 \}^{*}$ such that the verifier accepts $G_S$.
\end{requirement}
\begin{requirement}\label{gpls-r2}
	If $G_S\in \mathcal{F}_N$, then for any label assignment $L : V \rightarrow \{ 0, 1 \}^{*}$, the verifier rejects $G_S$.
\end{requirement}
We emphasize that no requirements are made for configuration graphs
$G_S \in \mathcal{U} \setminus (\mathcal{F}_Y \cup \mathcal{F}_N)$;
in particular, the verifier may either accept or reject these configuration
graphs (the same holds for configuration graphs that do not belong to the
universe $\mathcal{U}$).

The performance of a GPLS is measured by means of its \emph{proof size}
defined to be the maximum length of a label $L(v)$ assigned by the prover to
the nodes $v \in V$ assuming that 
$G_S\in  \mathcal{F}_Y$.
We say that GPLS admits a \emph{sequentially efficient prover} if for any
configuration graph
$G_S \in {F}_Y$,
the sequential runtime of the prover is polynomial in the number of bits used
to encode $G_S$;
and that it admits a \emph{sequentially efficient verifier} if the sequential
runtime of the verifier in node
$v \in V$
is polynomial in $|S(v)|$, $|L(v)|$, and
$\sum_{u \in N(v)} |L(u)|$.
The GPLS is called \emph{sequentially efficient} if both its prover and
verifier are sequentially efficient.

\paragraph*{Proof Labeling Schemes for OptDGPs.}
Consider some OptDGP $\Psi=\pif$ and let $\mathcal{U} =\{\gio \mid \gi \in \Pi \}$. A \emph{proof labeling scheme (PLS)} for $\Psi$ is defined as a GPLS over $\mathcal{U}$ by setting the yes-family to be \[\mathcal{F}_Y=\{ \gio\in\Pi\mid f(\gio)=OPT_\Psi(G,\msfi)\}\] and the no-family to be $\mathcal{F}_N=\mathcal{U}\setminus \mathcal{F}_Y$. In other words, a PLS for $\Psi$ determines for a given IO graph $\gio\in \mathcal{U}$ whether the output assignment $\mathsf{O}:V \rightarrow \{ 0, 1 \}^{*}$ is an optimal solution (which means in particular that it is a feasible solution) for the co-legal graph $G=(V,E)$ and input assignment $\mathsf{I}:V \rightarrow \{ 0, 1 \}^{*}$.

In the realm of OptDGPs, it is natural to relax the definition of a PLS so that it may also accept feasible solutions that only approximate the optimal ones.
Specifically, given an approximation parameter
$\alpha \geq 1$,
an \emph{$\alpha$-approximate proof labeling scheme ($\alpha$-APLS)} for a OptDGP
$\Psi = \langle \Pi, f \rangle$
is defined in the same way as a PLS for $\Psi$ with the sole difference that the no-family is defined by setting
\[
\mathcal{F}_{N}
\, = \,
\begin{cases}
\mathcal{U} \setminus \left\{
\gio \in \Pi \mid f(\gio) \leq \alpha \cdot OPT_{\Psi}(G, \msfi)
\right\}
\, , &
\text{if $\Psi$ is a MinDGP} \\
\mathcal{U} \setminus \left\{
\gio \in \Pi \mid f(\gio) \geq OPT_{\Psi}(G, \msfi) / \alpha
\right\}
\, , &
\text{if $\Psi$ is a MaxDGP}
\end{cases} \, .
\]

\sloppy
\paragraph*{Decision Proof Labeling Schemes for OptDGPs.}
Consider some MinDGP (resp., MaxDGP) $\Psi=\pif$ and let
$\mathcal{U}=\{\gi\mid\gi\in \Pi\}$.
A \emph{decision proof labeling scheme (DPLS)} for $\Psi$ and a parameter $k\in \mathbb{Z}$ is defined as a GPLS over $\mathcal{U}$ by setting the yes-family to be \[
\mathcal{F}_{Y}
\, = \,
\begin{cases}
\left\{
\gi \in \Pi \mid OPT_{\Psi}(G,\msfi) \geq k
\right\} \, , &
\text{if $\Psi$ is a MinDGP} \\
\left\{
\gi \in \Pi \mid OPT_{\Psi}(G, \msfi) \leq k
\right\} \, , &
\text{if $\Psi$ is a MaxDGP}
\end{cases}
\]
and the no-family to be $\mathcal{F}_N=\mathcal{U}\setminus \mathcal{F}_Y$.
In other words, given an input graph $\gi\in\Pi$, a DPLS for $\Psi$ and $k$
decides if $f(\gio)\geq k$ (resp., $f(\gio)\leq k$) for every feasible output
assignment
$\msfo : V \rightarrow \{ 0, 1\}^{*}$.
Notice that while PLSs address the task of verifying the optimality of a given
output assignment $\msfo$, that is, verifying that no output assignment admits
an objective value smaller (resp., larger) than $f(\gio)$, in DPLSs, the
output assignment $\msfo$ is not specified and the task is to verify that no
output assignment admits an objective value smaller (resp., larger) than the
parameter $k$, provided as part of the DPLS task.
\par\fussy

Similarly to PLSs, the definition of DPLS admits a natural relaxation.
Given an approximation parameter $\alpha \geq 1$, an \emph{$\alpha$-approximate decision proof labeling scheme ($\alpha$-ADPLS)} for a OptDGP $\Psi=\pif$ and a parameter $k\in \mathbb{Z}$ is defined in the same way as a DPLS for $\Psi$ and $k$ with the sole difference that the no-family is defined by setting 
\[
\mathcal{F}_{N}
\, = \,
\begin{cases}
\mathcal{U} \setminus \left\{
\gi\in \Pi\mid OPT_\Psi(G,\mathsf{I})\geq k/\alpha
\right\}
\, , &
\text{if $\Psi$ is a MinDGP} \\
\mathcal{U} \setminus \left\{
\gi\in \Pi\mid OPT_\Psi(G,\mathsf{I})\leq  \alpha\cdot k
\right\}
\, , &
\text{if $\Psi$ is a MaxDGP}
\end{cases} \, .
\]
We often refer to an $\alpha$-ADPLS without explicitly mentioning its
associated parameter $k$;
this should be interpreted with a universal quantifier over all parameters
$k \in \mathbb{Z}$.
\subsection{Related Work and Discussion}
\label{section:related-work}
Distributed verification is the task of locally verifying a global property of
a given configuration graph by means of a centralized prover and a distributed
verifier.
Various models for distributed verification have been introduced in the
literature including the PLS model \cite{pls} as defined in
Section~\ref{section:pls},
the \emph{locally checkable proofs (LCP)} model \cite{lcp},
and
the distributed complexity class \emph{non-deterministic local decision (NLD)}
\cite{nld1, nld2}.
Refer to \cite{survey} for a comprehensive survey on the topic of distributed
verification.

The current paper focuses on the PLS (and DPLS) model.
This model was introduced by Korman, Kutten, and Peleg in \cite{pls} and has
been extensively studied since then, see, e.g.,
\cite{mst, blin-2014, baruch-2016, error-sensitive-pls-Fraigniaud-2017,
	patt-shamir-perry-2017, redundancy-Feuilloley-2018,
	feuilloley2019introduction}.
A specific family of tasks that attracted a lot of attention in this regard is
that of designing PLSs for classic optimization problems.
Papers on this topic include \cite{mst}, where a PLS for minimum spanning tree
is shown to have a proof size of
$O (\log n \log W)$,
where $W$ is the maximum weight, and \cite{lcp}, where a PLS for maximum
weight matching in bipartite graphs is shown to have a proof size of
$O (\log W)$.

In parallel, numerous researchers focused on establishing impossibility
results for PLSs and DPLSs, usually derived from non-deterministic
communication complexity lower bounds
\cite{kushilevitz-communication-complexity}.
Such results are provided, e.g., in \cite{lower_bounds}, where a proof size of
$\tilde{\Omega}(n^2)$
is shown to be required for many classic optimization problems, and in
\cite{lcp}, where an
$\Omega(n^2/\log n)$
lower bound is established on the proof size of DPLSs for the problem of
deciding if the chromatic number is larger than $3$.
For the minimum spanning tree problem, the authors of \cite{mst} proved
that their
$O (\log n \log W)$
upper bound on the proof size is asymptotically optimal, relying on direct
combinatorial arguments.

The lower bounds on the proof size of PLSs (and DPLSs) for some optimization
problems have motivated the authors of \cite{aplspaper} to introduce the APLS
(and ADPLS) notion as a natural relaxation thereof.
This motivation is demonstrated by the task of verifying that the unweighted
diameter of a given graph is at most $k$:
As shown in \cite{aplspaper}, the diameter task admits a large gap between the
required proof size of a DPLS, shown to be
$\Omega(n / k)$,
and the proof sizes of
$(3 / 2)$-ADPLS
and $2$-ADPLS shown to be
$O(\sqrt{n} \log^2 n)$
and
$O(\log n)$,
respectively.
To the best of our knowledge, APLSs (and ADPLSs) have not been studied
otherwise until the current paper.

One of the generic methods developed in the current paper for the design of
APLSs for an abstract OptDGP relies on a \emph{primal dual} approach applied
to the linear program that encodes the OptDGP, after relaxing its integrality
constraints (see Section~\ref{section:primal-dual-method}).
This can be viewed as a generalization of a similar approach used in the
literature for concrete OptDGPs.
Specifically, this primal dual approach is employed in \cite{lcp} to obtain
their PLS for maximum weight matching in bipartite graphs with a proof size of
$O (\log W)$.
A similar technique is used by the authors of \cite{aplspaper} to achieve a
$2$-APLS for maximum weight matching in general graphs with the same proof
size.

While most of the PLS literature (including the current work) focuses on
deterministic schemes, an interesting angle that has been studied recently is
randomization in distributed proofs, i.e., allowing the verifier to reach its
decision in a randomized fashion.
The notion of \emph{randomized proof labeling schemes} was introduced in
\cite{rpls}, where the strength of randomization in the PLS model is
demonstrated by a universal scheme that enables one to reduce the amount of required communication
in a PLS exponentially by allowing a (probabilistic) one-sided error.
Another interesting generalization of PLSs is the \emph{distributed
	interactive proof} model, introduced recently in \cite{Kol-interactive-proofs}
and studied further in
\cite{parter-naor-yogev-2018-IP, Crescenzi-2019-IP, Fraigniaud-2019-IP}.

\paragraph*{On Sequential Efficiency.}

In this paper, we focus on sequentially efficient schemes, restricting the
prover and verifier to ``reasonable computations''.
We argue that beyond the interesting theoretical implications of this
restriction (see Section~\ref{section:contribution}), it also carries
practical justifications:
A natural application of PLSs is found in \emph{local checking} for
self-stabilizing algorithms \cite{Awerbuch-1991}, where the verifier's role is
played by the detection module and the prover is part of the correction module
\cite{pls}.
Any attempt to implement these modules in practice clearly requires sequential
efficiency on behalf of both the verifier and the prover (although, for the
latter, the sequential efficiency condition alone is not sufficient as the
correction module is also distributed).

While most of the PLSs presented in previous papers are naturally sequentially
efficient, there are a few exceptions.
One example of a scheme that may require intractable computations on the
verifier side is the universal PLS presented in \cite{pls} that enables the
verification of any decidable graph property with a label size of
$O (n^{2})$
simply by encoding the entire structure of the graph within the label.
A PLS that inherently relies on sequentially inefficient prover can be found,
e.g., in \cite{lcp}, where a scheme is constructed to decide if the graph
contains a Hamiltonian cycle.

\subsection{Our Contribution}
\label{section:contribution}
Our goal in this paper is to explore the power and limitations of APLSs and
ADPLS for OptDGPs.
We start by developing two generic methods:
a primal dual method for the design of sequentially efficient APLSs that
expands and generalizes techniques used by G{\"o}{\"o}s and Suomela \cite{lcp}
and Censor-Hillel, Paz, and Perry \cite{aplspaper};
and a method that exploits the local properties of centralized approximation
algorithms for the design of sequentially efficient ADPLSs.
Next, we establish black-box reductions between APLSs and ADPLSs for certain
families of OptDGPs.
Based (mainly) on these generic methods and reductions, we design a total of
twenty-two new sequentially efficient APLSs and ADPLSs for various classic
optimization problems;
refer to Tables \ref{apls-table} and \ref{adpls-table} for a summary of these
results.

On the negative side, we establish an
$\Omega (\log \kappa)$
lower bound on the proof size of a
$\frac{\kappa + 1}{\kappa}$-APLS
for maximum $b$-matching (in fact, this lower bound applies even for the
simpler case of maximum matching) and minimum edge cover in graphs of
odd-girth
$2 \kappa + 1$;
and an
$\Omega (\log n)$
lower bound on the proof size of a PLS for minimum edge cover in odd rings.
These lower bounds, that rely on combinatorial arguments and hold regardless
of sequential efficiency, match the proof size established in our
corresponding APLSs for these OptDGPs, thus proving their optimality.

Additional lower bounds are established under the restriction of the verifier
and/or prover to sequentially efficient computations, based on hardness
assumptions in (sequential) computational complexity theory.
Consider a OptDGP $\Psi$ that corresponds to an optimization problem that is
$\textrm{NP}$-hard to approximate within
$\alpha \geq 1$.
We first note that under the assumption that
$\textrm{NP} \neq \textrm{co-NP}$,
the yes-families of both an $\alpha$-APLS for $\Psi$ and an
$\alpha$-ADPLS for $\Psi$ (with some parameter $k\in\mathbb{Z}$)
are languages in the complexity class
$\textrm{co-NP} \setminus \textrm{NP}$.
Therefore, restricting the verifier to sequentially efficient computations
implies that $\Psi$ admits neither an $\alpha$-APLS, nor an $\alpha$-ADPLS,
with a polynomial proof size. This provides additional motivation for the study of APLSs and ADPLSs over
their exact counterparts.

Furthermore, the (weaker) assumption that
$\textrm{P} \neq \textrm{NP}$
suffices to rule out the existence of $\alpha$-ADPLS for $\Psi$ when
both the verifier and prover are required to be sequentially efficient.
This is due to the fact that the yes-family of an $\alpha$-ADPLS for $\Psi$ (with some parameter $k\in\mathbb{Z}$) is
a $\textrm{co-NP}$ complete language, combined with the trivial observation
that any sequentially efficient GPLS can be simulated by a centralized
algorithm in polynomial time.
We note that most of the OptDGPs considered in this paper correspond to
$\textrm{NP}$-hard optimization problems;
refer to Table~\ref{inapproximiability} for their known inapproximability
results with and without the unique games conjecture \cite{UGC-Khot}.

\subsection{Paper's Organization}
\label{section:organization}
The rest of the paper is organized as follows.
Following some preliminaries presented in Section~\ref{section:preliminaries},
our generic methods for the design of APLSs and ADPLSs are developed in
Section~\ref{section:methods}.
The reductions between APLSs and ADPLSs are presented in
Section~\ref{section:connections-between-ADPLSs-and-APLSs}.
Finally, the bounds we establish for concrete OptDGPS are established in
Section~\ref{section:results}.

\section{Preliminaries}
\label{section:preliminaries}

\paragraph*{Linear Programming and Duality.}
A \emph{linear program (LP)} consists of a linear objective function that one
wishes to optimize (i.e., minimize or maximize) subject to linear inequality
constraints.
The \emph{standard form} of a minimization (resp., maximization) LP is
$\min \{ \mathbf{c}^{\textrm{T}} \mathbf{x} \mid
\mathbf{A} \mathbf{x} \geq \mathbf{b} \,\land\, \mathbf{x} \geq \mathbf{0} \}$
(resp.,
$\max \{ \mathbf{c}^{\textrm{T}} \mathbf{x} \mid
\mathbf{A} \mathbf{x} \leq \mathbf{b} \,\land\, \mathbf{x} \geq \mathbf{0}
\}$),
where
$\mathbf{x} = \{ x_{j} \} \in \mathbb{R}^{\ell}$
is a vector of variables and
$\mathbf{A} = \{ a_{i, j} \} \in \mathbb{R}^{k \times \ell}$,
$\mathbf{b} = \{ b_{i} \} \in \mathbb{R}^{k}$,
and
$\mathbf{c} = \{ c_{j} \} \in \mathbb{R}^{\ell}$
are a matrix and vectors of coefficients.
An \emph{integer linear program (ILP)} is a LP augmented with integrality constraints. In Section \ref{section:results}, we formulate OptDGPs as LPs and ILPs. In the latter case, we often turn to a \emph{LP relaxation} of the problem, i.e., a LP obtained from an ILP by relaxing its integrality constraints.

Every LP admits a corresponding \emph{dual program} (in this context, we refer to the original LP as the \emph{primal program}). Specifically, for a minimization (resp., maximization) LP in standard form, its dual is a maximization (resp., minimization) LP, formulated as $\max \{ \mathbf{b}^{\textrm{T}} \mathbf{y} \mid \mathbf{A}^{\textrm{T}} \mathbf{y} \leq \mathbf{c} \,\land\, \mathbf{y} \geq \mathbf{0} \}$
(resp., $\min \{ \mathbf{b}^{\textrm{T}} \mathbf{y} \mid \mathbf{A}^{\textrm{T}} \mathbf{y} \geq \mathbf{c} \,\land\, \mathbf{y} \geq \mathbf{0} \}$).

LP duality has the following useful properties. Let $\mathbf{x}$ and $\mathbf{y}$ be feasible solutions to the primal and dual programs, respectively. The \emph{weak duality} theorem states that $\mathbf{c}^{\textrm{T}} \mathbf{x}\geq \mathbf{b}^{\textrm{T}} \mathbf{y}$ (resp., $\mathbf{c}^{\textrm{T}} \mathbf{x}\leq \mathbf{b}^{\textrm{T}} \mathbf{y}$). The \emph{strong duality} theorem states that $\mathbf{x}$ and $\mathbf{y}$ are optimal solutions to the primal and dual programs, respectively, if and only if $\mathbf{c}^{\textrm{T}} \mathbf{x}= \mathbf{b}^{\textrm{T}}\mathbf{y}$. The \emph{relaxed complementary slackness} conditions are stated as follows, for given parameters $\beta,\gamma\geq 1$. 
\begin{itemize} 
	\item[-] Primal relaxed complementary slackness:\\
	For every primal variable $x_{j}$, if $x_j>0$, then  $c_j/\beta\leq\sum\limits_{i=1}^{k} a_{i j}y_i \leq c_j$ (resp., $c_j\leq\sum\limits_{i=1}^{k} a_{i j}y_i \leq\beta\cdot c_j$).
	\item[-] Dual relaxed complementary slackness:\\
	For every dual variable $y_{i}$, if $y_i>0$, then $  b_i\leq\sum\limits_{j=1}^{\ell} a_{i j}x_j \leq \gamma \cdot b_i$ (resp., $b_i/\gamma\leq\sum\limits_{j=1}^{\ell} a_{i j}x_j \leq b_i$).
\end{itemize} 
If the (primal and dual) relaxed complementary slackness conditions hold, then it is guaranteed that $\mathbf{c}^{\textrm{T}} \mathbf{x}\leq\beta\cdot\gamma\cdot\mathbf{b}^{\textrm{T}} \mathbf{y}$ (resp., $\mathbf{c}^{\textrm{T}} \mathbf{x}\geq\frac{1}{\beta\cdot\gamma}\cdot\mathbf{b}^{\textrm{T}} \mathbf{y}$) which, combined with the aforementioned weak duality theorem, implies that $\mathbf{x}$ approximates an optimal primal solution by a multiplicative factor of $\beta\cdot\gamma$. Moreover, the relaxed complementary slackness conditions with parameters $\beta=\gamma=1$, often referred to simply as the \emph{complementary slackness} conditions, hold if and only if $\mathbf{x}$ and $\mathbf{y}$ are optimal.

Let $\Psi=\pif$ be a OptDGP that can be represented as an ILP. Let $P$ be its LP relaxation and $D$ the dual LP of $P$. Given parameters $\beta,\gamma\geq1$, we say that $\Psi$ is \emph{($\beta,\gamma$)-fitted} if for any optimal (integral) solution $\mathbf{x}$ for the ILP corresponding to $P$, there exists a feasible solution $\mathbf{y}$ for $D$ such that the relaxed primal and dual complementary slackness conditions hold for $\mathbf{x}$ and $\mathbf{y}$ with parameters $\beta$ and $\gamma$, respectively. 

\paragraph*{Comparison Schemes.}
Let $\mathcal{U}$ be the universe of IO graphs $\gio$ where
$\msfi:V\rightarrow\{0,1\}^*$ is an input assignment that encodes a unique id represented using $O(\log n)$ bits for each node $v\in V$ (possibly among other input components). 
For a function
$h : \{ 0, 1 \}^* \times \{ 0, 1 \}^* \rightarrow \mathbb{R}$
and parameter
$k \in \mathbb{Z}$,
an
\emph{$(h, k)$-comparison
	scheme} is a mechanism designed to decide if
$\sum_{v \in V}{h(\msfi(v), \msfo(v))} \geq k$
for a given IO graph
$\gio \in \mathcal{U}$.
Formally, an
$(h, k)$-comparison
scheme is defined as a GPLS over $\mathcal{U}$ by setting the yes-family to be
$\mathcal{F}_{Y} = \{ \gio \in \mathcal{U} \mid \sum_{v \in V}{h(\msfi(v),
	\msfo(v))} \geq k \}$
and the no family to be
$\mathcal{F}_{N} = \mathcal{U} \setminus \mathcal{F}_{Y}$.
Notice that the task of deciding if
$\sum_{v\in V} h(\msfi(v),\msfo(v)) \leq k$
can be achieved by a
$(-h, -k)$-comparison
scheme, where
$-h : \{ 0, 1 \}^* \times \{ 0, 1 \}^* \rightarrow \mathbb{R}$
is defined by setting
$-h(a, b) = -1 \cdot h(a, b)$
for every
$a, b \in \{ 0, 1 \}^*$.

The following lemma has been established by Korman et
al.~\cite[Lemma 4.4]{pls}.
\begin{lemma}\label{hk}
	Given a function $h : \{ 0, 1 \}^* \times \{ 0, 1 \}^* \rightarrow \mathbb{R}$ that is computable in polynomial time and an integer $k\in \mathbb{Z}$, there exists a sequentially efficient $(h, k)$-comparison scheme with proof size
	$O(\log n + H)$,
	where
	$H$ is the maximal number of bits required to represent $h(\msfi(v),\msfo(v))$ for any $v\in V$.
\end{lemma}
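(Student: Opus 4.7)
The plan is to exploit a spanning-tree aggregation scheme, the canonical tool for verifying global sums in the PLS model. Given an IO graph $\gio \in \mathcal{F}_Y$, the prover selects the node $r$ of minimum id as the root, constructs a BFS tree $T$ rooted at $r$, and computes, for every $v \in V$, the partial sum $s(v) = \sum_{u \in T_v} h(\msfi(u),\msfo(u))$, where $T_v$ is the subtree of $T$ rooted at $v$. The label $L(v)$ packages four fields: the id of $r$, the depth $d(v)$ of $v$ in $T$, the port number of $v$ toward its parent in $T$ (or $\bot$ if $v=r$), and the value $s(v)$. The first three fields take $O(\log n)$ bits each, while $|s(v)|$ fits in $O(\log n + H)$ bits since it is a sum of at most $n$ terms each representable in $H$ bits, yielding the claimed proof size.

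The verifier at node $v$ inspects $S(v)$, $L(v)$, and $L^N(v)$ and accepts iff: (i) every neighbor's root field equals $v$'s; (ii) if $d(v) = 0$ then the id encoded in $\msfi(v)$ equals the root field, and otherwise the neighbor reached via $v$'s parent port has depth $d(v) - 1$; (iii) the local aggregation equation $s(v) = h(\msfi(v),\msfo(v)) + \sum_{u \in C(v)} s(u)$ holds, where $C(v)$ is the set of neighbors $u$ with $d(u) = d(v)+1$ whose parent-port designates $v$; and (iv) if $d(v) = 0$, additionally $s(v) \geq k$. Since $h$ is polynomial-time computable and the verifier reads a polynomial-size input, it runs in polynomial time, so the scheme is sequentially efficient.

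Correctness on yes-instances is immediate from the construction: the prover's labels satisfy (i)--(iii) by design and (iv) holds at $r$ by the yes-family definition. For no-instances, assume toward contradiction that some label assignment is accepted. Check (i) combined with the connectivity of $G$ forces a single claimed root id $r^\star$ throughout; check (ii) then pins $r^\star$ as the unique node of depth $0$, and the parent-port designations of all other nodes, together with the strictly-decreasing-depth requirement along parent pointers, induce a spanning tree $T^\star$ rooted at $r^\star$ (following parents must terminate at the unique depth-$0$ node, ruling out cycles). A bottom-up induction on $T^\star$ using (iii) then yields $s(v) = \sum_{u \in T^\star_v} h(\msfi(u),\msfo(u))$ for every $v$, so in particular $s(r^\star) = \sum_{v \in V} h(\msfi(v),\msfo(v)) < k$ by the no-instance assumption, contradicting check (iv) at $r^\star$.

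The main subtlety is making sure that in the no-instance case the set $C(v)$ is determined purely from the labels visible at $v$ and that the induced parent pointers genuinely form a spanning tree in which every node is counted exactly once; this is precisely what the combination of parent-port designation and strict depth-decrease buys us. Once this structural point is in hand, the aggregation induction is routine, and all other components of the analysis reduce to bit-counting and the polynomial-time computability of $h$.
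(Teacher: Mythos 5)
Your proposal is correct in substance and follows essentially the same route as the proof the paper relies on: the paper does not prove Lemma~\ref{hk} itself but cites Korman et al.\ (Lemma~4.4), whose argument is exactly this rooted-spanning-tree aggregation with root id, depth, parent pointer, and subtree partial sums, verified by local consistency plus the threshold test at the root. One small repair: you designate the parent by the child's \emph{port number}, but in the stated model the verifier at $v$ only receives $\langle S(v),L(v),L^N(v)\rangle$ and has no access to a neighbor $u$'s port numbering, so $v$ cannot tell whether $u$'s declared parent port corresponds to the edge $(u,v)$; consequently $C(v)$ in your check~(iii) is not computable as written. Since the universe for comparison schemes explicitly guarantees unique $O(\log n)$-bit ids in $\msfi(\cdot)$, storing the parent's id instead of the parent port fixes this at no asymptotic cost, and the rest of your argument (uniqueness of the depth-$0$ node, acyclicity via strictly decreasing depth, and the bottom-up induction) goes through unchanged.
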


\paragraph*{Additional Definitions.}
A \emph{feasibility scheme} for a DGP $\Pi$ is a GPLS over the universe $\mathcal{U}=\{\gio\mid\gi\in \Pi\}$ with the yes-family $\mathcal{F}_{Y}=\Pi$ and the no-family $\mathcal{U}\setminus \mathcal{F}_{Y}$. The \emph{odd-girth} of a graph $G=(V,E)$ is the length of the shortest odd cycle contained in $G$.
\section{Methods}
\label{section:methods}
In this section, we present two generic methods that facilitate the design of
sequentially efficient APLSs and ADPLSs with small proof sizes for many OptDGPs. These methods are used in most of the results established later on in Section \ref{section:results}. 

\subsection{The Primal Dual Method}
\label{section:primal-dual-method}
LP duality theory can be a useful tool in the design of a $(\beta \cdot \gamma)$-APLS for a $(\beta,\gamma)$-fitted OptDGP $\Psi$ (as shown in \cite{aplspaper,lcp}). The main idea of this approach is to use the relaxed complementary slackness conditions to verify that the output assignment $\msfo:V\rightarrow\{0,1\}^*$ of a given IO graph $\gio$ is approximately optimal for $G$ and $\msfi$ with respect to $\Psi$. Specifically, the prover provides the verifier with a proof that there exists a feasible dual solution $\mathbf{y}$ within a multiplicative factor of $\beta \cdot \gamma$ from the primal solution $\mathbf{x}$ derived from the output assignment $\msfo$; the verifier then verifies the primal and dual feasibility of $\mathbf{x}$ and $\mathbf{y}$, respectively, as well as their relaxed complementary slackness conditions.

We take a particular interest in the following family of OptDGPs.
Consider a OptDGP
$\Psi = \pif$
that can be represented by an ILP that admits a LP relaxation $P$ whose matrix
form is given by the variable vector
$\mathbf{x} = \{ x_{j} \} \in \mathbb{R}^{\ell}$
and coefficient matrix and vectors
$\mathbf{A} = \{ a_{i, j} \} \in \mathbb{R}^{k \times \ell}$,
$\mathbf{b} = \{ b_{i} \} \in \mathbb{R}^{k}$,
and
$\mathbf{c} = \{ c_{j} \} \in \mathbb{R}^{\ell}$.
We say that $\Psi$ is \emph{locally verifiable} if for every IO graph
$\gio \in \Pi$,
there exist mappings
$v : [k] \rightarrow V$
and
$e : [\ell] \rightarrow E$
that satisfy the following conditions:
(1)
$a_{i, j} = 0$
for every
$i \in [k]$
and
$j \in [\ell]$
such that $e(j)$ is not incident on $v(i)$;
(2)
the variable $x_{j}$ is encoded in the local output $\msfo(u)$ of node
$u \in V$
for every
$j \in [k]$
such that $e(j)$ is incident on $u$;
and
(3)
the coefficients
$a_{i, j}$,
$a_{i', j}$,
$b_{i}$, and $c_{j}$ are either universal constants or encoded in the local
input $\msfi(u)$ of node
$u \in V$
for every
$i, i' \in [k]$
and
$j \in [\ell]$
such that
$v(i) = u$,
$v(i') = u'$,
and
$e(j) = (u, u')$.

The \emph{primal dual} method facilitates the design of an $\alpha$-APLS,
$\alpha = \beta \cdot \gamma$, for a $(\beta, \gamma)$-fitted and locally
verifiable OptDGP $\Psi=\pif$ whose goal is to determine for a given IO graph
$\gio$ if the output assignment $\msfo : V \rightarrow \{ 0, 1 \}^{*}$ is an
optimal (feasible) solution for the co-legal $G$ and $\msfi$ or $\alpha$-far
from being an optimal solution.
Let $\mathbf{x}$ be the primal variable vector encoded in the output
assignment $\msfo$.
If $\msfo$ is an optimal solution for $G$ and $\msfi$, then the prover uses a
sequential algorithm to generate a feasible dual variable vector $\mathbf{y}$
such that $\mathbf{x}$ and $\mathbf{y}$ meet the relaxed complementary
slackness conditions with parameters $\beta$ and $\gamma$ (such a dual
solution $\mathbf{y}$ exists as $\Psi$ is
$(\beta, \gamma)$-fitted).
The label assignment
$L : V \rightarrow \{ 0, 1 \}^*$
constructed by the prover assigns to each node
$u \in V$,
a label $L(u)$ that encodes the vector
$\mathbf{y}(u) = \langle y_i \mid v(i) = u \rangle$
of dual variables mapped to $u$ in the dual variable vector $\mathbf{y}$.

Consider some node
$u \in V$
of the given IO graph $\gio$.
The verifier at node $u$ extracts
(i)
the vector
$\mathbf{x}(u) = \langle x_{j} \mid u \in e(j) \rangle$
of primal variables mapped to edges incident on $u$ from the local output
$\msfo(u)$;
(ii)
the vector $\mathbf{y}(u)$ of dual variables mapped to $u$ from the label
$L(v)$;
(iii)
the vector
$\mathbf{y}^N(u) = \langle y_i \mid v(i) \in N(u) \rangle$
of dual variables mapped to $u$'s neighbors from the label vector $L^{N}(u)$;
and
(iv)
the vectors
$\mathbf{a}(u) = \langle a_{i, j} \mid  u \in e(j) \rangle$,
$\mathbf{b}(u) = \langle b_{i} \mid v(i) = u \rangle$,
and
$\mathbf{c}(u) = \langle c_{j} \mid u \in e(j) \rangle$
of coefficients mapped to $u$ and the edges incident on $u$ from the local
input $\msfi(u)$.

The verifier at node
$u \in V$
then proceeds as follows:
(1)
using
$\mathbf{x}(u)$,
$\mathbf{a}(u)$,
and
$\mathbf{b}(u)$,
the verifier verifies that the primal constraints that correspond to rows
$i \in [k]$
such that
$v(i) = u$
are satisfied;
(2)
using
$\mathbf{y}(u)$,
$\mathbf{y}^{N}(u)$,
$\mathbf{a}(u)$,
and
$\mathbf{c}(u)$,
the verifier verifies that the dual constraints that correspond to columns
$j \in [\ell]$
such that
$u \in e(j)$
are satisfied;
(3)
using
$\mathbf{x}(u)$,
$\mathbf{y}(u)$,
$\mathbf{y}^{N}(u)$,
$\mathbf{a}(u)$,
and
$\mathbf{c}(u)$,
the verifier verifies that the primal relaxed complementary slackness
conditions that correspond to primal variables $x_{j}$ such that
$u \in e(j)$
hold with parameter $\beta$;
and
(4)
using
$\mathbf{x}(u)$,
$\mathbf{y}(u)$,
$\mathbf{a}(u)$,
and
$\mathbf{b}(u)$,
the verifier verifies that the dual relaxed complementary slackness conditions
that correspond to dual variables $y_{i}$ such that
$v(i) = u$
hold with parameter $\gamma$.
If all four conditions are satisfied, then the verifier at node $u$ returns
$\true$;
otherwise, it returns $\false$.
Put together, the verifier accepts the IO graph $\gio$ if and only if
$\mathbf{x}$ and $\mathbf{y}$ are feasible primal and dual solutions that
satisfy the primal and dual relaxed complementary slackness conditions with
parameters $\beta$ and $\gamma$, respectively.

To establish the correctness of the $\alpha$-APLS, notice first that the
primal constraints are satisfied if and only if $\msfo$ is a feasible solution
for $G$ and $\msfi$.
Assuming that primal constraints are satisfied, if $\msfo$ is an optimal
solution for $G$ and $\msfi$, then the fact that $\Psi$ is
$(\beta, \gamma)$-fitted
implies that the verifier generates a feasible dual solution $\mathbf{y}$
such that the primal and dual relaxed complementary slackness conditions
are satisfied with parameters $\beta$ and $\gamma$.
Conversely, If $\mathbf{y}$ is a feasible dual solution and the primal and
dual relaxed complementary slackness conditions are satisfied with parameters
$\beta$ and $\gamma$, then $\mathbf{x}$ approximates the optimal primal
(fractional) solution within an approximation bound of
$\beta \cdot \gamma = \alpha$,
hence $\msfo$ approximates
$OPT_{\Psi}(G, \msfi)$
within the same approximation bound.

The proof size of a
$(\beta \cdot \gamma)$-APLS
for a
$(\beta, \gamma)$-fitted
and locally verifiable OptDGP $\Psi$, designed by means of the primal dual
method, is the maximum number of bits required to encode the vector
$\mathbf{y}(u)$
of dual variables mapped to a node
$u \in V$.
Let $r$ be the range of possible values assigned by the prover to a dual variable $y_{i}$.
We aim for schemes that minimize $r$. Particularly, for OptDGPs where the number of primal constraints mapped to
each node is bounded by a constant, this results in a
$(\beta \cdot \gamma)$-APLS
with a proof size of $O(\log r)$. 

In Section \ref{section:results} we present APLSs that are obtained using the primal
dual method.
We note that for all these APLSs, both the prover and verifier run in
polynomial sequential time, thus yielding sequentially efficient APLSs.
\subsection{The Verifiable Centralized Approximation Method}
\label{section:centralized-verifiable-approximation-method}
Consider some OptDGP
$\Psi = \pif$. We say that $\Psi$ is \emph{identified} if the input assignment $\msfi:V\rightarrow\{0,1\}^*$  encodes a unique id represented using $O(\log n)$ bits at each node $v\in V$ (possibly among other input components) for every IO graph $\gio$.

We say that 
$\Psi$
is \emph{decomposable} if there exists a function
$\lambda : \{ 0, 1 \}^* \times \{ 0, 1 \}^* \rightarrow \mathbb{R}$,
often referred to as a \emph{decomposition function}, such that 
$f(\gio) = \sum_{v \in V} \lambda(\msfi(v), \msfo(v))$
for every IO graph
$\gio \in \Pi$
(cf.\ the notion of semi-group functions in \cite{pls}).
Given an input and output assignments
$\msfi, \msfo : V \rightarrow \{ 0, 1 \}^*$,
let
$\lambda(\msfi, \msfo) = \sum_{v \in V} \lambda(\msfi(v), \msfo(v))$
denote the sum of the decomposition function values
$\lambda(\msfi(v), \msfo(v))$
over all nodes
$v \in V$.
Notice that the decomposition function is well defined for all bit string
pairs;
in particular, the definition of
$\lambda(\msfi, \msfo)$
does not require that the output assignment $\msfo$ is a feasible solution for
the graph $G$ and the input assignment $\msfi$.

Let $\Psi = \pif$ be a decomposable MinDGP (resp., MaxDGP)
with a decomposition function
$\lambda : \{ 0, 1 \}^* \times \{ 0, 1 \}^* \rightarrow \mathbb{R}$.
Given a legal input graph
$\gi\in \Pi$
and a parameter
$\alpha \geq 1$,
we say that a (not necessarily feasible) output assignment
$\mathsf{A} : V \rightarrow \{ 0, 1 \}^*$
is a \emph{decomposable $\alpha$-approximation} for $G$ and $\msfi$ if
$OPT_\Psi(G, \msfi)
\leq
\lambda(\msfi, \mathsf{A})
\leq
\alpha \cdot
OPT_\Psi(G, \msfi)$
(resp.,
$OPT_\Psi(G, \msfi) / \alpha
\leq
\lambda(\msfi, \mathsf{A})
\leq
OPT_\Psi(G, \msfi)$).

Fix some identified decomposable MinDGP (resp., MaxDGP)
$\Psi = \pif$
with a decomposition function
$\lambda : \{ 0, 1 \}^* \times \{ 0, 1 \}^* \rightarrow \mathbb{R}$.
The \emph{verifiable centralized approximation (VCA)} method facilitates the design
of an $\alpha$-ADPLS for $\Psi$ 
whose goal is to determine for a given legal input graph
$\gi \in \Pi$ and some parameter
$k \in \mathbb{Z}$
if every output assignment yields an objective value of at least (resp., at
most) $k$ or if there exists an output assignment that yields an objective
value smaller than
$k / \alpha$
(resp., larger than
$\alpha \cdot k$).
The ADPLSs designed by means of the VCA method are composed of two
verification tasks, namely, the \emph{approximation} task and the
\emph{comparison} task, so that the verifier accepts $\gi$ if and only if both
verification tasks accept.
The label
$L(v) = \langle L_\mathtt{approx}(v), L_\mathtt{comp}(v) \rangle$
assigned by the prover to each node
$v \in V$
is composed of the fields $L_\mathtt{approx}(v)$ and $L_\mathtt{comp}(v)$
serving the approximation task and the comparison task,
respectively.

In the approximation task, the prover runs a centralized algorithm
$\mathtt{ALG}$ that is guaranteed to produce a decomposable
$\alpha$-approximation
$\mathsf{A} : V \rightarrow \{ 0, 1 \}^*$
for graph $G$ and input assignment $\msfi$.
The $L_\mathtt{approx}(v)$ field of the label $L(v)$ assigned by the prover to
each node
$v \in V$
consists of both $\mathsf{A}(v)$ and a proof that the output assignment
$\mathsf{A}$ is indeed the outcome of the centralized algorithm $\mathtt{ALG}$.
The correctness requirement for this task is defined so that the verifier
accepts $\gi$ if and only if the field $L_\mathtt{approx}(v)$ encodes an
output assignment $\mathsf{A}$ that can be obtained using $\mathtt{ALG}$.

The purpose of the comparison task is to verify that
$\lambda(\msfi, \mathsf{A}) \geq k$
(resp.,
$\lambda(\msfi, \mathsf{A}) \leq k$),
where $\lambda$ is the decomposition function associated with the
(decomposable) MinDGP (resp., MaxDGP) $\Psi$ and $\mathsf{A}$ is the output
assignment encoded in the $L_\mathtt{approx}(v)$ fields of the labels $L(v)$
assigned to nodes
$v \in V$.
This is done by means of the
$(\lambda, k)$-comparison
scheme (resp., the
$(-\lambda, -k)$-comparison
scheme) presented in Section~\ref{section:preliminaries}.

The correctness of the $\alpha$-ADPLS for the MinDGP (resp., MaxDGP) $\Psi$
and the integer $k$ is established as follows.
If
$OPT_\Psi(G, \msfi) \geq k$
(resp.,
$OPT_\Psi(G, \msfi) \leq k$),
then the $L_\mathtt{approx}(v)$ field of the label $L(v)$ assigned by the
prover to each node
$v \in V$
encodes an output assignment
$\mathsf{A} : V \rightarrow \{ 0, 1 \}^*$
generated by the algorithm $\mathtt{ALG}$.
This means that $\mathsf{A}$ is a decomposable $\alpha$-approximation, thus
$\lambda(\msfi, \mathsf{A}) \geq OPT_\Psi(G, \msfi) \geq k$
(resp.,
$\lambda(\msfi, \mathsf{A}) \leq OPT_\Psi(G, \msfi) \leq k$)
and the verifier accepts $\gi$.
On the other hand, if
$OPT_\Psi(G, \msfi) < k / \alpha$
(resp.,
$OPT_\Psi(G, \msfi) > \alpha \cdot k$),
then for any decomposable $\alpha$-approximation $\mathsf{A}$, it holds that
$\lambda(\msfi, \mathsf{A}) \leq \alpha \cdot OPT_\Psi(G, \msfi) < k$
(resp.,
$\lambda(\msfi, \mathsf{A}) \geq OPT_\Psi(G, \msfi) / \alpha > k$),
hence the verifier rejects $\gi$ for any label assignment
$L : V \rightarrow \{ 0, 1 \}^*$.

The proof size of the $\alpha$-ADPLS designed via the VCA method is the
maximum size of a label
$L(v) = \langle L_\mathtt{approx}(v), L_\mathtt{comp}(v) \rangle$
assigned by the prover for a given input graph $\gi$ such that
$OPT_\Psi(G, \msfi) \geq k$
(resp.,
$OPT_\Psi(G, \msfi) \leq k$).
As discussed in Section~\ref{section:preliminaries}, it is guaranteed that the
$L_\mathtt{comp}(\cdot)$ fields are represented using
$O (\log n + H)$
bits, where
$H$ is an upper bound on the number of bits required to represent a $\lambda(\msfi(v),\msfo(v))$ value for any $v\in V$,
and
$\mathsf{A}$ is the decomposable $\alpha$-approximation generated by the
prover in the approximation task.
In Section~\ref{section:results}, we develop ADPLSs whose $L_\mathtt{approx}(\cdot)$
fields are also represented using
$|L_\mathtt{approx}(v)| = O (\log n + H)$
bits.
Moreover, the OptDGPs we consider admit some fixed parameter
$W \in \mathbb{Z}$
(typically an upper bound on the weights in the graph) such that
$H = O (\log n+\log W)$
which results in a proof size of
$O(\log n + \log W)$.

A desirable feature of the ADPLSs we develop in Section~\ref{section:results} is that
the centralized algorithms $\mathtt{ALG}$ employed in the approximation task
are efficient, hence the prover runs in polynomial time.
Since the (sequential) runtime of the verifier is also polynomial, it follows
that all our ADPLSs are sequentially efficient.

\section{Reductions Between APLSs and ADPLSs}
\label{section:connections-between-ADPLSs-and-APLSs}

\subsection{From an $\alpha$-ADPLS to an $\alpha$-APLS}
\label{section:from-adpls-to-apls}
Consider an identified decomposable MinDGP (resp., MaxDGP)
$\Psi = \pif$
with a decomposition function
$\lambda : \{ 0, 1 \}^* \times \{ 0, 1 \}^* \rightarrow \mathbb{R}$.
Let $p$ and $r$ be the proof sizes of a feasibility scheme for $\Pi$ and an
$\alpha$-ADPLS for $\Psi$,
respectively.
We establish the following lemma.

\begin{lemma}\label{lemma-from-adpls-to-apls}
	There exists an $\alpha$-APLS for $\Psi$ with a proof size of
	$O (p + r + \log n + H)$,
	where $H$ is the maximal number of bits required to represent
	$\lambda(\msfi(v), \msfo(v))$
	for any
	$v \in V$.
\end{lemma}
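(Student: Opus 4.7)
The plan is to build the $\alpha$-APLS for $\Psi$ by running three sub-schemes in parallel on each node: the given feasibility scheme for $\Pi$ (proof size $p$), two invocations of the $(h,k)$-comparison scheme of Lemma~\ref{hk} applied to the decomposition function $h=\lambda$ (proof size $O(\log n + H)$ each, applicable since $\Psi$ is identified), and the given $\alpha$-ADPLS (proof size $r$), whose parameter $k$ is chosen per instance as $k=f(\gio)$. The crucial observation that drives the reduction is that, for a feasible $\msfo$, this choice of $k$ aligns the ADPLS's yes/no-families exactly with those of the target APLS: in the MinDGP case, feasibility gives $OPT_\Psi(G,\msfi) \leq f(\gio) = k$, so the ADPLS yes-condition $OPT_\Psi(G,\msfi)\geq k$ forces $OPT_\Psi(G,\msfi) = f(\gio)$, which is the APLS yes-condition, whereas the ADPLS no-condition $OPT_\Psi(G,\msfi) < k/\alpha$ is literally the failure of $f(\gio)\leq \alpha\cdot OPT_\Psi(G,\msfi)$, which is the APLS no-condition. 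The MaxDGP case is symmetric, using the MaxDGP flavor of the ADPLS and the $(-\lambda,-F)$-comparison scheme where appropriate.

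Concretely, given $\gio$ with $\msfo$ optimal, the prover computes $F = f(\gio) = \sum_{v\in V}\lambda(\msfi(v),\msfo(v))$ and assigns to each node $v$ the label $L(v)=\langle F, L_\mathtt{feas}(v), L_\mathtt{comp}(v), L_\mathtt{ADPLS}(v)\rangle$, where the three sub-labels are produced respectively by the feasibility prover on $\gio$, by two invocations of the comparison scheme of Lemma~\ref{hk} jointly certifying $\sum_{v}\lambda(\msfi(v),\msfo(v)) = F$ (one for each direction of the inequality), and by the $\alpha$-ADPLS prover on $\gi$ with parameter $k=F$. Optimality and feasibility of $\msfo$ place each sub-instance in its own yes-family, so every sub-prover succeeds. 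The verifier at $v$ extracts $F$ from its label, checks that every neighbor reports the same value of $F$ (which globalizes because $G$ is connected), and then runs the feasibility verifier, the two comparison verifiers (both with parameter $F$), and the ADPLS verifier with parameter $k=F$; it accepts iff all four sub-verifiers accept.

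For correctness in the no-family, fix any $\gio\in\mathcal{F}_N$ and any candidate label assignment. If $\msfo$ is not $\Pi$-feasible, the feasibility sub-verifier rejects by assumption. Otherwise $\msfo$ is feasible and $f(\gio) > \alpha\cdot OPT_\Psi(G,\msfi)$; here the reported $F$ either differs from $f(\gio)$, in which case one of the two comparison sub-verifiers rejects by Lemma~\ref{hk}, or equals $f(\gio)$, in which case $OPT_\Psi(G,\msfi) < F/\alpha$ places $\gi$ in the ADPLS no-family for parameter $k=F$ and forces the ADPLS sub-verifier to reject regardless of $L_\mathtt{ADPLS}$. The total label length is $|F| + p + O(\log n + H) + r = O(p + r + \log n + H)$, since $F$, being a sum of $n$ values each representable in $H$ bits, fits in $O(\log n + H)$ bits. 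The delicate point, and the main obstacle I expect to address, is preventing a dishonest prover from rescuing a non-$\alpha$-approximate $\msfo$ by misreporting the value of $F$ in the label; this is precisely what the two-sided equality check via the pair of comparison sub-schemes rules out, pinning $F$ down to the true value $f(\gio)$ and thereby reducing the APLS acceptance/rejection criterion to that of the ADPLS with parameter $k=f(\gio)$.
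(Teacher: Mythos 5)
Your proposal is correct and takes essentially the same route as the paper's proof: certify feasibility with the feasibility scheme, publish the claimed objective value $F=f(\gio)$ in every label (checking neighbor-consistency), tie $F$ to $\sum_{v\in V}\lambda(\msfi(v),\msfo(v))$ via the comparison scheme of Lemma~\ref{hk}, and run the $\alpha$-ADPLS with parameter $k=F$. The only deviation is that you invoke the comparison scheme twice to pin $F$ to $f(\gio)$ exactly, whereas the paper uses a single one-sided comparison; your two-sided check is a safe (mildly redundant) way to neutralize a prover that misreports $F$, since in fact only the direction that prevents under-reporting (for MinDGPs) resp.\ over-reporting (for MaxDGPs) of $F$ is needed for soundness.
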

\begin{proof}
	Observe that if $\msfo$ is known to be a feasible solution for $G$ and
	$\msfi$, then the correctness requirements of an $\alpha$-APLS for the MinDGP
	(resp., MaxDGP) $\Psi$ are equivalent to those of an $\alpha$-ADPLS for $\Psi$
	and
	$k = f(\gio)$.
	That is, for a given IO graph
	$\gio\in \Pi$,
	if
	$OPT_{\Psi}(G, \msfi) \geq k$
	(resp.,
	$OPT_{\Psi}(G, \msfi) \leq k$),
	then $\msfo$ is an optimal solution for $G$ and $\msfi$ which requires the
	verifier of an $\alpha$-APLS to accept $\gio$;
	if
	$OPT_{\Psi}(G, \msfi) < k / \alpha$
	(resp.,
	$OPT_{\Psi}(G, \msfi) > \alpha \cdot k$),
	then $\msfo$ is at least $\alpha$-far from being optimal for $G$ and $\msfi$
	which requires the verifier of an $\alpha$-APLS to reject $\gio$.
	
	The design of an $\alpha$-APLS for $\Psi$ is thus enabled by taking the label
	assigned by the prover to each node
	$v \in V$
	to be
	$L(v) = \langle
	L_\mathtt{feas}(v), L_\mathtt{obj}(v), L_\mathtt{comp}(v), L_\mathtt{ADPLS}(v)
	\rangle$,
	where
	$L_\mathtt{feas}(v)$ is the $p$-bit label assigned to $v$ by the prover of the
	feasibility scheme for $\Pi$;
	$L_\mathtt{obj}(v) = f(\gio)$
	(note that all nodes are assigned with the same $L_\mathtt{obj}(\cdot)$
	field);
	$L_\mathtt{comp}(v)$ is the label constructed in the
	$(\lambda, L_\mathtt{obj}(v))$-comparison
	scheme (resp., the
	$(-\lambda, -L_\mathtt{obj}(v))$-comparison
	scheme) presented in Section~\ref{section:preliminaries};
	and
	$L_\mathtt{ADPLS}(v)$ is the $r$-bit label of an $\alpha$-ADPLS for $\Psi$ and
	$L_\mathtt{obj}(v)$.
	This label assignment allows the verifier to verify that
	(1)
	$\msfo$ is a feasible solution for $G$ and $\msfi$;
	(2)
	$f(\gio) \geq L_\mathtt{obj}(v)$
	(resp.,
	$f(\gio) \leq L_\mathtt{obj}(v)$)
	for each
	$v \in V$;
	and
	(3)
	the verifier of an $\alpha$-ADPLS for $\Psi$ and
	$k = f(\gio)$
	accepts the input graph $\gi$.
\end{proof}

Consider the OptDGPs presented in Section~\ref{section:results} in the context
of an $\alpha$-ADPLS with a proof size of $O(\log n+\log W)$.
We note that these OptDGPs admit sequentially efficient feasibility schemes
with a proof size of $O(\log n)$.
Specifically, for minimum weight vertex cover and minimum weight dominating
set a proof size of $1$ bit suffices;
for metric traveling salesperson, a feasibility scheme requires verifying that
a given solution is a Hamiltonian cycle which can be done efficiently with a
proof size of
$O(\log n)$
\cite{lcp};
and the feasibility scheme for minimum metric
Steiner tree requires verifying that a given solution is a tree that spans all
nodes of a given set which can be done efficiently with a proof size of
$O(\log n)$ \cite{pls}.
Since their objective functions are simply sums of weights, these OptDGPs also
admit natural decomposition functions whose images can be represented using
$O (\log n + \log W)$
bits assuming that $\msfo$ is a feasible output assignment.
Put together with Lemma \ref{lemma-from-adpls-to-apls}, we get that for each
sequentially efficient $\alpha$-ADPLS presented in
Section~\ref{section:results}, there exists a corresponding sequentially
efficient $\alpha$-APLS with a proof size of
$O(\log n + \log W)$.

\subsection{From an $\alpha$-APLS to an $\alpha$-ADPLS}
\label{section:adpls-locally-verifiable}
Consider an identified, locally verifiable, and $(\beta,\gamma)$-fitted OptDGP $\Psi$ with the mappings $v : [s] \rightarrow V$ and $e : [t] \rightarrow E$ that are associated with its LP relaxation $P$ whose matrix form is given by the variable vector
$\mathbf{x} = \{ x_{j} \} \in \mathbb{R}^{t}$
and coefficient matrix and vectors
$\mathbf{A} = \{ a_{i, j} \} \in \mathbb{R}^{s \times t}$,
$\mathbf{b} = \{ b_{i} \} \in \mathbb{R}^{s}$,
and
$\mathbf{c} = \{ c_{j} \} \in \mathbb{R}^{t}$. Define $D_u=\{i\mid v(i)=u\}$ for each $u\in V$ and let $d=\max_{u\in V}\{|D_u|\}$. Let $b_{\max}$ be the maximal number of bits required to represent $b_i$ for any $i\in[s]$. Let $\alpha=\beta\cdot\gamma$ and let $r$ be the proof size of an $\alpha$-APLS for $\Psi$ produced by the primal dual method. We obtain the following lemma.
\begin{lemma}\label{lemma-adpls-primal-dual}
	There exists an $\alpha$-ADPLS for $\Psi$ with a proof size of $O(\log n + \log d+b_{\max}+ r)$.
\end{lemma}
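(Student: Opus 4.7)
The plan is to mirror the primal dual construction of Section~\ref{section:primal-dual-method} in the decision setting, where the prover has no primal (output) solution to certify. Instead, for a MinDGP $\Psi$ with threshold $k$, the prover will supply only a feasible dual solution $\mathbf{y}$ of the LP relaxation $P$, and the verifier will (a)~locally check dual feasibility, and (b)~check, via a comparison sub-scheme, that $\alpha\cdot\mathbf{b}^{\textrm{T}}\mathbf{y}\geq k$. The MaxDGP case will be symmetric, with the verifier instead checking $\mathbf{b}^{\textrm{T}}\mathbf{y}\leq \alpha\cdot k$ via a $(-\lambda,-\alpha k)$-comparison scheme.

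For the MinDGP case, if $OPT_\Psi(G,\msfi)\geq k$ then by $(\beta,\gamma)$-fittedness the integral primal optimum $\mathbf{x}^{*}$ is matched with a feasible dual $\mathbf{y}$ for which the relaxed complementary slackness conditions hold, giving $\mathbf{c}^{\textrm{T}}\mathbf{x}^{*}\leq \alpha\cdot\mathbf{b}^{\textrm{T}}\mathbf{y}$ and hence $\alpha\cdot\mathbf{b}^{\textrm{T}}\mathbf{y}\geq OPT_\Psi(G,\msfi)\geq k$. The prover decomposes $\mathbf{y}$ across nodes exactly as in the APLS primal dual scheme: each $u\in V$ receives a field $L_\mathtt{dual}(u)$ encoding $\mathbf{y}(u)=\langle y_i\mid v(i)=u\rangle$. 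The verifier, reading $L_\mathtt{dual}(u)$ together with $L_\mathtt{dual}(w)$ for $w\in N(u)$, performs the dual-feasibility check inherited verbatim from step~(2) of the APLS verifier, which is purely local by local verifiability and does not involve any primal variables. In parallel, the prover invokes the $(\lambda,k)$-comparison scheme of Lemma~\ref{hk} with $\lambda(\msfi(u),L_\mathtt{dual}(u)) = \alpha\sum_{i\in D_u}b_i y_i$, appending a second field $L_\mathtt{comp}(u)$ to each label. Correctness of the no side then follows from weak LP duality applied to $P$: if $OPT_\Psi(G,\msfi)<k/\alpha$ then every feasible dual satisfies $\alpha\cdot\mathbf{b}^{\textrm{T}}\mathbf{y}\leq \alpha\cdot OPT_\Psi(G,\msfi)<k$, so either a dual constraint is violated and some node rejects in (a), or the comparison sub-scheme rejects in (b).

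Bounding the proof size is arithmetic: $L_\mathtt{dual}(u)$ uses $r$ bits by hypothesis; each product $b_i y_i$ fits in $O(r+b_{\max})$ bits, and the per-node comparison value, a sum of at most $d$ such terms scaled by the fixed constant $\alpha$, therefore fits in $H=O(r+b_{\max}+\log d)$ bits. Lemma~\ref{hk} then supplies $L_\mathtt{comp}(u)$ of size $O(\log n+H)$, for a total of $O(\log n+\log d+b_{\max}+r)$ as claimed. The main obstacle I anticipate is essentially bookkeeping: to verify carefully that Lemma~\ref{hk}, stated there for $h$ depending on the local input and \emph{output}, applies verbatim when the second argument is a piece of the \emph{label} (the comparison verifier only reads local node state, so this is immediate), and to confirm that no cross-node consistency check on $\mathbf{y}$ is required—each $y_i$ is stored only at its owner node $v(i)$, so the dual pieces automatically assemble into a single global vector without the prover or verifier having to enforce agreement among neighbors.
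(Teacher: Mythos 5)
Your proposal is correct and is essentially the paper's own argument: the paper also labels each node $u$ with $\mathbf{y}(u)=\langle y_i\mid v(i)=u\rangle$, checks dual feasibility locally, and runs a comparison scheme on $\lambda(\msfi(u),\mathbf{y}(u))=\alpha\sum_{i:v(i)=u}b_iy_i$, with the yes side following from $(\beta,\gamma)$-fittedness and the no side from weak duality. The only cosmetic difference is that the paper packages this as an instance of the VCA method (viewing $\mathbf{y}$ as a decomposable $\alpha$-approximation $\mathsf{A}$), whereas you unroll the same construction and correctness argument directly.
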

\begin{proof}
	We construct an $\alpha$-ADPLS for the MinDGP (resp., MaxDGP) $\Psi$ by means of the VCA method.
	Recall that an $\alpha$-APLS for $\Psi$ established by means of the primal dual method is defined so that the labels encode a feasible dual solution $\mathbf{y}\in \mathbb{R}^s$ that satisfies $OPT_\Psi(G,\msfi)\leq\alpha\cdot\mathbf{b}^\textrm{T}\mathbf{y}$ (resp., $OPT_\Psi(G,\msfi)\geq\frac{1}{\alpha}\cdot\mathbf{b}^\textrm{T}\mathbf{y}$). Define $\mathsf{A}(u)=\mathbf{y}(u)=\langle y_i \mid v(i)=u\rangle$ and $\lambda(\msfi(u),\mathsf{A}(u))=\alpha\cdot\sum_{i:v(i)=u}b_iy_i$ (resp., $\lambda(\msfi(u),\mathsf{A}(u))=\frac{1}{\alpha}\cdot\sum_{i:v(i)=u}b_iy_i$) for each $u\in V$. The prover sets the sub-label $L_\mathtt{approx}(u)=\mathsf{A}(u)=\mathbf{y}(u)$  associated with the approximation task for each node $u\in V$, which allows the verifier to verify that $\mathbf{y}$ is a feasible dual solution.
	
	\sloppy	
	For the correctness of this scheme, it suffices to show that $\mathsf{A}$ is a decomposable $\alpha$-approximation for $G$ and $\msfi$ (with respect to the decomposition function $\lambda$). 
	Note that $\mathbf{y}$ is defined so that it satisfies $OPT_\Psi(G,\msfi)\leq\alpha\cdot\mathbf{b}^\textrm{T}\mathbf{y}$ (resp., $OPT_\Psi(G,\msfi)\geq\frac{1}{\alpha}\cdot\mathbf{b}^\textrm{T}\mathbf{y}$); and weak duality implies that $\alpha\cdot\mathbf{b}^\textrm{T}\mathbf{y}\leq \alpha\cdot OPT_\Psi(G,\msfi)$ (resp., $\frac{1}{\alpha}\cdot\mathbf{b}^\textrm{T}\mathbf{y}\geq  \frac{1}{\alpha}\cdot OPT_\Psi(G,\msfi)$). It follows that $\mathsf{A}$ is a decomposable $\alpha$-approximation for $G$ and $\msfi$ since $\lambda(\msfi,\mathsf{A})=\sum_{u\in V}\lambda(\msfi(u),\mathsf{A}(u))=\alpha\cdot\sum_{u\in V}\sum_{i:v(i)=u}b_iy_i=\alpha\cdot\mathbf{b}^\textrm{T}\mathbf{y}$ (resp., $\lambda(\msfi,\mathsf{A})=\sum_{u\in V}\lambda(\msfi(u),\mathsf{A}(u))=\frac{1}{\alpha}\cdot\sum_{u\in V}\sum_{i:v(i)=u}b_iy_i=\frac{1}{\alpha}\cdot\mathbf{b}^\textrm{T}\mathbf{y}$). 
\end{proof}
\par\fussy

Observe that for the minimum edge cover problem presented in Section \ref{section:minimum-edge-cover} it holds that $b_{\max}=1$, $d=1$; and for the maximum $b$-matching problem presented in Section \ref{section:maximum-bmatching} it holds that $b_{\max}=O(\log W)$, $d=1$. These allow us to obtain the following results: (1) a $\frac{\kappa+1}{\kappa}$-ADPLS for minimum edge cover in graphs of odd-girth $2\kappa+1$ with a proof size of $O(\log n)$ based on Theorem \ref{EC-odd-girth}; (2) a DPLS for minimum edge cover in bipartite graphs with a proof size of $O(\log n)$ based on Theorem \ref{bipartite-ec}; (3) a $\frac{\kappa+1}{\kappa}$-ADPLS for maximum $b$-matching in graphs of odd-girth $2\kappa+1$ with a proof size of $O(\log n+\log W)$ based on Theorem \ref{bmatching-odd-girth}; and (4) a DPLS for maximum $b$-matching in bipartite graphs with a proof size of $O(\log n+\log W)$ based on Theorem \ref{bipartite-bmatching}.
\section{Bounds for Concrete OptDGPs}
\label{section:results}
\subsection{Minimum Edge Cover}
\label{section:minimum-edge-cover}
Given a graph $G=(V,E)$, an \emph{edge cover} is a subset $C\subseteq E$ of edges such that every node $v\in V$ is incident on at least one edge in $C$. A \emph{minimum edge cover} is an edge cover of minimal size. 

Given an edge cover $C$ in graph $G$, a node $v\in V$ is said to be \emph{tight} if it is incident on exactly one edge $e\in C$; otherwise it is said to be \emph{loose}. An \emph{interchanging} path is a simple path $P=\{e_{1}=(u,u'),\dots ,e_{\ell}=(v',v)\}$ between a loose node $u\in V$ and a node $v\in V$ that satisfies (1) $(u,u')\in C$; and (2) $|C\cap \{e_i,e_{i+1}\}|=1$ for all $i=1,2,\dots \ell-1$. We define $int(v)$ to be the length of a shortest interchanging path ending in $v$, defined to be $\infty$ if no such path exists, for each $v\in V$. In particular, $int(v)=0$ if and only if $v$ is loose.
\begin{lemma}\label{interchanging}
	Given an edge cover $C\subseteq E$ and a node $u\in V$, if $int(u)$ is odd, then for any node $v\in N(u)$, it holds that $int(v)\leq int(u)+1$.
\end{lemma}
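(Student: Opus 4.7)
The plan is to fix a shortest interchanging path $P = e_1, \ldots, e_\ell$ ending at $u$, where $\ell = int(u)$, and show that for each $v \in N(u)$ either a prefix of $P$ already reaches $v$, or $P$ can be extended by the edge $(u,v)$ into a valid interchanging path to $v$ of length $\ell + 1$.

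First I would unpack the alternation structure of $P$: since $e_1 \in C$ and $|C \cap \{e_i, e_{i+1}\}| = 1$ for every $i$, a one-line induction shows $e_i \in C$ iff $i$ is odd. The hypothesis that $int(u) = \ell$ is odd then forces $e_\ell \in C$, and writing $e_\ell = (u', u)$ identifies $u'$ as the predecessor of $u$ on $P$.

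Next I would case-split on $v \in N(u)$. In the easy case, $v$ appears as a vertex $v_i$ of $P$ for some $0 \le i \le \ell-1$; then the prefix $e_1, \ldots, e_i$ is a simple interchanging path ending at $v$ (with the boundary case $i=0$ just meaning $v$ is the loose origin of $P$, so $int(v) = 0$), and we conclude $int(v) \le i \le \ell - 1 \le int(u) + 1$. Otherwise $v$ does not appear in $P$, and I would try to extend by appending $(u,v)$ to form $P' = e_1, \ldots, e_\ell, (u,v)$. Simplicity of $P'$ is automatic, so the only interchanging condition to check is $|C \cap \{e_\ell, (u,v)\}| = 1$; since $e_\ell \in C$, this is equivalent to $(u,v) \notin C$, in which case $P'$ witnesses $int(v) \le \ell + 1 = int(u) + 1$.

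The main obstacle — and the one place where oddness of $int(u)$ is genuinely used — is ruling out the remaining sub-case where $v \notin P$ and yet $(u,v) \in C$. Here I would argue that $u$ is then incident on two distinct edges of $C$, namely $e_\ell$ and $(u,v)$ (distinct because $v \ne u'$, since $u' = v_{\ell-1}$ lies on $P$). Hence $u$ is loose, forcing $int(u) = 0$, which contradicts the hypothesis that $int(u)$ is odd. This eliminates the bad sub-case and completes the case analysis, yielding $int(v) \le int(u) + 1$ in every remaining scenario.
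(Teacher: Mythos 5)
Your proof is correct and takes essentially the same route as the paper's: fix a shortest interchanging path to $u$, use a prefix of it when $v$ already lies on the path, and otherwise append $(u,v)$, which is legitimate because oddness of $int(u)$ forces $u$ to be tight (hence $(u,v)\notin C$). The paper's argument is just a terser rendering of the same case analysis, with your explicit contradiction in the last sub-case compressed into the parenthetical ``since $u$ is tight.''
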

\begin{proof}
	Let $P$ be an interchanging path of length $int(u)$ ending in $u$. Clearly, if a node $v\in N(u)$ precedes $u$ in $P$, then it follows that $int(v)< int(u)<int(u)+1$; otherwise $(u,v)\notin C$ (since $u$ is tight) which means that the path $P'=P\cup \{(u,v)\}$ is an interchanging path and thus $int(v)\leq int(u)+1$. 
\end{proof}

An \emph{inflating} path is an interchanging path $P=\{e_{1}=(u,u'),\dots ,e_{\ell}=(v',v)\}$ between two loose nodes $u,v\in V$, $u\neq v$, such that $e_{1},e_\ell\in C$.
\begin{lemma}\label{inflating}
	If $C$ is a minimum edge cover in a graph $G=(V,E)$, then there are no inflating paths in $G$.
\end{lemma}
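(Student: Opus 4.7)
My plan is to derive a contradiction by exhibiting a strictly smaller edge cover, obtained from $C$ by taking the symmetric difference with the edge set of the supposed inflating path.

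First, I would analyze the parity structure of $P$. The definition of an interchanging path forces the edges of $P$ to alternate with respect to membership in $C$ (since $|C\cap\{e_i,e_{i+1}\}|=1$). Combined with the requirement $e_1,e_\ell\in C$, this pins down $\ell$ as odd and tells us that exactly $(\ell+1)/2$ edges of $P$ belong to $C$ while $(\ell-1)/2$ do not. Hence for $C' = (C\setminus P)\cup(P\setminus C)$ we immediately get $|C'| = |C| - 1$.

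Next I would verify that $C'$ is itself an edge cover, which is the only nontrivial part. Nodes not touching $P$ are covered exactly as before. For an internal node $w$ of $P$, the two path edges incident on $w$, call them $e_i$ and $e_{i+1}$, satisfy $|C\cap\{e_i,e_{i+1}\}|=1$, so precisely one of them leaves $C$ and precisely the other enters $C$ under the swap; thus $w$ retains a covering edge. The delicate case is the two endpoints $u$ and $v$: the edges $e_1$ and $e_\ell$ are removed from $C$, and these were covering edges of $u$ and $v$, respectively. Here I would invoke the hypothesis that both endpoints are \emph{loose}, meaning each is incident on at least two edges of $C$; therefore even after removing $e_1$ (resp.\ $e_\ell$), the node $u$ (resp.\ $v$) still has at least one incident edge remaining in $C\setminus P$ inside $C'$.

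Finally, having established that $C'$ is an edge cover with $|C'|=|C|-1$, the minimality of $C$ is violated, completing the proof. The main obstacle is really just the bookkeeping around the endpoints --- one must be sure to use looseness (not merely feasibility) of $u$ and $v$ in $C$, since without this the removal of $e_1$ or $e_\ell$ could leave an endpoint uncovered; and one must also ensure that no other edge incident on an internal node is simultaneously being removed, which is guaranteed by $P$ being a simple path so each internal node appears on exactly two edges of $P$.

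\begin{proof}[Proof sketch]
Suppose toward a contradiction that $P=\{e_1=(u,u'),\dots,e_\ell=(v',v)\}$ is an inflating path in $G$. The alternation property together with $e_1,e_\ell\in C$ forces $\ell$ to be odd and $|C\cap P|=(\ell+1)/2$, $|P\setminus C|=(\ell-1)/2$. Let $C'=(C\setminus P)\cup(P\setminus C)$, so $|C'|=|C|-1$. Each node not on $P$ is covered in $C'$ by the same edges as in $C$. Each internal node of $P$ is incident on consecutive edges $e_i,e_{i+1}$ of which exactly one lies in $C$ and the other in $C'\setminus C$, so it remains covered. Since $u$ is loose, it is incident on some edge of $C$ other than $e_1$, and this edge lies in $C\setminus P$ (as $P$ is simple and $e_1$ is the only path edge at $u$), hence in $C'$; symmetrically for $v$. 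Thus $C'$ is an edge cover strictly smaller than $C$, contradicting the minimality of $C$.
\end{proof}
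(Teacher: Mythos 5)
Your proof is correct and follows essentially the same route as the paper: the paper's proof also swaps the path's odd-indexed edges (those in $C$) for its even-indexed edges, i.e.\ takes $\widetilde{C}=(C\setminus P_{odd})\cup P_{even}$, which is exactly your symmetric difference $C'$, and concludes $|\widetilde{C}|=|C|-1$, contradicting minimality. You merely spell out the covering check (internal nodes via alternation, endpoints via looseness and simplicity of $P$) that the paper compresses into the parenthetical ``since $u$ and $v$ are loose in $C$.''
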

\begin{proof}
	Assume towards a contradiction that $C$ is a minimum edge cover and there exists an inflating path $P=\{e_1=(u,u'),\dots e_\ell=(v',v)\}$ between two loose nodes $u,v\in V$. Let $P_{odd}=\{e_i\in P\mid i=1,3\dots \ell\}$ let $P_{even}=\{e_i\in P\mid i=2,4\dots \ell-1\}$ and let $\widetilde{C}=(C\setminus P_{odd})\cup P_{even}$. The edge set $\widetilde{C}$ is an edge cover (since $u$ and $v$ are loose in $C$) that satisfies $|\widetilde{C}|=C-1$ which contradicts $C$ being a minimum edge cover.
\end{proof}

\begin{theorem}\label{EC-odd-girth}
	For every $\kappa\in\mathbb{Z}_{>0}$, there exists a sequentially efficient $\frac{\kappa+1}{\kappa}$-APLS for minimum edge cover in graphs of odd-girth at least $2\kappa+1$ with a proof size of $\lceil \log (\kappa+1)\rceil$ bits.
\end{theorem}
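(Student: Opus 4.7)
The plan is to invoke the primal-dual method of Section~\ref{section:primal-dual-method} on the standard LP relaxation $\min\{\sum_e x_e : \sum_{e\ni v}x_e \geq 1\ \forall v,\ x_e\geq 0\}$ of minimum edge cover, whose dual is the fractional independent set LP $\max\{\sum_v y_v : y_u + y_v \leq 1\ \forall(u,v)\in E,\ y\geq 0\}$. Both LPs are locally verifiable in the sense of Section~\ref{section:primal-dual-method}, each primal constraint residing on a single vertex and each dual constraint on a single edge. I aim to show that the problem is $(\beta,\gamma)$-fitted with $\beta = \frac{\kappa+1}{\kappa}$ and $\gamma = 1$, which via that framework yields a $\frac{\kappa+1}{\kappa}$-APLS whose label size equals the number of bits needed to encode a single dual entry.

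Given a minimum edge cover $C$ (encoded in $\msfo$), the prover constructs a dual vector $\mathbf{y}$ whose entries belong to $\bigl\{0, \frac{1}{\kappa+1}, \frac{2}{\kappa+1}, \dots, \frac{\kappa}{\kappa+1}\bigr\}$; these are $\kappa+1$ possible values, matching the claimed $\lceil\log(\kappa+1)\rceil$-bit label. The assignment is driven by the function $int(v)$: a loose vertex gets $y_v=0$ (which is precisely the dual relaxed slackness for $\gamma=1$, since a loose vertex has $\sum_{e\ni v}x_e\geq 2$), a tight vertex $v$ with small odd $int(v)$ (e.g., a leaf of a loose-centered star component of $C$) receives the full $\frac{\kappa}{\kappa+1}$ share, and in general $y_v$ shrinks as $int(v)$ grows, with the parity of $int(v)$ calibrated so that the two endpoints of any $C$-edge fall into complementary layers summing to at least $\frac{\kappa}{\kappa+1}$. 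Primal relaxed slackness $y_u+y_v\geq\frac{\kappa}{\kappa+1}$ on $C$-edges then holds automatically: for a $C$-edge incident to a loose vertex the tight endpoint carries the full share, and for a single-edge $C$-component the two endpoints' $int$ values differ by at most one, so they land in adjacent (or identical even) layers whose $y$ values already sum to at least $\frac{\kappa}{\kappa+1}$ by design.

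The bulk of the proof lies in establishing dual feasibility $y_u+y_v\leq 1$ on every edge, particularly on non-$C$ edges whose endpoints might both carry a large share. The mixed-parity case is handled directly by Lemma~\ref{interchanging}, which bounds by one the difference of $int$ values at adjacent vertices whenever one of them has odd $int$, so that the sum of the two $y$ values is automatically at most one. The same-parity case I plan to dispatch by contradiction: a ``large versus large'' clash on a non-$C$ edge $(u,v)$ would force the concatenation of the shortest interchanging paths to $u$ and to $v$, glued through $(u,v)$, to either realize an inflating path (contradicting Lemma~\ref{inflating} and the minimality of $C$, when the two source loose vertices differ) or to close into an alternating walk of odd edge-length strictly less than $2\kappa+1$, from which pruning repetitions extracts a forbidden short odd cycle. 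The main obstacle I anticipate is the meticulous case analysis needed to cover all parity-depth combinations uniformly and to verify that the inflating/odd-cycle dichotomy really rules out every same-parity clash. Sequential efficiency then follows readily, as $int(v)$ can be computed in polynomial time by an alternating BFS initiated at the loose vertices, and the verifier performs only a constant number of local arithmetic checks per node.
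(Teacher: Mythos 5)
Your proposal follows essentially the same route as the paper's proof: the same primal/dual LP pair, the same $(\frac{\kappa+1}{\kappa},1)$-fitting via an $int$-layered dual vector taking $\kappa+1$ values in $\{0,\frac{1}{\kappa+1},\dots,\frac{\kappa}{\kappa+1}\}$ (loose vertices at $0$, shallow odd layers at $\frac{\kappa}{\kappa+1}$, even and odd layers calibrated so that $C$-edges sum to at least $\frac{\kappa}{\kappa+1}$), and the same use of Lemmas~\ref{interchanging} and~\ref{inflating} combined with the odd-girth bound to force opposite parities of $int$ across any edge and hence dual feasibility. The parity-depth case analysis you anticipate is exactly what the paper carries out, so the plan is sound and matches the paper's argument.
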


\begin{proof}
	We provide a $\frac{\kappa+1}{\kappa}$-APLS by means of the primal dual
	method.
	Consider the following LP relaxation for the minimum edge cover problem
	\begin{equation}
	\begin{array}{rrclcl}\label{ec-primal}
	\displaystyle \min & \multicolumn{3}{l}{\underset{e\in E}{\sum} x_e}\\ 
	\textrm{s.t.} & \underset{e:v\in e}{\sum} x_e & \geq & 1, & &  v \in V\\
	& x_e & \geq & 0 ,& &  e \in E\\ 
	\end{array}
	\end{equation}
	and its dual LP
	\begin{equation}
	\begin{array}{rrclcl}\label{ec-dual}
	\displaystyle \max & \multicolumn{3}{l}{\underset{v\in V}{\sum}  y_v} \\
	\textrm{s.t.} & y_u+y_v & \leq & 1, & &  (u,v) \in E\\
	& y_v & \geq & 0 ,& &  v \in V. \\
	\end{array}
	\end{equation}
	Notice that the minimum edge cover problem is locally verifiable since each primal variable is mapped to an edge $e\in E$ and each primal constraint is mapped to a node $v\in V$.
	
	Given the proposed edge cover $C\subseteq E$ derived from the output assignment of the given IO graph $\gio$, let $U_{even}=\{v\in V\mid int(v)<\kappa\  \land\  int(v)\mod 2=0\}$ and let $U_{odd}=\{v\in V\mid int(v)<\kappa\  \land\  int(v)\mod 2=1\}$. The prover obtains a dual solution $\mathbf{y}\in \{0,1/(\kappa+1),\dots,\kappa/(\kappa+1)\}^n$ as follows: for each $v\in V$, set $y_{v}=\frac{1}{2(\kappa+1)}\cdot int(v)$ if $v\in U_{even}$; $y_{v}=1-\frac{1}{2(\kappa+1)}\cdot (int(v)+1)$ if $v\in U_{odd}$; and $y_{v}=\frac{\lceil{\kappa/2}\rceil}{\kappa+1}$ otherwise.
	
	Assuming that $C$ is a minimum edge cover we show that $\mathbf{y}$ is a feasible dual solution. Consider some edge $(u,v)\in E$. If $u,v\notin U_{odd}$, then $y_{u},y_{v}\leq 1/2$ and thus $(u,v)$ does not violate the feasibility of $\mathbf{y}$. We now consider the case that at least one endpoint of $(u,v)$ is in $U_{odd}$ and assume w.l.o.g.\  that $u\in U_{odd}$. Lemma \ref{inflating} states that there are no inflating paths in $G$, which combined with the fact that all odd cycles are larger than $2\cdot int(u)+1$ implies that $int(v)$ is even. Furthermore, from Lemma \ref{interchanging} we get that $int(v)\leq int(u)+1$. It follows that $y_{u}$ and $y_{v}$ satisfy their dual feasibility constraint since $y_{u}+y_{v}\leq 1-\frac{1}{2(\kappa+1)}\cdot (int(u)+1)+\frac{1}{2(\kappa+1)}\cdot (int(u)+1)= 1$ if $int(u)<\kappa-1$; and $y_{u}+y_{v}\leq 1-\frac{1}{2(\kappa+1)}\cdot \kappa+\frac{1}{2(\kappa+1)}\cdot \kappa= 1$ if $int(u)=\kappa-1$.
	
	Let $\mathbf{x}\in \{0,1\}^m$ be the primal solution that represents $C$. We show that if $C$ is a minimum edge cover, then the primal relaxed complementary slackness conditions hold with the parameter $\beta=(\kappa+1)/\kappa$. Consider some node $u\in V$. One of the following three cases apply for each edge $(u,v)\in E$ that satisfy $x_{u,v}=1$: (1) if $u\in U_{odd}$, then $int(v)=int(u)-1$; (2) if $u\in U_{even}$, then $int(v)=int(u)+1$; and (3) if $u\notin U_{odd} \cup U_{even}$, then $int(u)\geq \kappa$ and $int(v)\geq \kappa-1$. We observe that for each of these cases, $u$ and $v$ satisfy $y_{u}+y_{v}\geq \kappa/(\kappa+1)$, thus the primal relaxed complementary slackness conditions hold with the parameter $\beta=(\kappa+1)/\kappa$.
	
	As for the dual relaxed complementary slackness conditions, we note that $\mathbf{y}$ is defined so that if $y_{v}>0$, then $\sum_{e:v\in e}{x_{e}}=1$ for each node $v\in V$. Thus, it follows that the dual relaxed complementary slackness conditions are satisfied with the parameter $\gamma=1$.
	
	We note that the sequential runtime of both the prover and verifier is polynomial and that the labels assigned by the prover are taken from a range of $\kappa+1$ values, which, combined with the correctness of the primal dual method, completes our proof. 
\end{proof}

The following corollary follows directly from Theorem \ref{EC-odd-girth} by setting $\kappa=1$.
\begin{corollary}
	There exists a sequentially efficient $2$-APLS for minimum edge cover with a proof size of $1$ bit.
\end{corollary}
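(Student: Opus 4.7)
The plan is to apply Theorem~\ref{EC-odd-girth} directly with the parameter $\kappa = 1$ and verify that the resulting statement is exactly the corollary. First I would substitute $\kappa = 1$ into the hypothesis of Theorem~\ref{EC-odd-girth} and observe that the restriction to ``graphs of odd-girth at least $2\kappa + 1 = 3$'' becomes vacuous: every odd cycle has length at least $3$, and bipartite graphs have no odd cycles at all, so (with the standard convention that the odd-girth of a bipartite graph is $\infty$) every graph satisfies this hypothesis. Consequently, the scheme provided by Theorem~\ref{EC-odd-girth} applies unconditionally to the class of all graphs.

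Next I would read off the parameters produced by the theorem. The approximation ratio is $\frac{\kappa + 1}{\kappa} = 2$, matching the $2$-APLS claim of the corollary. The proof size is $\lceil \log(\kappa + 1) \rceil = \lceil \log 2 \rceil = 1$ bit, again matching the corollary. Finally, Theorem~\ref{EC-odd-girth} explicitly states that its prover and verifier run in polynomial sequential time, so sequential efficiency is inherited.

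There is essentially no obstacle here, since the corollary is stated only for effect; the only tiny point worth a sentence in the write-up is the confirmation that the odd-girth hypothesis is satisfied by every graph when $\kappa = 1$ (in particular including the bipartite case). With that observation in place, the result is immediate from Theorem~\ref{EC-odd-girth}.
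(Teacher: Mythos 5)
Your proposal is correct and matches the paper exactly: the corollary is obtained by setting $\kappa=1$ in Theorem~\ref{EC-odd-girth}, and your added observation that the odd-girth-at-least-$3$ hypothesis is vacuous for all graphs is the right (and only) point needing a word of justification.
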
 
We now show that the proof size of the $\frac{\kappa+1}{\kappa}$-APLS presented in Theorem \ref{EC-odd-girth} is optimal. We do so by constructing a corresponding lower bound. To that end, we consider the following problems.

Given a graph $G=(V,E)$, the \emph{leader election} problem is that of selecting a sole node $v\in V$ referred to as the \emph{leader}. We refer to a feasibility scheme for leader election simply as a \emph{leader election scheme}. Regarding leader election schemes, a result that was established by G{\"o}{\"o}s and Suomela \cite[Section 5.4]{lcp} is that the proof size of any leader election scheme in odd rings is $\Omega(\log n)$.

The \emph{two candidate leader election} problem handles the task of selecting
exactly one leader out of 2 candidate nodes $a,b\in V$. We refer to a
feasibility scheme for two candidate leader election simply as a \emph{two
	candidate leader election scheme}.
We obtain the following result regarding two candidate leader election scheme
in odd rings.
\begin{lemma}\label{2cle-result}
	There is a two candidate leader election scheme in odd rings with a proof size of $O(1)$.
\end{lemma}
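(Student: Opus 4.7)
The plan is to base the scheme on the classical parity fact that any $2$-coloring of an odd cycle contains an odd (in particular, positive) number of monochromatic edges. Each node $v$ will receive a single-bit color label $c(v)\in\{0,1\}$, so the proof size is trivially $O(1)$; the verifier will use these labels, together with $v$'s input (which tells $v$ whether it and each of its two neighbors is one of the two candidates) and $v$'s output (which tells $v$ whether it claims to be the leader), to enforce local consistency conditions that force the odd-cycle parity of monochromatic edges to expose any cheating output.

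For the construction, the prover picks one candidate, say $a$, to be the leader and chooses an edge $e^{*}$ incident to the non-leader $b$, always taking $e^{*}\neq (a,b)$ whenever $a$ and $b$ happen to be adjacent (possible since $b$ has degree $2$). Deleting $e^{*}$ from the ring yields a path of length $n-1$, which is even because $n$ is odd, so the proper $2$-coloring of this path normalized by $c(a)=0$ is well defined and is used as the labeling. A direct parity calculation shows that the two endpoints of $e^{*}$ receive equal colors, so the resulting $2$-coloring of the ring has exactly one monochromatic edge --- namely $e^{*}$, which is incident to $b$ but not to $a$. The verifier at $v$ then checks: (a)~only a candidate may claim leadership; (b)~if $v$ is either a non-candidate or a candidate claiming leadership, then every monochromatic incident edge must have its other endpoint be a candidate that does \emph{not} claim leadership; (c)~if $v$ is a candidate not claiming leadership, then exactly one of its two incident edges is monochromatic, and the other endpoint of that edge is not a candidate. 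The honest labeling passes all three checks.

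For soundness, I argue that any output that does not elect exactly one leader from $\{a,b\}$ is rejected. A non-candidate claiming leadership is caught immediately by (a). If both $a$ and $b$ claim leadership, then (b) forbids every monochromatic edge in the ring, contradicting the odd-cycle parity fact, so some local check must fail. If neither $a$ nor $b$ claims leadership, then rule (c) forces each of them to contribute exactly one monochromatic edge whose other endpoint lies outside $\{a,b\}$; rule (b) rules out any further monochromatic edges, so the ring contains exactly two monochromatic edges, again contradicting the odd-parity fact, and no labeling can satisfy all rules simultaneously. The main obstacle the argument must overcome is the adjacency case, where a cheating prover would like to identify the two required monochromatic edges with the single shared edge $(a,b)$ in order to reach an odd count and escape the parity contradiction; the ``other endpoint is not a candidate'' clause of rule (c) is included precisely to forbid this identification.
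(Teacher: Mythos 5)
Your scheme is sound in substance and rests on the same engine as the paper's proof: an odd ring is not $2$-colorable, so a parity count forces the prover's hand. The packaging differs---you use a genuine $\{0,1\}$-coloring and count monochromatic edges (exactly one in the honest case, placed at the non-leader, away from $(a,b)$), whereas the paper gives the leader a third, reserved label value $10$ and lets the other nodes form a proper $2$-coloring that is ``repaired'' only at that one special node; its verifier then checks a proper-coloring condition plus the rule that the two neighbors of a $10$-node get distinct labels. The two parity arguments are dual to one another (your two-monochromatic-edge contradiction corresponds to the paper's ``a cycle of odd effective length $n-2$ cannot alternate'' when both candidates are elected), and both yield $O(1)$ labels.

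There is, however, one concrete model issue you need to repair. In a GPLS the verifier at $v$ receives only $\langle S(v), L(v), L^N(v)\rangle$: it sees its neighbors' \emph{labels} but not their inputs or outputs. Your rules (b) and (c) ask $v$ to test whether the other endpoint of a monochromatic edge ``is a candidate that does not claim leadership,'' which is a function of that neighbor's input and output and hence invisible to $v$ as written. The fix is standard and costs $O(1)$ bits: have the prover append to each label a flag recording the node's candidate status and leadership claim, have each node verify that its own flag matches its own $S(v)$, and let neighbors read the flags from $L^N(v)$. Note that this is precisely what the paper's scheme does implicitly---the reserved label value $10$ \emph{is} the broadcast ``I am the elected candidate'' flag, certified locally by the check that $L(v)=10$ if and only if $v$ is the leader. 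With that adjustment your argument goes through; without it, the verifier you describe is not implementable in the stated model.
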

\begin{proof}
	Let $\gio$ be an IO graph where $G=(V,E)$ is an odd ring with a sole leader $a\in V$ and a sole non-leader candidate $b\in V$. The prover assigns the label $L(a)=10$ to the leader $a$ and a label $L(v)\in\{0,1\}$ to every other node $v\in V\setminus\{a\}$, such that $L(v)\neq L(u)$ and $L(v)\neq L(u')$ for both of $v$'s neighbors $u,u'\in N(v)$. In other words, the label assignment $L:V\rightarrow\{0,1,10\}$ is a proper 3-coloring of the nodes such that the only node colored with the color $10$ is $a$. 
	
	The verifier at each node $v\in V$ with the neighbors $u,u'\in N(v)$, verifies that (1) $L(v)\in\{0,1,10\}$; (2) $L(v)=10$ if and only if $v$ is the leader; (3) if $L(v)=10$, then $L(u)\neq L(u')$; and (4) $L(v)\notin\{L(u),L(u')\}$. 
	
	For the correctness of this scheme we show that the verifier accepts $\gio$ if and only if there is exactly one leader out of the candidates $a$ and $b$. First we note that if no leader is elected, then there are two adjacent nodes $u,v\in V$ such that $L(u)=L(v)$. This follows from the fact that $G$ is not 2-colorable. In the case that both $a$ and $b$ are elected as leaders, there exists a candidate node $v\in\{a,b\}$ such that both of its neighbors are assigned the same label thus violating condition 3. Finally, if there is exactly one leader, then the label  assignment $L:V\rightarrow\{0,1,10\}$ constructed by the prover satisfies all the conditions checked by the verifier at each node $v\in V$.
\end{proof}
Consider an odd ring $G=(V,E)$. We observe that a minimum edge cover in $G$ admits exactly one loose node. This follows from the fact that the size of an edge cover $C\subseteq E$ with exactly one loose node is $\lceil n/2\rceil$ and every other edge cover is of size at least $\lceil n/2\rceil$. This leads to the following result.
\begin{lemma}\label{lemma-for-bound-ec}
For every $\kappa\in\mathbb{Z}_{>0}$, if there is a $\frac{\kappa+1}{\kappa}$-APLS for minimum edge cover in graphs of odd-girth at least $2\kappa+1$ with a proof size of $r$ bits, then there is a PLS for minimum edge cover in $(2\kappa+1)$-sized rings with a proof size of $r+O(1)$ bits. 
\end{lemma}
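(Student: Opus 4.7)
The plan is to convert the given $\frac{\kappa+1}{\kappa}$-APLS into the desired PLS by attaching a single additional bit to each node's label, implementing a parity gadget that pins down the size of the proposed edge cover $C$. The PLS verifier will accept if and only if both the APLS verifier accepts and the parity gadget is locally consistent at the node.

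The first step is to use the APLS as a black box to sandwich $|C|$. Since the $(2\kappa+1)$-sized ring has odd-girth $2\kappa+1$ and so fits the APLS's universe, by correctness of the APLS any edge cover $C$ that the APLS accepts must satisfy $|C|\leq\frac{\kappa+1}{\kappa}\cdot OPT=\frac{(\kappa+1)^2}{\kappa}=\kappa+2+\tfrac{1}{\kappa}$. Combined with the observation made just before the lemma (that in the ring the possible edge-cover sizes are $\kappa+1,\kappa+2,\ldots$), this means that whenever the APLS accepts, $|C|\in\{\kappa+1,\kappa+2\}$. For $\kappa=1$ this bound is vacuous, but the same conclusion holds trivially because a triangle admits only the cover sizes $2$ and $3$.

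The second step is to distinguish $|C|=\kappa+1$ from $|C|=\kappa+2$, which have opposite parities, using a parity gadget with one bit per node. The prover assigns $p(v)\in\{0,1\}$ to every node, and the verifier checks, for each edge $(u,v)$ incident on its node, that $p(u)\oplus p(v)=\chi_C(u,v)$, where $\chi_C(u,v)=\mathbf{1}[(u,v)\in C]$ if $\kappa$ is odd, and $\chi_C(u,v)=1-\mathbf{1}[(u,v)\in C]$ if $\kappa$ is even. Summing the local constraints around the ring, each $p(v)$ appears in exactly two terms and thus cancels modulo $2$, forcing $\sum_{(u,v)\in E}\chi_C(u,v)\equiv 0\pmod 2$; a direct parity calculation then yields that such a $p$ exists if and only if $|C|\equiv\kappa+1\pmod 2$.

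Combining these ingredients, the PLS label at $v$ is $L(v)=\langle L_\mathtt{APLS}(v),p(v)\rangle$, of total size $r+O(1)$, and the verifier accepts iff both sub-verifiers accept. Correctness reduces to a short case analysis: an optimum $C$ is accepted (APLS accepts, parity gadget satisfiable); any $C$ with $|C|\geq\kappa+3$ is rejected by the APLS; and any $C$ with $|C|=\kappa+2$ has the wrong parity, so the parity gadget is globally unsatisfiable and some local check must fail regardless of how the prover chooses $p$. The main obstacle I expect is the parity-gadget bookkeeping — picking the correct variant of $\chi_C$ for each parity of $\kappa$ and justifying rigorously that the edge-local constraints on the ring force the claimed global parity of $|C|$ — together with a sanity check for the degenerate $\kappa=1$ case where the APLS contributes no sandwiching and the parity gadget must do all the work on its own.
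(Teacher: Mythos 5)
Your proposal is correct, and its second half takes a genuinely different route from the paper. The first step is identical: both arguments observe that on a $(2\kappa+1)$-ring the optimum is $\kappa+1$, that $\frac{(\kappa+1)^2}{\kappa}<\kappa+3$ for $\kappa>1$, and hence that any cover the APLS can accept has size in $\{\kappa+1,\kappa+2\}$ (your explicit handling of $\kappa=1$ via the fact that a triangle only admits covers of sizes $2$ and $3$ is actually cleaner than the paper, which simply defers that case). Where you diverge is in how the two surviving sizes are separated. The paper exploits the structural fact that a minimum edge cover of an odd ring has exactly one \emph{loose} node, and attaches an $O(1)$-bit ``two candidate leader election'' scheme (a proper $3$-coloring in which only the leader receives the third color) to certify that exactly one loose node exists. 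You instead certify the \emph{parity} of $|C|$ directly with a one-bit potential $p(v)$ satisfying $p(u)\oplus p(v)=\chi_C(u,v)$ on every edge; summing around the odd cycle shows such a $p$ exists iff $\sum_e\chi_C(e)\equiv 0\pmod 2$, and your choice of $\chi_C$ (indicator of $C$ or of its complement, according to the parity of $\kappa$) makes this equivalent to $|C|\equiv\kappa+1\pmod 2$, which is exactly what distinguishes $\kappa+1$ from $\kappa+2$. Both gadgets cost $O(1)$ bits, so the resulting proof size $r+O(1)$ is the same. Your parity gadget is more self-contained (no auxiliary leader-election lemma, no case analysis on loose nodes) and handles $\kappa=1$ uniformly; the paper's route yields a reusable leader-election primitive and generalizes more naturally to settings where one wants to certify uniqueness of a distinguished node rather than a parity. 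The only point worth making explicit in a final write-up is that your verifier's check depends on the parity of $\kappa$, i.e., the scheme is parameterized by $\kappa$ — this is legitimate here because the lemma fixes $\kappa$ and the graph family is $(2\kappa+1)$-sized rings, but it should be stated.
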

\begin{proof}
Let $L_\mathtt{APLS}:V\rightarrow\{0,1\}^r$ be the label assignment constructed by the prover of a $\frac{\kappa+1}{\kappa}$-APLS for minimum edge cover in graphs of odd-girth at least $2\kappa+1$. Given the proposed edge cover $C\subseteq E$ provided by means of the output assignment of a given IO graph $\gio$ with the underlying odd ring $G=(V,E)$, we construct a PLS as follows. Let $a\in V$ be the (unique) loose node in $G$ assuming that $C$ is optimal. The prover assigns the label $L(v)=\langle L_\mathtt{APLS}(v),L_\mathtt{2\_to\_1}(v)\rangle$, where for each node $v\in V$, the bit string $L_\mathtt{2\_to\_1}(v)$ is the constant size label of a two candidate leader election scheme in odd rings with $a$ as the leader (as presented Lemma \ref{2cle-result}). The verifier at each node $v\in V$, runs the local verification of both the APLS and the two candidate leader election scheme and returns $\true$ if and only if both schemes evaluate to $\true$.

Let $n=2\kappa+1$ be the number of nodes in $G$ and note that $n$ is the odd-girth of $G$. The size of a minimum edge cover is $\kappa+1$ and since $(\kappa+1)(\kappa+1)/\kappa<\kappa+3$ for all $\kappa>1$ (for completeness we state that it is not hard to obtain a PLS with constant proof size for the case $\kappa=1$), we get that if the verifier accepts $\gio$ with respect to the $\frac{\kappa+1}{\kappa}$-APLS, then $|C|\in \{\kappa+1,\kappa+2\}$. This means that there are at most 2 loose nodes. The correctness of this PLS is established by the fact that the two candidate leader election scheme guarantees that the verifier accepts $\gio$ if and only if there is exactly one loose node in $G$.
\end{proof}
\begin{corollary}
There exists a PLS for minimum edge cover in odd rings with a proof size of $O(\log n)$.
\end{corollary}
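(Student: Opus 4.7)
The plan is to obtain this PLS essentially by instantiation: feed the $\frac{\kappa+1}{\kappa}$-APLS from Theorem \ref{EC-odd-girth} into the reduction of Lemma \ref{lemma-for-bound-ec}, with the right choice of $\kappa$. Given an odd ring $G = (V,E)$ on $n$ nodes, I would write $n = 2\kappa + 1$, that is, set $\kappa = (n-1)/2$. Note that the odd-girth of $G$ is exactly $n = 2\kappa+1$, so $G$ falls within the family of graphs handled by Theorem \ref{EC-odd-girth} with this choice of $\kappa$.

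Applying Theorem \ref{EC-odd-girth} with this value of $\kappa$ yields a sequentially efficient $\frac{\kappa+1}{\kappa}$-APLS for minimum edge cover on $G$ with proof size $\lceil \log(\kappa+1) \rceil$ bits, which equals $\lceil \log\bigl((n+1)/2\bigr) \rceil = O(\log n)$ bits. Plugging this APLS (as the $L_\mathtt{APLS}$ component) into the construction of Lemma \ref{lemma-for-bound-ec} produces a PLS for minimum edge cover on $(2\kappa+1)$-sized rings, i.e., precisely on $n$-sized odd rings, whose label at each node is $\langle L_\mathtt{APLS}(v), L_\mathtt{2\_to\_1}(v)\rangle$. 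The first field has size $O(\log n)$ bits, and by Lemma \ref{2cle-result} the second field has size $O(1)$. Hence the total proof size is $O(\log n) + O(1) = O(\log n)$, as required.

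There is really no obstacle to overcome beyond parameter bookkeeping; both the correctness (the APLS's approximation ratio $\frac{\kappa+1}{\kappa}$ is tight enough that accepted covers have size in $\{\kappa+1,\kappa+2\}$, after which the two-candidate leader election distinguishes a unique loose node from two) and the sequential efficiency are inherited directly from Theorem \ref{EC-odd-girth} and Lemma \ref{2cle-result}. The only edge case is $\kappa = 1$ (i.e., $n = 3$), which is already subsumed by the constant-proof-size remark inside the proof of Lemma \ref{lemma-for-bound-ec}. This corollary, together with the $\Omega(\log n)$ leader-election lower bound of G\"o\"os and Suomela invoked earlier, is what will let the authors convert the bound into the advertised $\Omega(\log \kappa)$ lower bound on $\frac{\kappa+1}{\kappa}$-APLSs, matching Theorem \ref{EC-odd-girth}.
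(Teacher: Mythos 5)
Your derivation is exactly the paper's intended one: the corollary follows by instantiating Lemma \ref{lemma-for-bound-ec} with the $\frac{\kappa+1}{\kappa}$-APLS of Theorem \ref{EC-odd-girth} for $\kappa=(n-1)/2$, giving proof size $\lceil\log(\kappa+1)\rceil+O(1)=O(\log n)$. The parameter bookkeeping and the $\kappa=1$ edge case are handled as in the paper, so there is nothing to add.
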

\begin{theorem}\label{bound_for_pls_ec}
The proof size of a PLS for minimum edge cover in $(2\kappa+1)$-sized rings is  $\Omega(\log n)$.
\end{theorem}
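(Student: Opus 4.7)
The plan is to reduce the leader election problem in odd rings, for which G{\"o}{\"o}s and Suomela established an $\Omega(\log n)$ lower bound on the proof size, to the verification of minimum edge covers in odd rings. The starting observation is that any minimum edge cover $C$ in an odd ring of size $n = 2\kappa+1$ consists of $\kappa+1$ edges and therefore contributes $2(\kappa+1) = n+1$ node-edge incidences; since every node must be covered at least once, exactly one node is incident on two edges of $C$, that is, exactly one node is loose. Moreover, for any designated node $a$ in the ring one can explicitly construct a minimum edge cover with $a$ as its unique loose node by including the two edges incident on $a$ and pairing off the remaining $2\kappa - 2$ internal nodes along the complementary path into $\kappa - 1$ disjoint edges.

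Suppose now that $\Pi_\mathtt{ec}$ is a PLS for minimum edge cover in $(2\kappa+1)$-sized rings with proof size $r$. I would use $\Pi_\mathtt{ec}$ to build a leader election scheme for $(2\kappa+1)$-sized rings of proof size $r + O(1)$. Each node outputs a ``leader'' or ``non-leader'' bit. If exactly one node $a$ is declared the leader, the prover builds the minimum edge cover $C$ whose sole loose node is $a$ and assigns to each node $v$ the label $L(v) = \langle C_v, L_\mathtt{ec}(v) \rangle$, where $C_v \in \{0,1\}^2$ flags which of $v$'s two incident edges lie in $C$ and $L_\mathtt{ec}(v)$ is the corresponding PLS label of $\Pi_\mathtt{ec}$. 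The verifier at $v$ checks that (i) the $C_v$ flags are consistent with both neighbors on each incident edge; (ii) at least one incident edge of $v$ is flagged as being in $C$; (iii) $\Pi_\mathtt{ec}$'s verifier accepts at $v$; and (iv) $v$ is declared the leader if and only if both of its incident edges are flagged as being in $C$.

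Correctness follows immediately: if exactly one leader is elected, the above labeling satisfies all four checks by construction, while if the leader designation is a no-instance (zero or multiple leaders) then no label assignment can pass all checks, since tests (i)--(ii) would certify $C$ as a valid edge cover, test (iii) together with the correctness of $\Pi_\mathtt{ec}$ would force $C$ to be minimum, the structural observation would then yield exactly one loose node in $C$, and test (iv) would identify that unique loose node as the sole leader---contradicting the assumption. The resulting leader election scheme has proof size $r + O(1)$, so the G{\"o}{\"o}s--Suomela lower bound yields $r = \Omega(\log n)$. The main conceptual step is the reduction itself, anchored by the observation that minimum edge covers on odd rings have precisely one loose node; once this is in hand, the feasibility checks and proof-size accounting are routine.
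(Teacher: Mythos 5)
Your proposal is correct and follows essentially the same route as the paper: a reduction from leader election in odd rings (using the G\"o\"os--Suomela $\Omega(\log n)$ bound) via the observation that a minimum edge cover in an odd ring has exactly one loose node, which the prover aligns with the elected leader. The paper states this reduction in a single sentence; you merely fill in the consistency checks and the incidence-counting argument, which are the routine details the paper leaves implicit.
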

\begin{proof}
Assuming that there exists a PLS for minimum edge cover in $(2\kappa+1)$-sized rings with a proof size of $o(\log n)$ bits implies a leader election scheme in odd rings with a proof size of $o(\log n)$ simply by having the prover encode a minimum edge cover with the elected leader as the only loose node. This contradicts the fact that the proof size of any leader election scheme in odd rings is $\Omega(\log n)$.
\end{proof}	

\begin{theorem}\label{bound_for_apls_ec}
	For every $\kappa\in\mathbb{Z}_{>0}$, the proof size of a $\frac{\kappa+1}{\kappa}$-APLS for minimum edge cover in graphs of odd-girth at least $2\kappa+1$ is $\Omega(\log \kappa)$.
\end{theorem}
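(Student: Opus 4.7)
The plan is to combine the two preceding results, Lemma \ref{lemma-for-bound-ec} and Theorem \ref{bound_for_pls_ec}, in a direct reduction. Lemma \ref{lemma-for-bound-ec} already gives a black-box transformation from a $\frac{\kappa+1}{\kappa}$-APLS for minimum edge cover in graphs of odd-girth at least $2\kappa+1$ with proof size $r$ into a PLS for minimum edge cover on the $(2\kappa+1)$-cycle with proof size $r + O(1)$ (the $O(1)$ overhead coming from the two-candidate leader election gadget). Theorem \ref{bound_for_pls_ec} in turn establishes that any such PLS on a ring of size $n = 2\kappa+1$ requires proof size $\Omega(\log n) = \Omega(\log \kappa)$.

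Concretely, I would argue by contraposition. Suppose for contradiction that for some $\kappa$ there is a $\frac{\kappa+1}{\kappa}$-APLS for minimum edge cover on graphs of odd-girth at least $2\kappa+1$ whose proof size is $r = o(\log \kappa)$. Applying Lemma \ref{lemma-for-bound-ec} to the $(2\kappa+1)$-sized ring $G$ (which is a graph of odd-girth exactly $2\kappa+1$) produces a PLS for minimum edge cover in that ring with proof size $r + O(1) = o(\log \kappa) = o(\log n)$, directly contradicting Theorem \ref{bound_for_pls_ec}. Hence $r = \Omega(\log \kappa)$ as claimed.

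There is no real obstacle here: the technical work was done in constructing the two-candidate leader election scheme (Lemma \ref{2cle-result}) and in observing that a minimum edge cover in an odd ring contains exactly one loose node, which are already in place. The one detail worth being explicit about is that the odd ring on $2\kappa+1$ nodes is a valid instance for the assumed APLS (its odd-girth is $2\kappa+1$), so the reduction applies, and that $\log n = \Theta(\log \kappa)$ when $n = 2\kappa+1$, which is what lets us translate the $\Omega(\log n)$ PLS lower bound into the desired $\Omega(\log \kappa)$ APLS lower bound.
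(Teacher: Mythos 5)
Your proposal is correct and follows exactly the paper's own argument: assume a $\frac{\kappa+1}{\kappa}$-APLS with proof size $o(\log\kappa)$, apply Lemma \ref{lemma-for-bound-ec} to obtain a PLS for minimum edge cover on the $(2\kappa+1)$-sized ring with proof size $o(\log n)$, and contradict Theorem \ref{bound_for_pls_ec}. The added remarks about the ring having odd-girth $2\kappa+1$ and $\log n = \Theta(\log\kappa)$ are just explicit versions of details the paper leaves implicit.
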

\begin{proof}
	Assume towards a contradiction that there exists a $\frac{\kappa+1}{\kappa}$-APLS for minimum edge cover in graphs of odd-girth at least $2\kappa+1$ with a proof size of $o(\log \kappa)$ bits. As Lemma \ref{lemma-for-bound-ec} suggests, this implies that there is a PLS for minimum edge cover in odd rings with a proof size of $o(\log n)$ which contradicts Theorem \ref{bound_for_pls_ec}.
\end{proof}

\begin{theorem}\label{ec-ring-dpls}
	There exists a sequentially efficient DPLS for minimum edge cover in odd rings with a proof size of $O(\log n)$.
\end{theorem}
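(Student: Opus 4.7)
The plan exploits the fact that in an odd ring of $n$ nodes, a minimum edge cover has size exactly $(n+1)/2$, as noted in the paragraph preceding Lemma~\ref{lemma-for-bound-ec}. Consequently, designing a DPLS with parameter $k$ reduces to verifying that the true ring size $n$ satisfies $n\geq 2k-1$, and it suffices to give an $O(\log n)$-bit labeling from which each node can reliably learn $n$.

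My approach is to augment the standard ring-orientation labeling used for leader election in rings. The prover picks an arbitrary node $\ell$ as leader, fixes a traversal direction, and assigns to each node $v$ the label $L(v)=\langle \mathrm{id}(\ell), d(v), n\rangle$, where $d(v)\in\{0,1,\ldots,n-1\}$ is the distance from $\ell$ to $v$ in the chosen direction. The three fields together use $O(\log n)$ bits. The verifier at $v$ checks that its two neighbors agree with $v$ on both $\mathrm{id}(\ell)$ and the claimed $n$; that $d(v)=0$ precisely when $v$'s identifier equals $\mathrm{id}(\ell)$; that the two neighbors of $v$ carry the modular predecessor and successor of $d(v)$; and finally that $n$ is odd and $n\geq 2k-1$.

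Completeness is straightforward: the honest labeling above always passes these checks on a yes-instance. The main obstacle is soundness, namely showing that the prover cannot cheat about the value of $n$. I would handle this by walking along the ring starting from the unique node with $d=0$ (uniqueness being enforced by the identifier check): the modular-successor condition forces $d$-values to increment by one at each step, so after $n_*$ actual steps one returns to $\ell$; a short case analysis on whether $n_*<n$ or $n_*>n$ then derives a contradiction either with the uniqueness of $d=0$ (if $n_*>n$, a second node receives $d=0$ before closing the walk) or with the wrap-around constraint at $\ell$ (if $n_*<n$, $\ell$'s backward neighbor carries $d=n_*-1$ instead of the required $n-1$), forcing the labeled $n$ to equal the true ring size $n_*$. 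Once this identification is established, the check $n\geq 2k-1$ is equivalent to the optimal edge-cover size being at least $k$, yielding correctness. Both prover and verifier clearly run in polynomial sequential time, so the resulting scheme is the claimed sequentially efficient DPLS.
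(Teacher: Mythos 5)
Your proposal is correct, and it rests on the same key reduction as the paper's proof: in an odd ring the minimum edge cover has size exactly $\lceil n/2\rceil=(n+1)/2$, so the DPLS collapses to verifying a threshold on $n$. Where you diverge is in how that threshold is certified. The paper simply invokes the generic $(h,k)$-comparison scheme of Lemma~\ref{hk} (with the remark that this is needed precisely because nodes do not know $n$), inheriting its $O(\log n)$ proof size and sequential efficiency as a black box. You instead hand-roll a ring-specific counting gadget --- leader id, oriented distance $d(v)$ modulo the claimed $n$, and the claimed $n$ itself --- and prove soundness directly: the local successor/predecessor checks force $d$ to increment by one modulo $n$ along a consistent orientation, so the claimed $n$ must divide the true ring size $n_*$, and the requirement that $d=0$ occur exactly at the unique node whose identifier matches the leader's pins down $n=n_*$ (the count of $d=0$ nodes is $n_*/n$, which must equal one). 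This is essentially an explicit, elementary re-derivation of what the comparison scheme provides in the special case of a ring; it buys self-containedness at the cost of the case analysis you sketch, while the paper's one-line argument buys brevity and generality (the same lemma handles arbitrary decomposable sums, not just node counting). Both yield $O(\log n)$ bits and polynomial-time prover and verifier, so your scheme is a valid alternative proof.
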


\begin{proof}
	Since the size of a minimum edge cover in odd rings is $\lceil n/2\rceil$, it suffices to verify that $\lceil n/2\rceil\geq k$. This can be done by means of the $(h,k)$-comparison scheme presented in Section \ref{section:preliminaries} (note that this scheme is required as we assume that the nodes do not know the exact value of $n$).
\end{proof}

\begin{theorem}\label{bipartite-ec}
	There exists a sequentially efficient PLS for minimum edge cover in bipartite graphs with a proof size of $1$ bit.
\end{theorem}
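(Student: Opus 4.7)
The plan is to instantiate the primal dual method of Section \ref{section:primal-dual-method} with parameters $\beta = \gamma = 1$ on the LP relaxation (\ref{ec-primal}) of minimum edge cover and its dual (\ref{ec-dual}); both LPs, as well as the fact that minimum edge cover is locally verifiable, have already been established in the proof of Theorem \ref{EC-odd-girth}. It therefore suffices to show that, in bipartite graphs, the problem is $(1,1)$-fitted and that an optimal dual solution can be encoded using a single bit per node.

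To this end, I would invoke the classical fact that the vertex-edge incidence matrix of a bipartite graph is totally unimodular, which implies that every vertex of the polytopes defined by (\ref{ec-primal}) and (\ref{ec-dual}) is integral. In particular, the primal admits an integral optimum (the indicator vector of a minimum edge cover), and the dual admits an optimum $\mathbf{y}^{*} \in \{0,1\}^{n}$, which encodes a maximum independent set. Given any minimum edge cover $C$ with indicator vector $\mathbf{x}$, strong LP duality coupled with the integrality of the dual polytope yields such a $\mathbf{y}^{*}$ whose objective value matches that of $\mathbf{x}$; since both are LP-optima, ordinary complementary slackness between them automatically holds, which is precisely the $(1,1)$-fittedness condition.

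The prover therefore computes in polynomial time, via maximum bipartite matching and the standard K\"onig--Egerv\'ary construction, a minimum edge cover $C$ and a corresponding maximum independent set $\mathbf{y}^{*}$, and assigns each node $v \in V$ the single-bit label $L(v) = y^{*}_{v}$. Following the primal dual method, the verifier at $v$ uses $L(v)$, the labels of its neighbors, and the proposed cover to check that (i) $v$ is incident on some edge of $C$; (ii) $y_{u} + y_{v} \leq 1$ for every incident edge $(u,v)$; (iii) $y_{u} + y_{v} = 1$ for every edge of $C$ incident on $v$; and (iv) if $y_{v} = 1$, then $v$ is incident on exactly one edge of $C$. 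Sequential efficiency is immediate on both sides, and correctness follows from the correctness of the primal dual method.

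The main obstacle I would be careful to justify is the integrality of the dual polytope (\ref{ec-dual}) for bipartite graphs: without it one would be forced to use values in $\{0, \tfrac{1}{2}, 1\}$, as in the non-bipartite case of Theorem \ref{EC-odd-girth}, and the proof size would exceed one bit. Everything else is essentially bookkeeping around a textbook application of LP duality.
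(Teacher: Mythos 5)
Your proposal is correct and follows essentially the same route as the paper: both instantiate the primal dual method on the LP relaxation (\ref{ec-primal}) and its dual (\ref{ec-dual}), use the integrality of the primal and dual optima in bipartite graphs (which the paper simply asserts as well known and you justify via total unimodularity) to get $(1,1)$-fittedness with a $\{0,1\}$-valued dual, and let the prover compute the integral dual optimum in polynomial time as the one-bit labels. Your additional details (the K\"onig--Egerv\'ary construction and the explicit list of verifier checks) are consistent with, and merely flesh out, the paper's terser argument.
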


\begin{proof}
	Consider the primal and dual LPs formulated in (\ref{ec-primal}) and (\ref{ec-dual}), respectively. It is well known that for the instance of bipartite graphs $G=(V,E)$ the primal and dual LPs admit optimal integral solutions $\mathbf{x}\in\{0,1\}^m$ and $\mathbf{y}\in \{0,1\}^n$, respectively. Moreover, the optimal integral dual solution $\mathbf{y}$ can be obtained by means of a sequential polynomial time algorithm and encoded by means of the 1 bit label $L(v)=y_{v}$ for each $v\in V$. To complete the design of this PLS we simply use the construction of a $1$-APLS for this locally verifiable $(1,1)$-fitted problem by means of the primal dual method.  
\end{proof}

\subsection{Maximum $b$-Matching}
\label{section:maximum-bmatching}
Consider a graph $G=(V,E)$ associated with a function $b:V\rightarrow
\{1,\dots,W\}$.
A \emph{$b$-matching} is a mapping $\mu:E\rightarrow\{0,1,\dots,W\}$ that satisfies $\sum_{e:v\in e}\mu(e)\leq b(v)$ for each $v\in V$. A \emph{maximum $b$-matching} is a $b$-matching $\mu$ that maximizes $\sum_{e\in E}\mu(e)$.

Given a $b$-matching $\mu$ in graph $G$, a node $v\in V$ is said to be \emph{matched} if  $\sum_{e:v\in e}\mu(e)= b(v)$; otherwise it is said to be \emph{available}. An \emph{alternating} path is a simple path $P=\{e_{1}=(u,u'),\dots ,e_{\ell}=(v',v)\}$ between an available node $u\in V$ and a node $v\in V$ that satisfies $\mu(e_i)>0$ for all even $i\in [\ell]$. We define $alt(v)$ to be the length of a shortest alternating path ending in $v$, defined to be $\infty$ if no such path exists, for each $v\in V$. In particular, $alt(v)=0$ if and only if $v$ is available.
\begin{lemma}\label{alternating}
	Given a $b$-matching $\mu$ and a node $u\in V$, if $alt(u)$ is even, then for any node $v\in N(u)$, it holds that $alt(v)\leq alt(u)+1$.
\end{lemma}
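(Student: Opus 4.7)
The plan is to mirror the proof of Lemma~\ref{interchanging} almost verbatim, since the roles of the two statements are completely analogous: interchanging paths with parity-dependent $C$-membership correspond to alternating paths with parity-dependent positivity of $\mu$, and the hypothesis that $int(u)$ is odd (forcing the last edge to lie in $C$) is mirrored by the hypothesis that $alt(u)$ is even (forcing the last edge to satisfy $\mu(e_\ell)>0$). The goal in both cases is to extend or truncate a shortest such path in order to reach a neighbor $v$ of $u$ using at most one more edge.

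Concretely, I would fix a shortest alternating path $P=\{e_1,\dots,e_\ell\}$ ending at $u$, with $\ell=alt(u)$ even, and split into two cases for a neighbor $v\in N(u)$. First, if $v$ already appears on $P$, say at position $k<\ell$, then the prefix $\{e_1,\dots,e_k\}$ is itself an alternating path ending at $v$ (the condition $\mu(e_i)>0$ for even $i$ is inherited by any prefix starting at the available endpoint), yielding $alt(v)\leq k\leq \ell-1<alt(u)+1$. Second, if $v$ does not appear on $P$, then $P'=P\cup\{(u,v)\}$ is a simple path of length $\ell+1$ ending at $v$. The new edge $(u,v)$ sits at position $\ell+1$, which is odd, so it is exempt from the alternating condition, and $P'$ is alternating; hence $alt(v)\leq \ell+1=alt(u)+1$.

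There is no real obstacle here: the key observation is simply that evenness of $alt(u)$ places the appended edge at an odd index, where the definition of alternating path imposes no constraint, so the extension is always legal. This is exactly the same trick used in Lemma~\ref{interchanging} via the tightness of $u$, translated to the matching setting.
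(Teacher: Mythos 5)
Your proof is correct and takes essentially the same approach as the paper's: truncate $P$ to a prefix if $v$ already lies on it, and otherwise append $(u,v)$, which sits at the odd position $\ell+1$ and is therefore unconstrained by the alternating condition. You in fact spell out the parity and simplicity details slightly more explicitly than the paper does.
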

\begin{proof}
	Let $P$ be an alternating path of length $alt(u)$ ending in $u$. Clearly, if a node $v\in N(u)$ precedes $u$ in $P$, then it follows that $alt(v)< alt(u)<alt(u)+1$; otherwise the path $P'=P\cup \{(u,v)\}$ is an alternating path and thus $alt(v)\leq alt(u)+1$.  
\end{proof}
An \emph{augmenting} path is an alternating path $P=\{e_{1}=(u,u'),\dots ,e_{\ell}=(v',v)\}$ of odd length between two available nodes $u,v\in V$.
\begin{lemma}\label{augmenting}
	If $\mu$ is a maximum $b$-matching in a graph $G=(V,E)$, then there are no augmenting paths in $G$.
\end{lemma}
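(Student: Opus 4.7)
The plan is to prove this by contradiction: assume $\mu$ is a maximum $b$-matching and that an augmenting path $P=\{e_1=(u,u'),\dots,e_\ell=(v',v)\}$ of odd length $\ell$ exists between two available nodes $u,v\in V$. I will then construct a new $b$-matching $\mu'$ with strictly larger total weight, contradicting the maximality of $\mu$.

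The construction is the standard alternating-path flip: define $\mu'(e)=\mu(e)$ for every $e\notin P$, $\mu'(e_i)=\mu(e_i)+1$ for odd $i\in[\ell]$, and $\mu'(e_i)=\mu(e_i)-1$ for even $i\in[\ell]$. Since $\ell$ is odd, the path contains $(\ell+1)/2$ odd-indexed edges and $(\ell-1)/2$ even-indexed edges, so $\sum_{e\in E}\mu'(e)=\sum_{e\in E}\mu(e)+1$, giving the desired contradiction once $\mu'$ is shown to be a valid $b$-matching.

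To verify that $\mu'$ is a $b$-matching, I will check two things: that $\mu'(e)\in\{0,1,\dots,W\}$ for every edge, and that $\sum_{e:w\in e}\mu'(e)\leq b(w)$ for every node $w$. For the node constraints, I split on cases: if $w$ is not on $P$, the sum is unchanged; if $w$ is internal to $P$, exactly two path edges are incident on $w$, one with odd index and one with even index, so the $+1$ and $-1$ cancel and the sum is again unchanged; if $w\in\{u,v\}$ is an endpoint, exactly one path edge is incident on $w$ (namely $e_1$ or $e_\ell$, both of odd index), so the sum grows by $1$, and this is absorbed by the slack $\sum_{e:w\in e}\mu(e)<b(w)$ guaranteed by $w$ being available. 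For the edge bounds, even-indexed edges have $\mu(e_i)>0$ by the definition of an alternating path, so $\mu'(e_i)\geq 0$; the delicate case is that odd-indexed edges must satisfy $\mu(e_i)+1\leq W$.

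The main obstacle is precisely this upper-bound check for odd-indexed path edges. I will handle it by observing that every odd-indexed edge $e_i$ shares an endpoint $w$ with an adjacent even-indexed path edge (either $e_{i-1}$ or $e_{i+1}$, taking $e_2$ for $e_1$ and $e_{\ell-1}$ for $e_\ell$). At that shared endpoint $w$, the node constraint gives $\mu(e_i)+\mu(e_{\text{even}})\leq\sum_{e:w\in e}\mu(e)\leq b(w)\leq W$, and since $\mu(e_{\text{even}})\geq 1$ we conclude $\mu(e_i)\leq W-1$, so $\mu'(e_i)\leq W$. Combined with the node-constraint analysis above, $\mu'$ is a $b$-matching of strictly larger weight than $\mu$, contradicting maximality and completing the proof.
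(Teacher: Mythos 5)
Your proof follows exactly the same route as the paper's: flip the augmenting path by adding $1$ on odd-indexed edges and subtracting $1$ on even-indexed ones, then use the availability of the two endpoints to conclude feasibility and derive the contradiction. The paper's version simply asserts feasibility of $\mu'$ in one line, whereas you verify it in detail; your extra care is welcome, but your argument for the upper bound $\mu'(e_i)\leq W$ on odd-indexed edges breaks in the case $\ell=1$, where the augmenting path is a single edge between two available nodes and there is no adjacent even-indexed edge to borrow a unit from. That case is easily patched using the availability slack at an endpoint ($\mu(e_1)\leq b(u)-1\leq W-1$); in fact the whole upper-bound check is redundant, since $b(w)\leq W$ means the node constraints you already verified, together with $\mu'(e)\geq 0$, force $\mu'(e)\leq W$ for every edge.
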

\begin{proof}
	Assume towards a contradiction that $\mu$ is a maximum $b$-matching and there exists an augmenting path $P=\{e_1=(u,u'),\dots e_\ell=(v',v)\}$ between two available nodes $u,v\in V$. Let $P_{odd}=\{e_i\in P\mid i=1,3\dots \ell\}$ and let $P_{even}=\{e_i\in P\mid i=2,4\dots \ell-1\}$. Consider the mapping $\mu'$ obtained by setting $\mu'(e)=\mu(e)$ for each $e\in E\setminus P$; $\mu'(e)=\mu(e)+1$ for each $e\in P_{odd}$; and $\mu'(e)=\mu(e)-1$ for each $e\in P_{even}$. Since $u$ ad $v$ are available in $\mu$, we get that $\mu'$ is a (feasible) $b$-matching that satisfies $\sum_{e\in E}\mu'(e)=\sum_{e\in E}\mu(e)+1$ which contradicts $\mu$ being a maximum $b$-matching.
\end{proof}
\begin{theorem}\label{bmatching-odd-girth}
	For every $\kappa\in\mathbb{Z}_{>0}$, there exists a sequentially efficient $\frac{\kappa+1}{\kappa}$-APLS for $b$-matching in graphs of odd-girth at least $2\kappa+1$ with a proof size of $\lceil \log (\kappa+1)\rceil$ bits.
\end{theorem}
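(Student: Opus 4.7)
The plan is to apply the primal dual method of Section~\ref{section:primal-dual-method}, mirroring the proof of Theorem~\ref{EC-odd-girth}. The LP relaxation of maximum $b$-matching, namely $\max\{\sum_{e \in E} x_e \,:\, \sum_{e : v \in e} x_e \leq b(v)\text{ for all }v,\; x_e \geq 0\}$, has dual $\min\{\sum_{v \in V} b(v) y_v \,:\, y_u + y_v \geq 1\text{ for all }(u,v) \in E,\; y_v \geq 0\}$, and is locally verifiable since each primal variable corresponds to an edge and each primal constraint to a node. Given the $b$-matching $\mu$ encoded in the output assignment, set $U_{even} = \{v \in V : alt(v) < \kappa \text{ and } alt(v) \text{ even}\}$ and $U_{odd} = \{v \in V : alt(v) < \kappa \text{ and } alt(v) \text{ odd}\}$. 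The prover computes $alt(v)$ for every $v \in V$ in sequential polynomial time and assigns the dual value $y_v = \lfloor alt(v)/2 \rfloor / \kappa$ if $v \in U_{even}$; $y_v = 1 - \lfloor alt(v)/2 \rfloor / \kappa$ if $v \in U_{odd}$; and $y_v = \lceil \kappa/2 \rceil / \kappa$ otherwise. Each $y_v$ is one of the $\kappa + 1$ values $\{j/\kappa : 0 \leq j \leq \kappa\}$, encodable in $\lceil \log(\kappa+1) \rceil$ bits.

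The main obstacle is verifying dual feasibility $y_u + y_v \geq 1$ on every edge $(u,v) \in E$. The only dangerous configuration is when both endpoints lie in $U_{even}$ (since then each $y$-value is small); I rule this out combinatorially, in direct analogy with the inflating-path argument used for Theorem~\ref{EC-odd-girth}. Concatenating the shortest alternating paths $P_u$ (from an available $a_u$ to $u$) and $P_v$ (from an available $a_v$ to $v$) via the edge $(u,v)$ yields an alternating walk between $a_u$ and $a_v$ of odd length $alt(u) + alt(v) + 1 \leq 2\kappa - 1$. Standard cleanup of this walk (iteratively removing repeated vertices) produces either a simple augmenting path, contradicting Lemma~\ref{augmenting}, or an odd cycle of length $< 2\kappa + 1$, contradicting the odd-girth assumption. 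The remaining edge cases are handled by direct substitution into the formula, using Lemma~\ref{augmenting} to force $alt(v) = 1$ whenever $alt(u) = 0$, and Lemma~\ref{alternating} to bound $alt(v) \leq alt(u) + 1$ whenever $alt(u)$ is even.

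For primal relaxed complementary slackness with $\beta = (\kappa+1)/\kappa$, an extension argument (extending the shortest alt path to $u$ through the matched edge $(u,v)$) shows that any edge with $\mu(u,v) > 0$ satisfies $|alt(u) - alt(v)| \leq 1$, and the same walk-concatenation argument rules out equal-parity $alt$-values on such an edge. Thus the two $alt$-values differ by exactly one, and plugging this into the formula gives $y_u + y_v \leq (\kappa+1)/\kappa$. Dual relaxed complementary slackness holds with $\gamma = 1$ because $y_v > 0$ iff $alt(v) \geq 1$, i.e., iff $v$ is matched. Combined with the correctness of the primal dual method, this establishes the desired $\frac{\kappa+1}{\kappa}$-APLS; sequential efficiency follows since $alt$ is computable in polynomial time and the verifier performs only local arithmetic.
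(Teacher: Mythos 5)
Your proposal is essentially the paper's proof: the same primal dual scheme with the same LP relaxation and dual, an identical dual assignment (your floor-based formula coincides with the paper's values $alt(v)/2\kappa$, $1-(alt(v)-1)/2\kappa$, $\lceil\kappa/2\rceil/\kappa$), the same use of Lemma~\ref{alternating}, Lemma~\ref{augmenting} and the odd-girth bound via the walk-concatenation argument, and the same parameters $\beta=(\kappa+1)/\kappa$, $\gamma=1$ with labels from $\kappa+1$ values. The only imprecision is the claim that every edge with $\mu(u,v)>0$ has $alt$-values differing by exactly one --- this fails when both endpoints have $alt\geq\kappa$ (or $\infty$), since the concatenated walk is then too long to contradict the odd-girth; but in that regime both labels take the default value $\lceil\kappa/2\rceil/\kappa$ and the bound $y_u+y_v\leq(\kappa+1)/\kappa$ holds trivially, which is exactly the residual case the paper checks separately.
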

\begin{proof}
	We provide a $\frac{\kappa+1}{\kappa}$-APLS by means of the primal dual method. Consider the following LP relaxation for the maximum $b$-matching problem
	\begin{equation}
	\begin{array}{rrclcl}\label{bmatching-primal}
	\displaystyle \max & \multicolumn{3}{l}{\underset{e\in E}{\sum} x_e} \\
	\textrm{s.t.} & \underset{e:v\in e}{\sum} x_e & \leq & b(v), & &  v \in V\\
	& x_e & \geq & 0 ,& &  e \in E \\
	\end{array}
	\end{equation}
	and its dual LP
	\begin{equation}
	\begin{array}{rrclcl}\label{bmatching-dual}
	\displaystyle \min & \multicolumn{3}{l}{\underset{v\in V}{\sum} b(v)\cdot y_v} \\
	\textrm{s.t.} & y_u+y_v & \geq & 1, & &  (u,v) \in E\\
	& y_v & \geq & 0 ,& &  v \in V .\\
	\end{array}
	\end{equation}
	Notice that the minimum edge cover problem is locally verifiable since each primal variable is mapped to an edge $e\in E$ and each primal constraint is mapped to a node $v\in V$.
	
	Given the proposed $b$-matching $\mu:E\rightarrow\{0,1,\dots,W\}$ derived from the output assignment of the given IO graph $\gio$, let $U_{even}=\{v\in V\mid alt(v)<\kappa\  \land\  alt(v)\mod 2=0\}$ and let $U_{odd}=\{v\in V\mid alt(v)<\kappa\  \land\  alt(v)\mod 2=1\}$. The prover obtains a dual solution $\mathbf{y}\in \{0,1/\kappa,\dots,(\kappa-1)/\kappa,1\}^n$ as follows: for each $v\in V$, set $y_{v}= alt(v)/2\kappa$ if $v\in U_{even}$; $y_{v}=1-(alt(v)-1)/2\kappa$ if $v\in U_{odd}$; and $y_{v}=(\lceil{\kappa/2}\rceil)/\kappa$ otherwise.
	
	Assuming that $\mu$ is a maximum $b$-matching we show that $\mathbf{y}$ is a feasible dual solution. Consider some edge $(u,v)\in E$. If $u,v\notin U_{even}$, then $y_{u},y_{v}\geq 1/2$ and thus $(u,v)$ does not violate the feasibility of $\mathbf{y}$. Consider the case that at least one endpoint of $(u,v)$ is in $U_{even}$ and assume w.l.o.g.\  that $u\in U_{even}$. Lemma \ref{augmenting} states that there are no augmenting paths in $G$, which combined with the fact that all odd cycles are larger than $2\cdot alt(u)+1$ implies that $alt(v)$ is odd. Furthermore, from Lemma \ref{alternating} we get that $alt(v)\leq alt(u)+1$. It follows that $y_{u}$ and $y_{v}$ satisfy their dual feasibility constraint since $y_{u}+y_{v}\geq alt(u)/2\kappa+1-alt(u)/2\kappa=1$ if $alt(u)<\kappa-1$; and $y_{u}+y_{v}\geq (\kappa-1)/2\kappa+(\kappa+1)/2\kappa=1$ if $alt(u)=\kappa-1$. 
	
	Let $\mathbf{x}\in \{0,1\}^m$ be the primal solution that represents $\mu$. We show that if $\mu$ is a maximum $b$-matching, then the primal relaxed complementary slackness conditions hold with the parameter $\beta=(\kappa+1)/\kappa$. Consider some node $u\in V$. One of the following three cases apply for each edge $(u,v)\in E$ that satisfy $x_{u,v}>0$: (1) if $u\in U_{odd}$, then $alt(v)$ is even and $alt(v)\leq alt(u)+1$; (2) if $u\in U_{even}$, then $alt(v)\in \{alt(u)-1,alt(u)+1\}$; and (3) if $u\notin U_{odd} \cup U_{even}$, then $alt(u)\geq \kappa$ and $alt(v)\geq \kappa-1$. We observe that for each of these cases, $u$ and $v$ satisfy $y_{u}+y_{v}\leq (\kappa+1)/\kappa$, thus the primal relaxed complementary slackness conditions are met with the parameter $\beta=(\kappa+1)/\kappa$.
	
	Regarding the dual relaxed complementary slackness conditions, $\mathbf{y}$ is defined so that if $y_{v}>0$, then $\sum_{e:v\in e}{x_{e}}=b(v)$ for each node $v\in V$. This implies that the dual relaxed complementary slackness conditions are satisfied with the parameter $\gamma=1$.
	
	We note that the sequential runtime of both the prover and verifier is polynomial and that the labels assigned by the prover are taken from a range of $\kappa+1$ values, which, combined with the correctness of the primal dual method, completes our proof. 
\end{proof}

The following corollary follows directly from Theorem \ref{bmatching-odd-girth} by setting $\kappa=1$.
\begin{corollary}
	There exists a sequentially efficient $2$-APLS for maximum $b$-matching with a proof size of $1$ bit.
\end{corollary}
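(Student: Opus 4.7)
The plan is simply to instantiate Theorem~\ref{bmatching-odd-girth} with the parameter $\kappa = 1$. With this choice, the approximation guarantee $\frac{\kappa+1}{\kappa}$ evaluates to $2$, and the promised proof size $\lceil \log(\kappa+1)\rceil$ evaluates to $\lceil \log 2\rceil = 1$ bit, matching the claimed bounds in the corollary exactly.

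The only thing that needs to be verified is that the hypothesis ``odd-girth at least $2\kappa+1$'' in Theorem~\ref{bmatching-odd-girth} becomes vacuous at $\kappa = 1$. In that case the condition reads ``odd-girth at least $3$'', which is trivially satisfied by every (simple) graph: either the graph contains an odd cycle, in which case the shortest such cycle has length at least $3$, or the graph is bipartite and contains no odd cycle at all, in which case the odd-girth is conventionally taken to be $+\infty$. Hence the scheme delivered by Theorem~\ref{bmatching-odd-girth} at $\kappa = 1$ applies uniformly to the class of all graphs, which is exactly the unconditional setting stated in the corollary.

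Sequential efficiency is inherited directly from Theorem~\ref{bmatching-odd-girth}, which explicitly notes that both its prover and verifier run in polynomial sequential time; specializing the parameter $\kappa$ to a constant does not affect this. I do not anticipate any real obstacle here: the corollary is a direct parameter specialization of the preceding theorem, and the only subtle point---confirming that the odd-girth condition is automatically met when $\kappa = 1$---is dispatched by the observation above.
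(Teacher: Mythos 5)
Your proposal is correct and matches the paper exactly: the corollary is obtained by setting $\kappa=1$ in Theorem~\ref{bmatching-odd-girth}, giving approximation ratio $2$ and proof size $\lceil\log 2\rceil=1$. Your additional observation that the odd-girth condition ``at least $3$'' is vacuous for every simple graph is a correct justification of a point the paper leaves implicit.
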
 

We go on to show that this proof size is optimal by obtaining a lower bound that applies even for the case of \emph{maximum matching}, i.e., maximum $b$-matching with $b(v)=1$ for each $v\in V$. To that end, we use the structure of an odd ring graph. As established by G{\"o}{\"o}s and Suomela \cite[Section 5.4]{lcp}, a PLS for maximum matching in odd rings requires a proof size of $\Omega (\log n)$. This leads us to the following result.
\begin{theorem}\label{bmatching-bound}
	The proof size of a $\frac{\kappa+1}{\kappa}$-APLS for maximum matching in graphs of odd-girth at least $2\kappa+1$ is $\Omega (\log \kappa)$.
\end{theorem}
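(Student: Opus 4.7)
The plan is to follow the template of the proof of Theorem~\ref{bound_for_apls_ec}: reduce from the G\"o\"os--Suomela $\Omega(\log n)$ lower bound on PLSs for maximum matching in odd rings (stated immediately before the theorem) by showing that any $\frac{\kappa+1}{\kappa}$-APLS for maximum matching in graphs of odd-girth at least $2\kappa+1$ behaves as an exact PLS once its inputs are restricted to $(2\kappa+1)$-sized rings.

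Concretely, suppose toward contradiction that such an APLS exists with proof size $o(\log \kappa)$. Consider the ring $C_{2\kappa+1}$, whose odd-girth is exactly $2\kappa+1$, so the APLS applies. In $C_{2\kappa+1}$ a maximum matching has size $\kappa$, and any non-maximum matching has size at most $\kappa - 1$. By the MaxDGP definition of a $\frac{\kappa+1}{\kappa}$-APLS, the verifier must reject every matching $\mu$ with $|\mu| < OPT/\alpha = \kappa^2/(\kappa+1) = \kappa - 1 + 1/(\kappa+1)$. Since matching sizes are integers, this excludes all matchings of size $\leq \kappa - 1$, i.e., all non-maximum matchings. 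Hence, restricted to $(2\kappa+1)$-rings, the APLS is an exact PLS for maximum matching with proof size $o(\log \kappa) = o(\log n)$, contradicting the $\Omega(\log n)$ bound of \cite[Section 5.4]{lcp}.

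Unlike the analogous reduction for minimum edge cover (Lemma~\ref{lemma-for-bound-ec}), no auxiliary two-candidate leader election gadget is needed here: the approximation gap for maximum matching cleanly separates the integer optimum from every integer suboptimal value, whereas for minimum edge cover the suboptimal value $\kappa + 2$ satisfies $\kappa + 2 < (\kappa+1)^2/\kappa$ and thus forces the extra constant-size gadget. The only point to check is this gap arithmetic, which I regard as the main (and quite minor) obstacle; once it is verified, the reduction and the contradiction follow immediately.
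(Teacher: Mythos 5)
Your proposal is correct and matches the paper's own proof: both reduce to the exact PLS lower bound for maximum matching in odd rings by observing that in a $(2\kappa+1)$-ring the optimum is $\kappa$ and $\frac{\kappa}{\kappa+1}\cdot\kappa>\kappa-1$, so the integrality of matching sizes forces the APLS to reject every suboptimal matching. Your side remark about why the edge-cover case needs the extra leader-election gadget while this case does not is accurate, but the core argument is the same as the paper's.
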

\begin{proof}
	Consider a $(2\kappa+1)$-sized ring $G=(V,E)$ and note that the size of a maximum matching in $G$ is $\kappa$. It follows that applying a $\frac{\kappa+1}{\kappa}$-APLS  for maximum matching in a $(2\kappa+1)$-sized ring results in a PLS for maximum matching since $\frac{\kappa}{\kappa+1}\cdot \kappa>\kappa-1$. This means that a $\frac{\kappa+1}{\kappa}$-APLS for maximum matching in graphs of odd-girth at least $2\kappa+1$ with a proof size of $o(\log \kappa)$ is not possible as it would imply a PLS for maximum matching in odd rings with a proof size of $o(\log n)$. 
\end{proof}
\begin{theorem}\label{bipartite-bmatching}
	There exists a sequentially efficient PLS for maximum $b$-matching in bipartite graphs with a proof size of $1$ bit.
\end{theorem}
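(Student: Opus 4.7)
The plan is to mirror the proof of Theorem \ref{bipartite-ec} and obtain the desired PLS as the special case $\alpha = 1$ of the primal dual construction of Section \ref{section:primal-dual-method}, applied to the LP relaxation (\ref{bmatching-primal}) and its dual (\ref{bmatching-dual}). As already observed in the proof of Theorem \ref{bmatching-odd-girth}, each primal variable lives on an edge and each primal constraint on a node, so the problem is locally verifiable. What remains is to argue that, when restricted to bipartite graphs, maximum $b$-matching is $(1,1)$-fitted and that the prover can exhibit a dual witness whose entries are encodable by a single bit each.

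For the fitting argument I would invoke the fact that the node-edge incidence matrix of a bipartite graph is totally unimodular; since the right-hand side vector $(b(v))_{v \in V}$ is integral, the primal LP (\ref{bmatching-primal}) admits an integral optimal solution corresponding to an integer $b$-matching $\mu$, and the dual LP (\ref{bmatching-dual}), whose constraint matrix is the transpose of the same bipartite incidence matrix, admits an integral optimum $\mathbf{y} \in \{0,1\}^{n}$ (this is the weighted K{\"o}nig/vertex cover duality specialized to node weights $b(\cdot)$). Such a $\mathbf{y}$ can be computed centrally in polynomial time, e.g.\ via a max-flow reduction, producing a minimum $b(\cdot)$-weighted vertex cover. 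Strong LP duality then yields $\sum_{e} x_{e} = \sum_{v} b(v) y_{v}$, which immediately forces the complementary slackness conditions $x_{e}(1 - y_{u} - y_{v}) = 0$ for every edge $e = (u,v)$ and $y_{v}(b(v) - \sum_{e \colon v \in e} x_{e}) = 0$ for every node $v$. These are precisely the relaxed complementary slackness conditions with parameters $\beta = \gamma = 1$, so the problem is $(1,1)$-fitted on bipartite inputs.

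With this dual witness in hand, the prover assigns each node $v$ the $1$-bit label $L(v) = y_{v} \in \{0,1\}$, and the verifier at node $u$ reads $b(u)$ and the values $\mu(e)$ on incident edges from its local input and output, $y_{u}$ from its label, and the labels $y_{v}$ of its neighbors from $L^{N}(u)$, then checks the four conditions of the primal dual method: (i) $\sum_{e \colon u \in e} \mu(e) \leq b(u)$; (ii) $y_{u} + y_{v} \geq 1$ for every incident edge $(u,v)$; (iii) $\mu(e) > 0$ implies $y_{u} + y_{v} = 1$; and (iv) $y_{u} > 0$ implies $\sum_{e \colon u \in e} \mu(e) = b(u)$. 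Correctness then follows verbatim from the analysis of the primal dual method in Section \ref{section:primal-dual-method}, and sequential efficiency follows from the polynomial runtime of the dual-witness computation on the prover side and the $O(\deg(u))$ local checks on the verifier side.

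The one nontrivial ingredient is the existence of an optimal 0/1 dual solution, which is exactly where bipartiteness is used; on non-bipartite graphs the dual admits only half-integral optima in the same LP framework, which is precisely the source of the $\lceil \log(\kappa+1) \rceil$ proof size in Theorem \ref{bmatching-odd-girth}. Bipartiteness collapses this range to $\{0,1\}$ and yields the claimed $1$-bit proof size.
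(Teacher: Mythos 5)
Your proposal is correct and follows essentially the same route as the paper's proof: applying the primal dual method as a $1$-APLS to the LP (\ref{bmatching-primal}) and its dual (\ref{bmatching-dual}), using the fact that bipartite instances admit integral primal and $\{0,1\}$-valued dual optima (hence the problem is $(1,1)$-fitted) and encoding the dual witness as a $1$-bit label. The only difference is that you spell out the total unimodularity/K{\"o}nig-type argument and the explicit verifier checks, which the paper states as well known and delegates to the generic method.
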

\begin{proof}
	Consider the primal and dual LPs formulated in (\ref{bmatching-primal}) and (\ref{bmatching-dual}), respectively. For the instance of bipartite graphs $G=(V,E)$, the primal and dual LPs admit optimal integral solutions $\mathbf{x}\in\{0,1\}^m$ and $\mathbf{y}\in \{0,1\}^n$, respectively. Moreover, the optimal integral dual solution $\mathbf{y}$ can be obtained by means of a sequential polynomial time algorithm and encoded by means of the 1 bit label $L(v)=y_{v}$ for each $v\in V$. To complete the design of this PLS we simply use the construction of a $1$-APLS for this locally verifiable $(1,1)$-fitted problem by means of the primal dual method.  
\end{proof}

\subsection{Minimum Weight Vertex Cover}
\label{section:min-weight-vc}
Consider a graph $G=(V,E)$ associated with a node-weight function $w:V\rightarrow\{1,\dots W\}$. A \emph{vertex cover} is a subset $U\subseteq V$ of nodes such that every edge $e\in E$ has at least one endpoint in $U$. A \emph{minimum weight vertex cover} is a vertex cover $U$ that minimizes $w(U)=\sum_{u\in U}w(u)$. 
\begin{theorem}\label{vc-adpls}
	There exists a sequentially efficient $2$-ADPLS for minimum weight vertex cover with a proof size of $O(\log n+\log W)$.
\end{theorem}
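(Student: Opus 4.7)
The plan is to invoke the VCA method from Section~\ref{section:centralized-verifiable-approximation-method}. Minimum weight vertex cover is an identified decomposable MinDGP: encoding an output assignment by the indicator $O(v)\in\{0,1\}$ of the selected cover $A\subseteq V$, the natural decomposition $\lambda(\msfi(v),\msfo(v))=w(v)\cdot O(v)$ satisfies $\sum_{v\in V}\lambda(\msfi(v),\msfo(v))=w(A)=f(\gio)$, and the image of $\lambda$ is representable in $H=O(\log W)$ bits.

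For the centralized algorithm $\mathtt{ALG}$ I would take the Bar-Yehuda--Even primal-dual $2$-approximation, which runs in sequential polynomial time and produces, on input $(G,\msfi)$, a vertex cover $A\subseteq V$ together with a nonnegative dual assignment $y:E\to\mathbb{Q}_{\geq 0}$ that is dual-feasible, $\sum_{e\ni v}y_e\leq w(v)$ for every $v\in V$, and tight on $A$, $\sum_{e\ni v}y_e=w(v)$ for every $v\in A$. Combined with $w(A)\geq OPT_\Psi(G,\msfi)$ (trivially, since $A$ is a vertex cover), the weak LP duality bound
\[
w(A)\;=\;\sum_{v\in A}\sum_{e\ni v}y_e\;\leq\;2\sum_{e\in E}y_e\;\leq\;2\cdot OPT_\Psi(G,\msfi)
\]
shows that the indicator of $A$ is a decomposable $2$-approximation in the sense required by the VCA method.

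The $L_{\mathtt{approx}}(v)$ field encodes $A(v)$ together with a local view of the dual certificate, placed so that the verifier at $v$ can read off the value $y_e$ of each edge $e$ incident on $v$ either from its own label or from the corresponding neighbor's label (indexed by the port number of $e$). The verifier at $v$ then performs four local checks: (a) each incident edge has at least one endpoint in $A$; (b) $\sum_{e\ni v}y_e\leq w(v)$; (c) equality in (b) holds exactly when $v\in A$; and (d) the $y_e$ values advertised at the two endpoints of every edge agree. Conditions (a)--(d) precisely certify that the encoded pair $(A,y)$ is a legitimate Bar-Yehuda--Even output, hence $\mathsf{A}=\mathbf{1}_A$ is a decomposable $2$-approximation. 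The comparison task of the VCA method is then realized by instantiating Lemma~\ref{hk} with $h=\lambda$ and the ADPLS parameter $k$, contributing $O(\log n+H)=O(\log n+\log W)$ bits.

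The main obstacle is fitting $|L_{\mathtt{approx}}(v)|$ into $O(\log n+\log W)$ bits, since naively listing $y_e$ separately for every incident edge would cost $\Omega(\deg(v)\cdot\log W)$ in dense graphs. I would resolve this by running Bar-Yehuda--Even so that each $v\in A$ has a single triggering edge $e_v$ that pushes $v$ to tightness, and storing only the $O(\log n)$-bit port index of $e_v$ together with the scalar $\sigma_v=\sum_{e\ni v}y_e\in\{0,1,\dots,W\}$ at each node. The verifier can then reconstruct the nonzero $y_e$ values using the triggering-edge pointers to orient the dual mass and propagate it along the two-endpoint agreement condition (d), while (b) and (c) are checked using only $\sigma_v$ and $w(v)$. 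This compresses the approximation certificate to $O(\log n+\log W)$ bits per node and completes the construction.
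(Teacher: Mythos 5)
Your overall route is the same as the paper's: invoke the VCA method with the natural decomposition $\lambda(\msfi(v),\mathsf{A}(v))=w(v)\cdot\mathbf{1}[v\in A]$, run a Bar-Yehuda--Even/local-ratio $2$-approximation as the centralized algorithm, certify its output with a per-node triggering-edge pointer plus one scalar, and finish with the $(\lambda,k)$-comparison scheme of Lemma~\ref{hk}. The paper's proof does exactly this (its algorithm records, for each $a$ added to the cover, the other endpoint $c(a)$ of the triggering edge and the residual weight $w'(a)$, and the verifier checks the accounting identity $w(v)=w'(v)+\sum_{u:c(u)=v}w'(u)$ together with the covering condition). Your approximation analysis via dual fitting is correct.

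However, your final compression step has a genuine gap: the scalar you propose to store, $\sigma_v=\sum_{e\ni v}y_e$, does not determine the individual edge duals, so the verifier cannot ``reconstruct the nonzero $y_e$ values'' as you claim, and check (d) becomes vacuous. Concretely, for $v\in A$ tightness forces $\sigma_v=w(v)$, so $\sigma_v$ carries no information beyond membership in $A$; a cheating prover can then declare an arbitrary vertex cover $A$, set $\sigma_v=w(v)$ for $v\in A$ and $\sigma_v=0$ otherwise, and pass (a)--(c), while (d) has no per-edge values to compare. Since the soundness bound $w(A)=\sum_{v\in A}\sigma_v\le 2\sum_e y_e\le 2\cdot OPT$ is only valid if the advertised node sums genuinely arise from a single consistent nonnegative edge vector $y$, the scheme as written is not sound. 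The fix is to store, at each $v\in A$, the dual increment $y_{e_v}$ placed on its own triggering edge (equivalently, $v$'s residual weight at the moment it entered the cover) rather than $\sigma_v$: since every processed edge is the triggering edge of exactly one endpoint, each node $z$ can then recover $y_{(u,z)}$ for every incident edge from its own label and its neighbors' labels, and verify feasibility and tightness locally. This is precisely the $\langle c(v),w'(v)\rangle$ encoding used in the paper, and it keeps the label at $O(\log n+\log W)$ bits.
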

\sloppy
\begin{proof}
	We provide a 2-ADPLS by means of the VCA method (we assume that the nodes are assigned with unique ids represented using $O(\log n)$ bits). For the approximation task, consider the following 2-approximation algorithm (this algorithm resembles the one presented in \cite[Chapter 2]{vazirani}):
	\begin{enumerate}
		\item 
		Initialization: $U\leftarrow\emptyset$; $w'(v)\leftarrow w(v), c(v)\leftarrow\bot,\ \forall v\in V$.
		\item While $U$ is not a vertex cover do:
		\begin{enumerate}
			\item arbitrarily choose an edge $(u,v)\in E$ such that $u,v\notin U$.
			\item $a\leftarrow \arg \min \{w'(u),w'(v)\}$
			\item $b\leftarrow \{u,v\}\setminus\{a\}$
			\item $c(a)\leftarrow b$
			\item $U\leftarrow U\cup \{a\}$
			\item $w'(b)\leftarrow w'(b)-w'(a)$
		\end{enumerate}
	\end{enumerate}
	Following that, the prover constructs the sub-label 
	$L_\mathtt{approx}(v)=\langle L_\mathtt{approx}^1(v), L_\mathtt{approx}^2(v),L_\mathtt{approx}^3(v),L_\mathtt{approx}^4(v)\rangle$ for each node $v\in V$, where $L_\mathtt{approx}^1(v)$ is a bit indicating if $v\in U$; $L_\mathtt{approx}^2(v)=id(v)$; $L_\mathtt{approx}^3(v)=id(c(v))$ (we assume that
	$id(\bot) = \bot$); and $L_\mathtt{approx}^4(v)=w'(v)$.
	\par\fussy
	
	To complete the approximation task, the verifier needs to verify that $L_\mathtt{approx}^1(\cdot)$ encodes a decomposable 2-approximation obtained by the algorithm above. Let $C_v=\{u\in N(v)\mid L_\mathtt{approx}^3(u)=id(v)\}$. The verifier at each node $v\in V$, simply verifies that: (1) if $L_\mathtt{approx}^1(v)=0$, then $L_\mathtt{approx}^1(u)=1$ for all $u\in N(v)$; and (2) $w(v)=L_\mathtt{approx}^4(v)+\sum_{u\in C_v}L_\mathtt{approx}^4(u)$. Correctness follows from the correctness of the algorithm above as well as the VCA method. Note that the sequential runtimes of both the prover and verifier are polynomial which completes our proof.
\end{proof}

\subsection{Minimum Weight Dominating Set}
\label{section:min-weight-ds}
Consider a graph $G=(V,E)$ associated with a node-weight function $w:V\rightarrow\{1,\dots W\}$. A \emph{dominating set} is a subset $U\subseteq V$ of nodes such that $U\cap (\{v\}\cup N(v))\neq \emptyset$ for each node $v\in V$. A \emph{minimum weight dominating set} is a dominating set that minimizes $w(U)=\sum_{v\in U}w(v)$.
\begin{theorem}\label{ds-adpls}
	There exists a sequentially efficient $(\ln (n)+1)$-ADPLS for minimum weight dominating set with a proof size of $O(\log n+\log W)$.
\end{theorem}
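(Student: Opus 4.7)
I plan to establish this result by applying the VCA method of Section \ref{section:centralized-verifiable-approximation-method} in the same spirit as the proof of Theorem \ref{vc-adpls}. The centralized algorithm $\mathtt{ALG}$ is the classical $(H_n)$-approximation for weighted dominating set, i.e., greedy weighted set cover with sets $N[v] = \{v\} \cup N(v)$ of cost $w(v)$: repeatedly pick $v \notin U$ minimizing $w(v)/|N[v]\setminus D|$ (where $D$ is the set of currently dominated nodes), add $v$ to $U$, and update $D$; here $H_n = 1 + \tfrac{1}{2} + \cdots + \tfrac{1}{n} \leq \ln n + 1$. Alongside $U$, the prover also produces the standard dual-fitting price assignment: for each $u \in V$, let $d(u) \in U$ be the node that first dominated $u$ (with $d(u) = u$ when $u \in U$), let $c(u)$ denote the number of newly dominated nodes when $u$ was picked (defined only for $u \in U$), and set $\pi(u) = w(d(u))/c(d(u))$.

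The approximation sub-label at each node $v$ is $L_\mathtt{approx}(v) = \langle L^1(v), L^2(v), L^3(v) \rangle$, where $L^1(v) \in \{0,1\}$ marks whether $v \in U$, $L^2(v)$ is an $O(\log n)$-bit encoding of $d(v) \in N[v]$, and $L^3(v)$ stores the pair of integers $(w(d(v)), c(d(v)))$ from which $\pi(v)$ is recomputed. The decomposition function is set to $\lambda(\mathsf{I}(v),\mathsf{A}(v)) = w(v) \cdot L^1(v)$, so that $\lambda(\mathsf{I}, \mathsf{A}) = w(U)$. The verifier at $v$ performs three checks: (i) the node pointed to by $L^2(v)$ has $L^1 = 1$ (so $v$ is dominated and $d(v) \in U$); (ii) the scaled dual feasibility $\sum_{u \in N[v]} \pi(u) \leq H_n \cdot w(v)$, reading each $\pi(u)$ from the pair in $L^3(u)$; and (iii) if $v \in U$, the integer $c(v)$ recorded in $L^3(v)$ equals $|\{u \in N[v] : L^2(u)\text{ identifies } v\}|$, and every such $u$ has $L^3(u)$ equal to $(w(v), c(v))$. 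The comparison task is then the $(\lambda, k)$-comparison scheme of Lemma \ref{hk}, contributing $O(\log n + \log W)$ additional bits.

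For correctness, the upper-bound side is the classical greedy dual-fitting lemma: fix $v$ and list $N[v] = \{u_1, \ldots, u_\ell\}$ in order of first domination; when $u_i$ is dominated, at least $\ell - i + 1$ nodes of $N[v]$ are still uncovered, so the greedy ratio of $v$ as a still-available candidate is at most $w(v)/(\ell - i + 1)$, whence $\pi(u_i) \leq w(v)/(\ell - i + 1)$, and summing yields $\sum_{u \in N[v]} \pi(u) \leq H_{|N[v]|} \cdot w(v) \leq H_n \cdot w(v)$, so an honest prover passes every check. Conversely, whenever all three checks pass, check (i) implies $U$ is a dominating set (so $w(U) \geq OPT$), check (iii) at each $v \in U$ forces $\sum_{u : d(u)=v} \pi(u) = w(v)$ and hence $\sum_{u \in V} \pi(u) = w(U)$, and check (ii) means $\pi/H_n$ is feasible for the dual of the natural LP relaxation of minimum weight dominating set (dual constraint $y_v + \sum_{u \in N(v)} y_u \leq w(v)$), so by weak LP duality $w(U) = \sum_v \pi(v) \leq H_n \cdot OPT$. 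The VCA template then finishes the argument: if $OPT \geq k$, the comparison task accepts since $w(U) \geq OPT \geq k$; if $OPT < k/H_n$, then $w(U) \leq H_n \cdot OPT < k$, so the comparison task rejects.

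The main technical obstacle is allowing the verifier to read every $\pi(u)$ for $u \in N[v]$ locally while simultaneously forcing the global identity $\sum_v \pi(v) = w(U)$, which is what makes $w(U)$ (rather than an arbitrary sum) bounded above by $H_n \cdot OPT$. I handle this by storing $\pi(v)$ implicitly at $v$ itself via the pair $(w(d(v)), c(d(v)))$ (rather than at $d(v)$) and using the combinatorial check (iii) to anchor the pieces back to each $u \in U$. The prover runs the greedy algorithm with dual-fitting bookkeeping in polynomial time, the verifier's arithmetic in check (ii) is polynomial (the rational common denominators are at most $\mathrm{lcm}(1, \ldots, n)$, representable in $O(n)$ bits), and the total label size is $O(\log n + \log W)$ per node, yielding the required sequentially efficient $(\ln n + 1)$-ADPLS.
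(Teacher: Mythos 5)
Your proposal follows the same high-level route as the paper: the VCA method instantiated with the greedy set-cover algorithm over the closed neighborhoods $N[v]$, analyzed by dual fitting against the LP whose dual constraints are $y_v+\sum_{u\in N(v)}y_u\le w(v)$. The genuine difference is the choice of decomposition function. The paper sets $\lambda(\msfi(v),\mathsf{A}(v))$ to be the dual-fitting price $d(v)$ itself, so that $\lambda(\msfi,\mathsf{A})=\sum_{v}d(v)$; soundness is then a one-line consequence of the verifier checking dual feasibility of $\mathbf{y}=\mathbf{d}/(\ln n+1)$ together with weak duality, with no need to verify domination or any bookkeeping of the greedy execution, and completeness uses only that the honest prices sum to $w(U)\ge OPT$. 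You instead set $\lambda(\msfi(v),\mathsf{A}(v))=w(v)\cdot L^{1}(v)$, so that $\lambda(\msfi,\mathsf{A})=w(U)$, which forces you to certify the identity $\sum_{v}\pi(v)=w(U)$ locally --- this is the role of your checks (i) and (iii). Your version buys a somewhat stronger guarantee (every accepted labeling encodes an actual dominating set whose weight is the compared quantity) at the price of a larger label and a more delicate verifier.

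There is, however, one soundness hole in check (iii) as written: a dishonest prover may set $L^{1}(v)=1$ for a node $v$ to which no node points and record $c(v)=0$; all three checks still pass, yet the weight $w(v)$ is counted in $\lambda(\msfi,\mathsf{A})=w(U)$ while appearing in no $\pi(u)$. Then $\sum_{u}\pi(u)<w(U)$, the chain $w(U)=\sum_{u}\pi(u)\le H_n\cdot OPT$ breaks, and $w(U)$ can be inflated arbitrarily, so the comparison task may wrongly accept a no-instance. The fix is immediate: the verifier at each $v$ with $L^{1}(v)=1$ should additionally reject unless $c(v)\ge 1$, i.e., unless at least one $u\in N[v]$ points to $v$. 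The honest greedy never selects a node that dominates nothing new, so completeness is unharmed, and with this one extra check your argument goes through.
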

\begin{proof}
	We obtain a $(\ln (n)+1)$-ADPLS by means of the VCA method. For the approximation task, the prover obtains a $(\ln (n)+1)$-approximation for minimum weight dominating set by means of the following greedy algorithm:
	\begin{enumerate}
		\item 
		Initialization: $U\leftarrow\emptyset$; $B\leftarrow\emptyset$.
		\item While $U$ is not a dominating set do:
		\begin{enumerate}
			\item for each node $v\notin U$, let $C_v=(N(v)\cup\{v\})\setminus B$.
			\item find a node $z\notin U$ that minimizes $w(z)/w(C_z)$.
			\item set $d(u)=w(z)/w(C_z)$ for each node $u\in C_z$
			\item $U\leftarrow U\cup\{z\}$
			\item $B\leftarrow B\cup C_z$
		\end{enumerate}
	\end{enumerate}
	Following that, the prover sets $L_\mathtt{approx}(v)=d(v)$ for each $v\in V$. Note that $d$ satisfies $\sum_{v\in V}d(v)=w(U)$ which means that it can be used in a decomposable $(\ln(n)+1)$-approximation.
	
	The verifier uses dual fitting in order to verify that the $L_\mathtt{approx}(\cdot)$ field encodes a decomposable $(\ln(n)+1)$-approximation. Consider the dual LP that corresponds to the LP relaxation of minimum weight dominating set
	\begin{equation}
	\begin{array}{rrclcl}\label{mds-dual}
	\displaystyle \max & \multicolumn{3}{l}{\sum\limits_{v\in V}y_v}\\ 
	\text{s.t.}& y_v + \sum\limits_{u\in N(v)}y_u & \leq &w(v) ,& &  v \in V\\
	& y_v & \geq & 0 ,& &  v\in V\\ 
	\end{array}
	\end{equation}
	and let $\mathbf{y}$ be the dual solution obtained by setting $y_{v}=L_\mathtt{approx}(v)/(\ln (n)+1)$ for each $v\in V$. It holds that if $L_\mathtt{approx}(v)$ actually encodes a $(\ln(n)+1)$-approximation obtained by the algorithm above, then $\mathbf{y}$ is a feasible solution. The verifier completes the approximation task by verifying the feasibility of $\mathbf{y}$. To complete our proof, note that the sequential runtime of both the prover and verifier is polynomial. 
\end{proof}

\subsection{Metric Traveling Salesperson}
\label{section:tsp}
Consider a metric graph $G=(V,E)$ associated with an edge-weight function $w:E\rightarrow\{1,\dots W\}$. The \emph{metric traveling salesperson problem (metric TSP)} is to find a Hamiltonian cycle $C$ in $G$ that minimizes $w(C)=\sum_{e\in C}w(e)$. 
\begin{theorem}\label{tsp-adpls}
	There exists a sequentially efficient $2$-ADPLS for metric TSP with a proof size of $O(\log n+\log W)$.
\end{theorem}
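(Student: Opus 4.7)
The plan is to apply the VCA method from Section~\ref{section:centralized-verifiable-approximation-method}, taking as the centralized algorithm $\mathtt{ALG}$ the textbook \emph{double-tree} $2$-approximation for metric TSP: compute a minimum spanning tree $T$ of the metric graph $G$, traverse $T$ via DFS (equivalently, follow the Euler tour of the multigraph obtained by doubling every edge of $T$), and shortcut repeated occurrences to obtain a Hamiltonian cycle $H$. The metric (triangle) inequality guarantees $w(H) \leq 2\, w(T)$, and since removing any edge from any Hamiltonian cycle yields a spanning tree we have $w(T) \leq OPT$, so altogether $OPT \leq w(H) \leq 2 \cdot OPT$. Taking the decomposition function $\lambda(\msfi(v), \mathsf{A}(v)) = \tfrac{1}{2}(w(e_1(v)) + w(e_2(v)))$, where $e_1(v), e_2(v)$ are the two edges of $H$ incident on $v$, the output $\mathsf{A}$ encoding $H$ is then a decomposable $2$-approximation in the sense of Section~\ref{section:centralized-verifiable-approximation-method}.

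For the approximation task, the prover's $L_\mathtt{approx}(v)$ field encodes (i) the two port numbers identifying $v$'s neighbors in $H$; (ii) a parent-pointer plus hop-distance certificate for the auxiliary spanning tree $T$ rooted at the minimum-id node (costing $O(\log n)$ bits per node, as in the standard spanning-tree feasibility scheme); (iii) $v$'s rank in the DFS order of $T$ used to derive $H$; and (iv) a certificate that $w(T) \leq OPT$. The verifier then checks locally that the parent pointers form a spanning tree, that $H$ agrees with the DFS-with-shortcutting procedure on $T$ (this is a local consistency check at each node's cycle-neighbors and tree-neighbors, using the DFS ranks), and that the witness in (iv) is valid. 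For the comparison task, the prover supplies $L_\mathtt{comp}$ produced by the $(\lambda, k)$-comparison scheme of Lemma~\ref{hk}, which uses $O(\log n + \log W)$ bits per node since every $\lambda$-value is bounded by $W$.

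Correctness is immediate from the generic VCA argument: if $OPT \geq k$ the prover produces honest labels and both sub-tasks accept, while if $OPT < k/2$ then for every label assignment the encoded $\mathsf{A}$ either fails to be a decomposable $2$-approximation (hence is rejected by the approximation task) or satisfies $\lambda(\msfi, \mathsf{A}) \leq 2 \cdot OPT < k$ (hence is rejected by the comparison task). Sequential efficiency holds since MST, DFS, and shortcutting are polynomial-time, and the verifier's checks are local polynomial tests.

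The main obstacle is item~(iv): certifying $w(T) \leq OPT$ within $O(\log n + \log W)$ bits per node, rather than the $O(\log n \log W)$ bits that direct use of the MST PLS of Korman--Kutten--Peleg~\cite{mst} would cost. We handle this by not insisting that $T$ itself be the minimum spanning tree; instead, the prover separately supplies a per-node dual certificate for a cut/LP relaxation whose optimum lower-bounds $OPT$ and dominates $w(T)$, exploiting that each dual coefficient needs only $O(\log n + \log W)$ bits to describe. This cleanly decouples the spanning-tree-structure check (cheap, $O(\log n)$ bits) from the weight-lower-bound check (the LP-dual witness, $O(\log n + \log W)$ bits), and closes the proof.
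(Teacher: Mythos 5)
Your overall architecture matches the paper's: both apply the VCA method with an MST-based $2$-approximation for the approximation task and the $(\lambda,k)$-comparison scheme of Lemma~\ref{hk} for the comparison task. (The paper is slightly leaner: it uses the tree $T$ itself as the decomposable $2$-approximation, via $OPT \leq 2\,w(T) \leq 2\cdot OPT$, so it never needs to build, encode, or locally verify the Euler-tour-with-shortcutting Hamiltonian cycle $H$; your extra DFS-rank machinery is workable but unnecessary.)

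The genuine gap is exactly the point you flag as ``the main obstacle,'' namely item~(iv): certifying the lower bound $w(T)\leq OPT$ (equivalently, that $T$ is a minimum spanning tree) within $O(\log n + \log W)$ bits per node. The paper closes this by invoking a concrete existing result --- the MST verification scheme of Patt-Shamir and Perry \cite[Theorem 6]{patt-shamir-perry-2017}, which has proof size $O(\log n + \log W)$ --- whereas your proposed resolution is only a sketch: you assert the existence of ``a per-node dual certificate for a cut/LP relaxation whose optimum lower-bounds $OPT$ and dominates $w(T)$,'' without specifying the LP, the encoding of the dual solution into node labels, or how the verifier checks dual feasibility and the value bound locally. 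For the natural candidates (the Held--Karp subtour-elimination relaxation of TSP, or the spanning-tree polytope), the dual has exponentially many variables indexed by arbitrary vertex subsets, so it is far from clear that a feasible dual of the required value can be distributed into $O(\log n + \log W)$-bit labels with purely local verification; this is precisely the nontrivial content that would have to be supplied. As written, the step that distinguishes your scheme from the trivial $O(\log n \log W)$ solution (direct use of the Korman--Kutten--Peleg MST scheme) is not established, so the claimed proof size is not yet proved. Substituting the Patt-Shamir--Perry scheme for your item~(iv) would repair the argument and essentially collapse it onto the paper's proof.
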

\begin{proof}
	We use the VCA method. Let $T$ and $C_\mathtt{OPT}$ be a minimum spanning tree and an optimal TSP tour in $G$, respectively. It follows that $w(C_\mathtt{OPT})\leq 2 \cdot w(T)\leq 2\cdot w(C_\mathtt{OPT})$ (\cite[Chapter 3]{vazirani}). This means that we can use $T$ in a decomposable $2$-approximation. Verification that $T$ is in fact a minimum spanning tree in a metric graph $G$ can be done with proof size of $O(\log n +\log W)$ using a scheme presented by Patt-Shamir and Perry in \cite[Theorem 6]{patt-shamir-perry-2017}. 
\end{proof}

\subsection{Minimum Metric Steiner Tree}
\label{section:steiner}
Consider a metric graph $G=(V,E)$ associated with an edge-weight function $w:E\rightarrow\{1,\dots W\}$ and let $S\subseteq V$ be a subset $S\subseteq V$ of \emph{terminal} nodes. The \emph{minimum metric Steiner tree} problem is to find a tree $T=(V_T,E_T)$ that minimizes $w(T)=\sum_{e\in E_T}w(e)$ such that $S\subseteq V_T\subseteq V$ and $E_T\subseteq E$.
\begin{theorem}\label{steiner-adpls}
	There exists a sequentially efficient $2$-ADPLS for minimum metric Steiner tree with a proof size of $O(\log n+\log W)$.
\end{theorem}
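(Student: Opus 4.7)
The plan is to apply the VCA method in close analogy with Theorem \ref{tsp-adpls}. For the approximation task, the prover will execute the classical $2$-approximation algorithm for metric Steiner tree: compute a minimum spanning tree $T_S$ of the complete weighted graph on the terminal set $S$ (with weights inherited from $G$). I would invoke the textbook analysis---doubling an optimal Steiner tree yields an Eulerian multigraph of weight $2 \cdot OPT$, which shortcuts via the triangle inequality to a Hamiltonian cycle on $S$ of the same weight, from which removing any single edge yields a spanning tree on $S$ of weight at most $2 \cdot OPT$---to conclude $w(T_S) \leq 2 \cdot OPT$. Hence $T_S$ serves as a decomposable $2$-approximation under the natural per-node weight split $\lambda(\msfi(v), \mathsf{A}(v)) = \tfrac{1}{2} \sum_{e \in T_S : v \in e} w(e)$.

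Next, I would have the sub-label $L_\mathtt{approx}(v)$ at each node $v \in V$ encode a flag indicating whether $v \in S$ and, if so, the port numbers of $v$'s incident $T_S$-edges together with an MST-verification certificate in the sense of Patt-Shamir and Perry \cite[Theorem 6]{patt-shamir-perry-2017}; non-terminal nodes simply certify that they contribute no tree edges. The comparison sub-label $L_\mathtt{comp}(v)$ is supplied by the $(\lambda, k)$-comparison scheme of Lemma \ref{hk}. Since edge weights lie in $\{1, \dots, W\}$ and ids use $O(\log n)$ bits, each field fits in $O(\log n + \log W)$ bits, and both prover and verifier remain polynomial-time sequentially.

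The hard part will be ensuring that the Patt-Shamir--Perry MST-verification certificate, originally designed for an MST of the full communication graph, applies to the MST of the induced complete weighted graph on $S$. The saving observation is that in a metric graph every pair of terminals is joined by an edge whose weight is locally available at both endpoints, so each terminal $v \in S$ can simulate its view in the complete graph over $S$ using only its local input; because the scheme is essentially fragment-based and treats edge weights and identifiers as a black box, terminals can run it among themselves while non-terminals remain passive beyond confirming their non-participation. Combining the VCA correctness argument of Section~\ref{section:centralized-verifiable-approximation-method} with the above ingredients will yield the claimed sequentially efficient $2$-ADPLS for minimum metric Steiner tree with proof size $O(\log n + \log W)$.
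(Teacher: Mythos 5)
Your proposal matches the paper's proof: both apply the VCA method with the minimum spanning tree $T_S$ of the terminal-induced subgraph as the decomposable $2$-approximation, certified via the Patt-Shamir--Perry MST scheme and combined with a comparison scheme, for a proof size of $O(\log n + \log W)$. The only detail you leave implicit is the lower half of the decomposable-approximation inequality, $w(T_\mathtt{OPT}) \leq w(T_S)$, which holds because $T_S$ is itself a feasible Steiner tree; the paper states this bound explicitly.
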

\begin{proof}
	We use the VCA method. Let $T_\mathtt{OPT}$ be an optimal Steiner tree. Consider the subset $E_S=\{(u,v)\in E\mid u,v\in S\}$ of edges with both endpoints in $S$ and the sub-graph $G_S=(S,E_S)$. Let $T_S$ be a minimum spanning tree in $G_S$. It follows that $w(T_\mathtt{OPT})\leq w(T_S)\leq 2\cdot w(T_\mathtt{OPT})$ \cite[Chapter 3]{vazirani}. This means that we can use $T_S$ in a decomposable $2$-approximation. To verify that $T_S$ is in fact a minimum spanning tree of $G_S$ we  use the scheme presented in \cite[Theorem 6]{patt-shamir-perry-2017}. 
\end{proof}

\subsection{Maximum Flow}
\label{section:flow}
Consider a directed graph $G=(V,E)$ with a source $s\in V$, a sink $t\in V$, and edge capacities $c:E\rightarrow\{0,1,\dots,W\}$. A \emph{flow} is a function $f:E\rightarrow\{0,1,\dots ,W\}$ satisfying: (1) $f(u,v)\leq c(u,v)$ for each edge $(u,v)\in E$; and (2) $\sum_{u:(u,v)\in E}f(u,v)=\sum_{z:(v,z)\in E}f(v,z)$ for each node $v\in V\setminus\{s,t\}$. A \emph{maximum flow} is a flow that maximizes $\sum_{v:(v,t)\in E}f(v,t)$.
\begin{theorem}\label{flow-pls}
	There exists a sequentially efficient PLS for maximum flow with a proof size of $1$ bit.
\end{theorem}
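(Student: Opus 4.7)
The plan is to exploit the max-flow min-cut theorem and encode a minimum $s$-$t$ cut into $1$-bit node labels. Given the proposed flow $f$ derived from the output assignment of the IO graph $\gio$, the prover first runs a (sequentially efficient) polynomial-time algorithm to compute a minimum $s$-$t$ cut $(S, \bar S)$ of $G$ with $s \in S$ and $t \in \bar S$; for example, the prover can build the residual graph of $f$ and take $S$ to be the set of nodes reachable from $s$. The label $L(v)$ assigned to each node $v \in V$ is then the single bit encoding whether $v \in S$ or $v \in \bar S$.

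The verifier at node $v$ performs four local checks using $\msfi(v)$, $\msfo(v)$, $L(v)$, and $L^N(v)$: (i) the capacity constraint $f(u,v) \leq c(u,v)$ for every incident edge; (ii) flow conservation at $v$ whenever $v \notin \{s,t\}$, using the flows on incident edges encoded in $\msfo(v)$; (iii) the cut consistency conditions $L(s) = 0$ at the source node and $L(t) = 1$ at the sink (which every node can check locally for itself by consulting its own input); and (iv) for every incident edge $(u,v)$ with $L(u) \neq L(v)$, the saturation condition $f(u,v) = c(u,v)$ if $u$ is on the $S$-side and $v$ on the $\bar S$-side, and the emptiness condition $f(u,v) = 0$ if the orientation is reversed. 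The verifier accepts iff all four conditions hold at every node.

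For correctness, completeness is immediate: if $f$ is a maximum flow, the prover supplies an actual minimum cut, and the complementary slackness between flows and cuts given by max-flow min-cut guarantees that conditions (i)--(iv) are satisfied at every node. For soundness, suppose the verifier accepts. Then (i) and (ii) certify that $f$ is a feasible flow, and (iii)--(iv) certify that $(S,\bar S)$ is an $s$-$t$ cut in which every forward crossing edge is saturated and every backward crossing edge carries zero flow. A standard summation of flow conservation over all nodes in $S$ then yields
\[
|f| \;=\; \sum_{\substack{(u,v)\in E \\ u\in S,\, v\in \bar S}} f(u,v) \;-\; \sum_{\substack{(u,v)\in E \\ u\in \bar S,\, v\in S}} f(u,v) \;=\; \sum_{\substack{(u,v)\in E \\ u\in S,\, v\in \bar S}} c(u,v) \;=\; c(S,\bar S),
\]
and combining this with weak duality $|f| \leq c(S,\bar S) \leq \text{max-flow}$ gives $|f| = \text{max-flow}$, so $f$ is optimal.

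The only delicate point is making sure each check in (i)--(iv) is actually local, i.e., uses only information at $v$ together with neighbors' labels. This is routine: edge capacities and flow values on edges incident to $v$ are encoded in $\msfi(v)$ and $\msfo(v)$ respectively, and neighbor labels are available via $L^N(v)$. Since the prover's algorithm runs in polynomial time, the verifier's work at each node is constant up to the degree, and the label is a single bit, we obtain a sequentially efficient PLS with proof size $1$.
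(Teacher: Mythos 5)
Your proposal is correct and takes essentially the same approach as the paper: there, too, the prover computes a minimum $(s,t)$-cut, encodes membership in $S$ as a single-bit label, and the verifier checks flow feasibility together with saturation of forward crossing edges and zero flow on backward crossing edges, which are precisely the complementary slackness conditions of the max-flow LP and its dual. The only cosmetic difference is that the paper phrases these checks via LP duality, and one small fix to your write-up: the weak-duality chain should be $\text{max-flow} \leq c(S,\bar S) = |f| \leq \text{max-flow}$ (every flow value is at most every cut capacity), rather than $c(S,\bar S) \leq \text{max-flow}$ as written.
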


\begin{proof} 
	Consider the following LP relaxation obtained by adding an artificial edge $(t,s)$ with infinite capacity and setting the objective to maximize $x_{t,s}$ (cf.\ Chapter 12 of \cite{vazirani})
	\begin{equation}
	\begin{array}{rrclcl}\label{flow-primal}
	\displaystyle \max & \multicolumn{3}{l}{x_{t,s}}\\ 
	\text{s.t.}& x_{u,v} & \leq &c_{u,v} ,& &  (u,v) \in E\\
	&\underset{u:(u,v)\in E}{\sum} x_{u,v}-\underset{z:(v,z)\in E}{\sum}x_{v,z} & \leq & 0, & &  v \in V\\
	& x_{u,v} & \geq & 0 ,& &  (u,v)\in E\\ 
	\end{array}
	\end{equation}
	and its dual LP
	\begin{equation}
	\begin{array}{rrclcl}\label{flow-dual}
	\displaystyle \min & \multicolumn{3}{l}{\underset{(u,v)\in E}{\sum}  c_{u,v}\cdot y_{u,v}} \\
	\textrm{s.t.} & y_{u,v}-y_u+y_v & \geq & 0, & &  (u,v) \in E\\
	& y_{s}-y_{t}&\geq& 1\\
	& y_{u,v} & \geq & 0 ,& &  (u,v) \in E\\
	& y_v & \geq & 0 ,& &  v \in V .\\
	\end{array}
	\end{equation}
	
	Let $\mathbf{x}$ be the primal solution derived from the output assignment of the given IO graph $\gio$. The max-flow min-cut theorem states that if $\mathbf{x}$ represents a maximum flow and $\mathbf{y}$ is a dual solution that represents a minimum $(s,t)$-cut, then they are optimal solutions for the primal and dual LPs, respectively. To that end, the prover uses a sequential algorithm to find a minimum $(s,t)$-cut $S$ such that $s\in S,t\notin S$, and proceeds to generate a dual solution $\mathbf{y}$ as follows: (1) set $y_{v}=1$ if $v\in S$, and $y_{v}=0$ otherwise for each $v\in V$; and (2) set $y_{u,v}=1$ if  $y_{u}=1$ and $y_{v}=0$, and $y_{u,v}=0$ otherwise for each $(u,v)\in E$.  The label assignment $L:V\rightarrow\{0,1\}^*$ constructed by the prover assigns the $1$ bit label $L(v)=y_v$ to each node $v\in V$.
	
	The verifier at each node $v\in V$ deduces the value of the dual variable $y_v$ from the label $L(v)$; and the values $y_{v,u}$ of all the edges incident on $v$ from the label $L(v)$ and the label vector $L^N(v)$. These allow the verifier to verify that $\mathbf{x}$ and $\mathbf{y}$ are feasible primal and dual solutions that satisfy the complementary slackness conditions.
	
	The correctness of this PLS follows directly from the max-flow min-cut theorem and the complementary slackness property. To complete our proof, we note that the sequential runtime of both the prover and verifier is polynomial.
\end{proof}
\begin{theorem}\label{flow-dpls}
	There exists a sequentially efficient DPLS for maximum flow with a proof size of $O(\log n+\log W)$.
\end{theorem}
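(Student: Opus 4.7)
The plan is to exploit the max-flow min-cut duality: since maximum flow is a MaxDGP, the DPLS for parameter $k$ must distinguish inputs with $OPT_\Psi(G,\msfi)\leq k$ from those with $OPT_\Psi(G,\msfi)>k$, and by max-flow min-cut this reduces to certifying (or refuting) the existence of an $(s,t)$-cut of capacity at most $k$. This mirrors the certificate used for the PLS in Theorem~\ref{flow-pls}, but with the roles reversed: the prover encodes a minimum cut rather than a feasibility-plus-slackness witness for an explicit flow. The proof will follow the flavor of the VCA method, using a ``local cut indicator'' label together with the comparison scheme of Lemma~\ref{hk}.

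More concretely, given $\gi$ with $OPT_\Psi(G,\msfi)\leq k$, the prover computes in polynomial time a minimum $(s,t)$-cut $S\subseteq V$ with $s\in S$ and $t\notin S$ and assigns each node $v$ a label $L(v)=\langle\phi(v),\tau(v),L_\mathtt{comp}(v)\rangle$, where $\phi(v)=[v\in S]$ is a single bit, $\tau(v)=\sum\{c(v,u):(v,u)\in E,\,\phi(v)=1,\,\phi(u)=0\}$ records the contribution of $v$ to the cut capacity, and $L_\mathtt{comp}(v)$ is the label of the $(-\tau,-k)$-comparison scheme from Lemma~\ref{hk} (viewing $\tau$ as a function of $\msfi(v)$ and the $\phi$-part of $L(v)$, just as the VCA method does). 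The verifier at each node $v$ performs three checks: (i) if $v=s$ then $\phi(v)=1$ and if $v=t$ then $\phi(v)=0$ (locally decidable from $\msfi(v)$); (ii) the value $\tau(v)$ declared in the label equals the locally recomputed contribution, using $\phi(v)$, the neighbors' $\phi$-bits in $L^N(v)$, and the edge capacities encoded in $\msfi(v)$; and (iii) the comparison scheme accepts, certifying $\sum_{v\in V}\tau(v)\leq k$.

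Correctness is immediate in both directions. If $OPT_\Psi(G,\msfi)\leq k$, then the cut $S$ computed by the prover has capacity at most $k$, and the locally consistent labels $\phi,\tau$ make the verifier accept. Conversely, if $OPT_\Psi(G,\msfi)>k$, then by max-flow min-cut every $(s,t)$-cut has capacity strictly greater than $k$; since check (i) forces the set $\{v:\phi(v)=1\}$ to be a valid $(s,t)$-cut and check (ii) forces $\sum_v\tau(v)$ to equal its capacity, the comparison check (iii) must reject regardless of how the adversarial prover sets $L_\mathtt{comp}$. The proof size is $1$ bit for $\phi(v)$, $O(\log n+\log W)$ bits for $\tau(v)$ (since any individual $\tau(v)$ is bounded by $|N(v)|\cdot W\leq nW$), and $O(\log n+H)=O(\log n+\log W)$ bits for $L_\mathtt{comp}(v)$ by Lemma~\ref{hk}, totalling $O(\log n+\log W)$. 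Sequential efficiency holds because computing a minimum $(s,t)$-cut, verifying local consistency of $\tau$, and running the comparison scheme are all polynomial-time tasks.

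The only mildly delicate point I anticipate is aligning the abstract $(h,k)$-comparison primitive with the fact that the summand $\tau(v)$ depends on both $\msfi(v)$ and on $L(v)$ rather than on a formal output assignment $\msfo$; this is handled exactly as in the VCA method of Section~\ref{section:centralized-verifiable-approximation-method}, by treating the prover-supplied label field as the auxiliary input to $h$, so Lemma~\ref{hk} applies without modification.
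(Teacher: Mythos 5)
Your proposal is correct and follows essentially the same route as the paper: the VCA method with a minimum $(s,t)$-cut as a decomposable $1$-approximation, a per-node membership bit, the per-node crossing-capacity contribution as the decomposition function, and the comparison scheme of Lemma~\ref{hk} to certify the total. Your explicit $\tau(v)$ field and its local consistency check merely spell out what the paper leaves implicit in the decomposition function, so no substantive difference arises.
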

\begin{proof}
	We obtain a DPLS by means of the VCA method. The prover constructs a minimum $(s,t)$-cut $S$ such that $s\in S$ and $t\notin S$, and uses it as a decomposable 1-approximation for maximum flow. For each node $v\in V$, the sub-label $L_\mathtt{approx}(v)$ is a bit indicating whether $v\in S$. The decomposition function at each node $v\in V$ is simply the sum of capacities of outgoing edges incident on $v$ that cross the cut. To complete the approximation task, the verifier simply verifies that $s\in S$ and $t\notin S$.
\end{proof}

\subsection{Maximum Weight Cut}
\label{section:cut}
Consider a graph $G=(V,E)$ associated with an edge weight function $w:E\rightarrow\{1,\dots W\}$. Given a \emph{cut} $\emptyset\subset S\subset V$ we denote by $E(S)=\{(u,v)\mid u\in S,v\notin S\}$ the set of edges \emph{crossing} it. A \emph{maximum weight cut} is a cut that maximizes $w(S)=\sum_{e\in E(S)}w(e)$.
\begin{theorem}\label{cut-apls}
	There exists a sequentially efficient $2$-APLS for maximum weight cut with a proof size of $1$ bit.
\end{theorem}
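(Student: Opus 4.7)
The plan is to give a direct combinatorial $2$-APLS based on the classical observation that every cut $S$ which is a \emph{local optimum} with respect to single-vertex flips achieves at least half of the total edge weight, and therefore at least $OPT/2$. This bypasses the primal dual method, whose natural LP formulation does not yield a constant-factor scheme for max cut.

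Given an IO graph $\gio$ where $\msfo$ encodes a cut $S \subseteq V$ (so that the local output $\msfo(v)$ carries a bit identifying $v$'s side of the cut), the prover simply sets the label $L(v)$ to be that one bit, i.e., $L(v) = [v \in S]$. The verifier at each node $v \in V$ then performs the following local checks: (i) $L(v)$ agrees with the side of the cut encoded by $\msfo(v)$; (ii) defining
\[
c(v) \,=\, \sum_{u \in N(v) : L(u) \neq L(v)} w(v, u) \qquad\text{and}\qquad \bar{c}(v) \,=\, \sum_{u \in N(v) : L(u) = L(v)} w(v, u),
\]
the inequality $c(v) \geq \bar{c}(v)$ holds. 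The verifier accepts $\gio$ if and only if both checks pass at every node.

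For correctness, suppose first that $S$ is an optimal cut. Moving any single vertex to the opposite side cannot increase $w(S)$, which is exactly the condition $c(v) \geq \bar{c}(v)$, so the labels produced by the prover pass every test. Conversely, if $c(v) \geq \bar{c}(v)$ for every $v$, then summing over $V$ and noting that each cut edge contributes to $c(\cdot)$ exactly twice while each non-cut edge contributes to $\bar{c}(\cdot)$ exactly twice, we obtain $2 \, w(S) \geq 2 \, (W_{\mathrm{tot}} - w(S))$, where $W_{\mathrm{tot}} = \sum_{e \in E} w(e)$. Hence $w(S) \geq W_{\mathrm{tot}} / 2 \geq OPT_\Psi(G, \msfi) / 2$, as required.

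Sequential efficiency of both prover and verifier is immediate: the prover merely copies output bits into labels, and the verifier's work at node $v$ is linear in its degree and the edge-weight representation. The proof size is $|L(v)| = 1$ bit. The only subtle point is conceptual: although each node already knows its own output, it does not see its neighbors' outputs, so the role of the $1$-bit label is precisely to broadcast the side bit of $\msfo(v)$ to $v$'s neighbors; the rest is the classical local-search analysis for maximum weight cut.
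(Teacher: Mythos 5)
Your proposal is correct and takes essentially the same approach as the paper: the paper's verifier checks $w(v) \geq \tfrac{1}{2}\sum_{u \in N(v)} w(u,v)$, which is exactly your condition $c(v) \geq \bar{c}(v)$, and the correctness argument (optimal cuts are local optima, and local optima capture at least half the total weight, hence at least $OPT/2$) is identical. Your explicit check that $L(v)$ agrees with $\msfo(v)$ is a small point the paper leaves implicit but is indeed needed.
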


\begin{proof}
	Given the cut $S$ derived from the output assignment of the given IO graph $\gio$, the prover assigns to each node $v\in V$ the one bit label $L(v)=1$ if $v\in S$; and $L(v)=0$ otherwise. For each node $v\in V$, let $w(v)=\sum_{(u,v)\in E(S)}w(u,v)$. The verifier at node $v$ uses the label $L(v)$ and the label vector $L^N(v)$ to verify that $w(v)\geq 1/2\cdot \sum_{u\in N(v)}w(u,v)$.
	
	To establish correctness we first show that if $S$ is optimal, then the verifier accepts $\gio$. Assume towards a contradiction that the verifier rejects $\gio$, i.e., there exists a node $v\in V$ such that $w(v)< 1/2\cdot \sum_{u\in N(v)}w(u,v)$. Let us assume w.l.o.g.\  that $v\in S$ and let $\widetilde{S}=S\setminus\{v\}$. Clearly, $w(\widetilde{S})>w(S)$ which contradicts the optimality of $S$. On the other hand, if the verifier accepts $\gio$, then every node $v\in V$ satisfies $w(v)\geq 1/2\cdot \sum_{u\in N(v)}w(v,u)$, which implies that $w(S)\geq 1/2\cdot \sum_{e\in E}w(e)$ and thus $S$ is a 2-approximation to a maximum weight cut.
\end{proof}
\begin{theorem}\label{cut-adpls}
	There exists a sequentially efficient $2$-ADPLS for maximum weight cut with a proof size of $O(\log n+\log W)$.
\end{theorem}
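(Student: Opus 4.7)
The plan is to apply the VCA method of Section \ref{section:centralized-verifiable-approximation-method} with $\alpha=2$, re-using the local-optimality certificate that drives Theorem \ref{cut-apls}. The centralized algorithm $\mathtt{ALG}$ that the prover executes is a standard sequential local search for max cut: start from an arbitrary cut, and repeatedly flip any node $v$ that violates $w(v)\geq (1/2)\sum_{u\in N(v)}w(u,v)$. Each flip strictly increases the integer-valued cut weight, so the procedure runs in polynomial time and outputs a cut $S$ that satisfies the local condition at every node. Summing this condition over $V$, exactly as in the proof of Theorem \ref{cut-apls}, gives $w(S)\geq (1/2)\sum_{e\in E}w(e)\geq OPT/2$, while $w(S)\leq OPT$ holds trivially, so $S$ serves as a decomposable $2$-approximation.

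For each node $v$ I would set $\mathsf{A}(v)=\langle s_v, w_v\rangle$, where $s_v\in\{0,1\}$ encodes which side of the cut $v$ lies on and $w_v\in\mathbb{Z}_{\geq 0}$ is the total weight of edges incident on $v$ that cross the cut; the approximation-task sub-label $L_\mathtt{approx}(v)$ simply stores $\mathsf{A}(v)$. Define the decomposition function by $\lambda(\msfi(v),\mathsf{A}(v))=w_v/2$, so that $\lambda(\msfi,\mathsf{A})=w(S)$ whenever the $w_v$ values are truthfully reported, with the factor $1/2$ correcting for the double counting of each crossing edge. The comparison task is then discharged by the $(-\lambda,-k)$-comparison scheme of Lemma \ref{hk}, giving the sub-label $L_\mathtt{comp}(v)$.

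At node $v$ the verifier performs three local checks: (i) $w_v=\sum_{u\in N(v):\, s_u\neq s_v}w(u,v)$, using the bits $s_u$ read off from $L^N(v)$ together with the incident edge weights encoded in $\msfi(v)$; (ii) the local-optimality inequality $w_v\geq (1/2)\sum_{u\in N(v)}w(u,v)$; and (iii) the local test of the $(-\lambda,-k)$-comparison scheme. If $OPT\leq k$, the prover exhibits a locally-optimal cut with $w(S)\leq OPT\leq k$, and all three checks pass; if $OPT>2k$, any labeling that survives (i) and (ii) encodes a valid cut meeting the local condition, hence $\sum_v w_v/2=w(S)\geq OPT/2>k$ and (iii) forces rejection. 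Both the local-search prover and the verifier run in polynomial sequential time, and the proof size is $O(\log n+\log W)$, since $w_v\leq nW$ bounds $|\mathsf{A}(v)|$ and Lemma \ref{hk} bounds $|L_\mathtt{comp}(v)|$ by $O(\log n+H)$ with $H=O(\log n+\log W)$.

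The main subtlety is check (i): it is crucial that the consistency between the reported $w_v$ and the neighbors' side-bits be locally verifiable, for otherwise a dishonest prover could simply shrink the $w_v$ to fit beneath the threshold $k$. Because the incident edge weights sit in $\msfi(v)$ and the neighbors' bits $s_u$ are exposed through $L^N(v)$, this check is indeed local and polynomial-time, so no additional machinery is needed and the VCA framework delivers the claimed sequentially efficient $2$-ADPLS.
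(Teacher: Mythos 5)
Your scheme is correct as a GPLS and achieves the stated proof size, but the prover you describe is not sequentially efficient, so the argument does not establish the theorem as stated. The local search you invoke (repeatedly flipping a node that violates $w_v\geq \frac{1}{2}\sum_{u\in N(v)}w(u,v)$) terminates after at most $\sum_{e\in E}w(e)\leq mW$ flips, since each flip increases the integral cut weight by at least one; but the weights are encoded in binary, so this bound is exponential in the number of bits used to encode the configuration graph, which is exactly the measure in the paper's definition of a sequentially efficient prover. Worse, producing a cut that passes your check (ii) at \emph{every} node means computing a local optimum of weighted max cut under the FLIP neighborhood, which is the canonical PLS-complete problem, so no polynomial-time prover for your approximation task is known. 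The polynomial-time greedy cut that achieves $w(S)\geq\frac{1}{2}\sum_{e\in E}w(e)$ does not in general satisfy the per-node inequality, so it cannot be certified by your verifier either.

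The paper sidesteps all of this by not materializing a cut at all: since $\frac{1}{2}\sum_{e\in E}w(e)\leq OPT\leq\sum_{e\in E}w(e)$, the quantity $\frac{1}{2}\sum_{e\in E}w(e)$ is already within a factor $2$ of $OPT$ and depends only on the input, so the verifier merely needs to check $k\geq\frac{1}{2}\sum_{e\in E}w(e)$ via the comparison scheme with $h(\msfi(v),\bot)=\sum_{u\in N(v)}w(v,u)$, with no approximation-task label whatsoever. Everything else in your write-up (the yes/no correctness analysis and the $O(\log n+\log W)$ label size) is sound; to repair the proof, either replace the approximation task by this input-only certificate, or restrict to polynomially bounded $W$, where your local-search prover does run in polynomial time.
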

\begin{proof}
	For any graph there exists a cut $S$ satisfying $w(S)\geq 1/2\cdot \sum_{e\in E}w(e)$. Thus, in order to satisfy the correctness requirements, it suffices for the verifier to verify that the given parameter $k$ satisfies $k\geq1/2\cdot \sum_{e\in E}w(e)$. This can be done using the $(-h,-k)$-comparison scheme presented in Section \ref{section:preliminaries}, where $h(\msfi(v),\bot)=\sum_{u\in N(v)}w(v,u)$ for each node $v\in V$.
\end{proof}
\clearpage
\begin{table}
	\begin{center}
		\begin{tabular}{| l| l| l|}
			\hline
			Term &  Abbreviation & Reference\\
			\hline
			input-output graph & IO graph& Section \ref{section:model}\\
			\hline
			distributed graph problem & DGP& Section \ref{section:model}\\
			\hline
			distributed graph minimization problem & MinDGP& Section \ref{section:model}\\
			\hline
			distributed graph maximization problem & MaxDGP& Section \ref{section:model}\\
			\hline
			distributed graph optimization problem & OptDGP& Section \ref{section:model}\\
			\hline
			gap proof labeling scheme & GPLS& Section \ref{section:pls}\\
			\hline
			proof labeling scheme & PLS& Section \ref{section:pls}\\
			\hline
			approximate proof labeling scheme & APLS& Section \ref{section:pls}\\
			\hline
			decision proof labeling scheme & DPLS& Section \ref{section:pls}\\
			\hline
			approximate decision proof labeling scheme & ADPLS& Section \ref{section:pls}\\
			\hline
			verifiable centralized approximation & VCA & Section \ref{section:centralized-verifiable-approximation-method}\\
			\hline
		\end{tabular}
	\end{center}
	\caption{A list of abbreviations.}
	\label{abbreviations}
\end{table}
\begin{table}
	\begin{center}
		\begin{tabular}{| l| l| l| l| l|}
			\hline
			Problem & Graph family & $\alpha$ & Proof size & References \\
			\hline
			minimum edge cover & odd-girth $\geq 2\kappa+1$ & $(\kappa+1)/\kappa$&$O
			(\log\kappa)$ & Theorem \ref{EC-odd-girth} \\
			& odd-girth $\geq 2\kappa+1$ & $(\kappa+1)/\kappa$&$\Omega
			(\log\kappa)$ & Theorem \ref{bound_for_apls_ec}\\
			& odd rings & 1 &$O (\log n)$ &Lemma \ref{lemma-for-bound-ec}\\
			& odd rings & 1 &$\Omega (\log n)$ & Theorem \ref{bound_for_pls_ec}\\
			& bipartite & 1 &1 &Theorem \ref{bipartite-ec}\\
			\hline
			maximum $b$-matching & odd-girth $\geq 2\kappa+1$ & $(\kappa+1)/\kappa$&$O(\log\kappa)$ &Theorem \ref{bmatching-odd-girth}\\
			& odd-girth $\geq 2\kappa+1$ & $(\kappa+1)/\kappa$&$\Omega (\log\kappa)$ &Theorem \ref{bmatching-bound}\\
			& bipartite & 1 &1 &Theorem \ref{bipartite-bmatching}\\
			\hline
			min-weight vertex cover & any & 2&$O(\log n+\log W)$&Lemma \ref{lemma-from-adpls-to-apls}\\
			\hline
			min-weight dominating set & any & $\ln(n)+1$&$O(\log n+\log W)$&Lemma \ref{lemma-from-adpls-to-apls}\\
			\hline
			traveling salesperson & metric & 2&$O(\log n+\log W)$&Lemma \ref{lemma-from-adpls-to-apls}\\
			\hline
			minimum Steiner tree& metric & 2&$O(\log n+\log W)$&Lemma \ref{lemma-from-adpls-to-apls}\\
			\hline
			maximum flow& directed & 1 &1&Theorem \ref{flow-pls}\\
			\hline
			max-weight cut& any & 2&1 &Theorem \ref{cut-apls}\\
			\hline
		\end{tabular}
	\end{center}
	\caption{$\alpha$-APLS results and proof sizes.}
	\label{apls-table}
\end{table}

\begin{table}
	\begin{center}
		\begin{tabular}{| l| l| l| l| l|}
			\hline
			Problem & Graph family & $\alpha$ & Proof size & References \\
			\hline
			minimum edge cover & odd-girth $\geq 2\kappa+1$ & $(\kappa+1)/\kappa$&$O(\log n)$ & Lemma \ref{lemma-adpls-primal-dual}\\
			& odd rings & 1 &$O(\log n)$ &Theorem \ref{ec-ring-dpls}\\
			& bipartite & 1 &$O(\log n)$ &Lemma \ref{lemma-adpls-primal-dual}\\
			\hline
			maximum $b$-matching & odd-girth $\geq 2\kappa+1$ & $(\kappa+1)/\kappa$&$O (\log n+\log W)$ &Lemma \ref{lemma-adpls-primal-dual}\\
			& bipartite & 1 &$O (\log n+\log W)$ &Lemma \ref{lemma-adpls-primal-dual}\\
			\hline
			min-weight vertex cover & any & 2&$O(\log n+\log W)$&Theorem \ref{vc-adpls}\\
			\hline
			min-weight dominating set & any & $\ln(n)+1$&$O(\log n+\log W)$&Theorem \ref{ds-adpls}\\
			\hline
			traveling salesperson & metric & 2&$O(\log n+\log W)$&Theorem \ref{tsp-adpls}\\
			\hline
			minimum Steiner tree& metric & 2&$O(\log n+\log W)$&Theorem \ref{steiner-adpls}\\
			\hline
			maximum flow& directed & 1 &$O (\log n+\log W)$&Theorem \ref{flow-dpls}\\
			\hline
			max-weight cut& any & 2&$O (\log n+\log W)$&Theorem \ref{cut-adpls}\\
			\hline
		\end{tabular}
	\end{center}
	\caption{$\alpha$-ADPLS results and proof sizes.}
	\label{adpls-table}
\end{table}

\begin{table}
	\begin{center}
		\begin{tabular}{| l| l| l|}
			\hline
			Problem &  Inapproximability & Inapproximability w.\ UGC \\
			\hline
			min-weight vertex cover& $1.36-\varepsilon$ \cite{Vertex-Cover-P-NP-Dinur05onthe}&$2-\varepsilon$ \cite{Vertex-Cover-UGC}\\
			\hline
			min-weight dominating set& $(1-o(1))\cdot\ln n$ \cite{Set-Cover-Dinur-2013}&\\
			\hline
			metric traveling salesperson& $123/122-\varepsilon$ \cite{TSP-2013}&\\
			\hline
			metric minimum Steiner tree&$96/95-\varepsilon$ \cite{Steiner-CHLEBIK2008207}&\\
			\hline
			max-weight cut& $1.063-\varepsilon$ \cite{Max-Cut-P-NP}&$1.139-\varepsilon$ \cite{Max-Cut-UGC}\\
			\hline
		\end{tabular}
	\end{center}
	\caption{Known inapproximability bounds with and without the unique games conjecture.}
	\label{inapproximiability}
\end{table}

\clearpage
\bibliographystyle{alpha}

\bibliography{references}

\end{document}